\def\eps{\ve}
\renewcommand{\epsilon}{\ve}
\def\ve{\varepsilon}
\newcommand{\E}{\mbox{\bf E}}
\newcommand{\pr}[2][]{\mathrm{Pr}\ifthenelse{\not\equal{}{#1}}{_{#1}}{}\!\left[#2\right]}
\newcommand{\ex}[2][]{\E\ifthenelse{\not\equal{}{#1}}{_{#1}}{}\!\left[#2\right]}
\newcommand{\dtv}{d_{\mathrm {TV}}}
\newcommand{\cW}{\mathcal{W}}
\newcommand{\cH}{\mathcal{H}}
\newcommand{\cX}{\mathcal{X}}
\newcommand{\tv}{\dtv}
\newcommand{\PC}{\text{PC}}
\newcommand{\PHS}{\text{PHS}}
\DeclareMathOperator{\argmax}{argmax}
\providecommand{\poly}{\operatorname*{poly}}
\newtheorem{theorem}{Theorem}
\newtheorem{nontheorem}{Non Theorem}
\newtheorem{proposition}{Proposition}
\newtheorem{observation}{Observation}
\newtheorem{remark}{Remark}
\newtheorem{fact}{Fact}
\newtheorem{lemma}{Lemma}
\newtheorem{claim}{Claim}
\newtheorem{corollary}{Corollary}
\newtheorem{case}{Case}
\newtheorem{dfn}{Definition}
\newtheorem{definition}{Definition}
\newtheorem{question}{Question}
\newtheorem{openquestion}{Open Question}
\newtheorem{res}{Result}
\newcommand{\ignore}[1]{}
\newtheorem{theorem}{Theorem}
\newtheorem{proposition}[theorem]{Proposition}
\newtheorem{remark}[theorem]{Remark}
\newtheorem{lemma}[theorem]{Lemma}
\newtheorem{corollary}[theorem]{Corollary}
\newtheorem{definition}[theorem]{Definition}
\numberwithin{theorem}{section}
\numberwithin{nontheorem}{section}
\numberwithin{proposition}{section}
\numberwithin{observation}{section}
\numberwithin{remark}{section}
\numberwithin{fact}{section}
\numberwithin{lemma}{section}
\numberwithin{claim}{section}
\numberwithin{corollary}{section}
\numberwithin{case}{section}
\numberwithin{dfn}{section}
\numberwithin{definition}{section}
\numberwithin{question}{section}
\numberwithin{openquestion}{section}
\numberwithin{res}{section}
\newcommand{\ignore}[1]{}
\def\nips{0}
\title{Private Hypothesis Selection\thanks{Authors are in alphabetical order. A preliminary version of this paper appeared in Advances in Neural Information Processing Systems 32 (NeurIPS 2019).}}
\title{Differentially Private Hypothesis Selection}
\author {
  Mark Bun\thanks{Simons Institute for the Theory of Computing and Boston University. {\tt mbun@bu.edu}. Supported by a Google Research Fellowship, as part of the Simons-Berkeley Research Fellowship program.}
  \and
  Gautam Kamath\thanks{Simons Institute for the Theory of Computing and University of Waterloo. {\tt g@csail.mit.edu}. Supported as a Microsoft Research Fellow, as part of the Simons-Berkeley Research Fellowship program. Part of this work was completed while visiting Microsoft Research, Redmond.}
  \and
  Thomas Steinke\thanks{IBM Research. \texttt{phs@thomas-steinke.net}. Part of this work completed while visiting the Simons Institute for the Theory of Computing at UC Berkeley. Now at Google.}
  \and
  Zhiwei Steven Wu\thanks{University of Minnesota, Twin Cities and Carnegie Mellon University. \texttt{zstevenwu@cmu.edu}. Part of this work completed while visiting the Simons Institute for the Theory of Computing at UC Berkeley. Supported in part by a Google Faculty Research Award, a J.P. Morgan Faculty Award, and a Facebook Research Award.}
}
\begin{document}
\maketitle

\begin{abstract}
  We provide a differentially private algorithm for hypothesis selection. 
  Given samples from an unknown probability distribution $P$ and a set of $m$ probability distributions $\mathcal{H}$, the goal is to output, in a $\varepsilon$-differentially private manner, a distribution from $\mathcal{H}$ whose total variation distance to $P$ is comparable to that of the best such distribution (which we denote by $\alpha$).
  The sample complexity of our basic algorithm is $O\left(\frac{\log m}{\alpha^2} + \frac{\log m}{\alpha \varepsilon}\right)$, representing a minimal cost for privacy when compared to the non-private algorithm. We also can handle infinite hypothesis classes $\mathcal{H}$ by relaxing to $(\varepsilon,\delta)$-differential privacy.

  We apply our hypothesis selection algorithm to give learning algorithms for a number of natural distribution classes, including Gaussians, product distributions, sums of independent random variables, piecewise polynomials, and mixture classes.
  Our hypothesis selection procedure allows us to generically convert a cover for a class to a learning algorithm, complementing known learning lower bounds which are in terms of the size of the packing number of the class.
  As the covering and packing numbers are often closely  related, for constant $\alpha$, our algorithms achieve the optimal sample complexity for many classes of interest.
  Finally, we describe an application to private distribution-free PAC learning.
\end{abstract}


\section{Introduction}
  We consider the problem of \emph{hypothesis selection}: given samples from an unknown probability distribution, select a distribution from some fixed set of candidates which is ``close'' to the unknown distribution in some appropriate distance measure.
  Such situations can arise naturally in a number of settings.
  For instance, we may have a number of different methods which work under various circumstances, which are not known in advance.
  One option is to run all the methods to generate a set of hypotheses, and pick the best from this set afterwards.
  Relatedly, an algorithm may branch its behavior based on a number of ``guesses,'' which will similarly result in a set of candidates, corresponding to the output at the end of each branch.
  Finally, if we know that the underlying distribution belongs to some (parametric) class, it is possible to essentially enumerate the class (also known as a \emph{cover}) to create a collection of hypotheses.
  Observe that this last example is quite general, and this approach can give generic learning algorithms for many settings of interest.

  This problem of hypothesis selection has been extensively studied (see, e.g.,~\cite{Yatracos85,DevroyeL96,DevroyeL97,DevroyeL01}), resulting in algorithms with a sample complexity which is \emph{logarithmic} in the number of hypotheses.
  Such a mild dependence is critical, as it facilitates sample-efficient algorithms even when the number of candidates may be large.
  These initial works have triggered a great deal of study into hypothesis selection with additional considerations, including computational efficiency, understanding the optimal approximation factor, adversarial robustness, and weakening access to the hypotheses (e.g.,~\cite{MahalanabisS08, DaskalakisDS12b, DaskalakisK14, SureshOAJ14, AcharyaJOS14b, DiakonikolasKKLMS16, AcharyaFJOS18, BousquetKM19}).

  However, in modern settings of data analysis, data may contain sensitive information about individuals.
  Some examples of such data include medical records, GPS location data, or private message transcripts.
  As such, we would like to perform statistical inference in these settings without revealing significant information about any particular individual's data.
  To this end, there have been many proposed notions of data privacy, but perhaps the gold standard is that of \emph{differential privacy}~\cite{DworkMNS06}.
  Informally, differential privacy requires that, if a single datapoint in the dataset is changed, then the distribution over outputs produced by the algorithm should be similar (see Definition~\ref{def:DP}).
  Differential privacy has seen widespread adoption, including deployment by Apple~\cite{AppleDP17}, Google~\cite{ErlingssonPK14}, and the US Census Bureau~\cite{DajaniLSKRMGDGKKLSSVA17}.

  This naturally raises the question of whether one can perform hypothesis selection under the constraint of differential privacy, while maintaining a logarithmic dependence on the size of the cover.
  Such a tool would allow us to generically obtain private learning results for a wide variety of settings.

\subsection{Results}

Our main results answer this in the affirmative: we provide differentially private algorithms for selecting a good hypothesis from a set of distributions.
The output distribution is competitive with the best distribution, and the sample complexity is bounded by the logarithm of the size of the set.
The following is a basic version of our main result.
\begin{restatable}{theorem}{tournament}
\label{thm:tournament}
  Let $\mathcal{H} = \{H_1, \dots, H_m\}$ be a set of probability
  distributions.  Let $D = \{X_1, \dots, X_n\}$ be a set of samples
  drawn independently from an unknown probability distribution $P$.
  There exists an $\ve$-differentially private
  algorithm (with respect to the dataset $D$) which has following
  guarantees.  Suppose there exists a distribution
  $H^* \in \mathcal{H}$ such that $\tv(P, H^*) \leq \alpha$.  If
  $n = \Omega\left(\frac{\log m}{\alpha^2} + \frac{\log m}{\alpha
      \ve}\right)$, then the algorithm will output a distribution
  $\hat H \in \mathcal{H}$ such that $\dtv(P, \hat H) \leq (3+\zeta)\alpha$ with probability at least $9/10$, for any constant $\zeta > 0$.
  The running time of the algorithm is $O(nm^2)$.
\end{restatable}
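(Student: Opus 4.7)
\begin{proofsketch}
The plan is to combine a minimum-distance / Yatracos-style selector with the exponential mechanism. For each ordered pair $i\ne j$ define the Scheff\'e set $S_{ij} = \{x : H_i(x) > H_j(x)\}$, let $K_{ij} = |\{k : X_k \in S_{ij}\}|$ be its empirical count, and set
\[
q_i(D) \;=\; \max_{j\ne i}\,\Bigl|\tfrac{K_{ij}}{n} - H_i(S_{ij})\Bigr|.
\]
The algorithm outputs $H_{\hat i}$, where $\hat i$ is sampled from the exponential mechanism with utility $-q_i$, i.e.\ $\Pr[\hat i = i] \propto \exp\bigl(-\tfrac{\ve n}{2}\,q_i(D)\bigr)$.

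Privacy is immediate: changing one datapoint shifts each $K_{ij}/n$ by at most $1/n$, so every argument of the maximum in $q_i$ moves by at most $1/n$, and hence $q_i$ has global sensitivity $1/n$; the exponential mechanism is then $\ve$-differentially private. The $O(nm^2)$ running time is dominated by computing the $m(m-1)$ counts $K_{ij}$, each of which requires scanning the $n$ samples.

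For utility, Hoeffding's inequality and a union bound over the $\binom{m}{2}$ Scheff\'e sets show that, with $n = \Omega(\log m / \alpha'^2)$ samples and $\alpha' := \zeta\alpha/4$, every frequency $K_{ij}/n$ lies within $\alpha'$ of $P(S_{ij})$ with probability at least $19/20$. Condition on this event. The triangle inequality gives $q_{i^*} \le \alpha + \alpha'$ for the close hypothesis $H_{i^*}$, since $P(S_{i^*j})$ is within $\alpha$ of $H_{i^*}(S_{i^*j})$ for every $j$. Conversely, any $H_i$ with $\dtv(P,H_i) > (3+\zeta)\alpha$ satisfies $q_i \ge (1+\zeta)\alpha - \alpha'$: using the $j=i^*$ term in the maximum defining $q_i$, combine the Scheff\'e identity $H_i(S_{ii^*}) - H_{i^*}(S_{ii^*}) = \dtv(H_i,H_{i^*}) \ge (2+\zeta)\alpha$ with $|P(S_{ii^*}) - H_{i^*}(S_{ii^*})|\le\alpha$ to obtain $H_i(S_{ii^*}) - P(S_{ii^*}) \ge (1+\zeta)\alpha$, then absorb the $\alpha'$ sampling error.

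The standard exponential-mechanism utility guarantee ensures $q_{\hat i} \le q_{i^*} + (2/(n\ve))\log(20m)$ with probability at least $19/20$. Taking $n = \Omega(\log m /(\zeta\alpha\ve))$ makes this additive term smaller than $\zeta\alpha/4$, so $q_{\hat i} \le \alpha + \alpha' + \zeta\alpha/4 < (1+\zeta)\alpha - \alpha'$, which by the contrapositive of the bound above precludes $\hat i$ from being bad. A union bound over the two failure events delivers overall success probability $\ge 9/10$, and since $\zeta$ is a fixed positive constant the two sample-complexity conditions combine to the claimed $n = \Omega(\log m / \alpha^2 + \log m /(\alpha\ve))$. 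The main conceptual obstacle is the joint design of a score with both sensitivity $O(1/n)$---so the exponential mechanism pays only an additive $\log m /(\alpha\ve)$ on top of the non-private $\log m / \alpha^2$---and a $\Theta(\zeta\alpha)$ separation between the best hypothesis and any $(3+\zeta)\alpha$-bad one; taking the max empirical discrepancy over just the $m-1$ Scheff\'e sets $\{S_{ij}\}_{j\ne i}$ achieves both, with Hoeffding controlling $q_{i^*}$ uniformly and each bad $H_i$ exposed by the implicit comparison against the good reference $H_{i^*}$.
\end{proofsketch}
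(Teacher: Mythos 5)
Your proposal is correct, and it follows the same high-level blueprint as the paper: build a per-hypothesis score out of Scheff\'e-set statistics with sensitivity $O(1)$ in counts (equivalently $O(1/n)$ in frequencies), run the exponential mechanism over the $m$ hypotheses, and win via a $\Theta(\zeta\alpha)$ margin between the best hypothesis and any $(3+\zeta)\alpha$-bad one. The concrete score differs, though. You privatize the classical Yatracos/minimum-distance statistic directly, $q_i = \max_{j\ne i}\,|\hat P(S_{ij}) - H_i(S_{ij})|$, and prove a two-sided quantitative gap: $q_{i^*}\le \alpha+\zeta\alpha/4$ for the good hypothesis versus $q_i\ge(1+\zeta)\alpha-\zeta\alpha/4$ for any bad one (via the Scheff\'e identity on $S_{ii^*}$ and $\dtv(P,H^*)\le\alpha$). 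The paper instead privatizes a tournament with draws: its score $S(H_j,D)=\min_k\Gamma_\zeta(H_j,H_k,D)$ is the number of sample points one would have to change before $H_j$ loses some pairwise contest (with the score set to $n$ when the pair is within $(2+\zeta)\alpha$ so that draws never hurt), and the utility argument shows $S(H^*,D)>\zeta\alpha n/4$ while any hypothesis with strictly positive score is $(2+\zeta)\alpha$-close to $H^*$. Your route is arguably a bit cleaner for this finite-$\mathcal H$ statement, since it needs no case analysis for draws and reduces to a direct comparison of two explicit score levels; the paper's margin-to-losing formulation is what it later adapts (with a VC-based uniform convergence bound replacing your Hoeffding-plus-union-bound step, and GAP-MAX replacing the exponential mechanism) to handle infinite hypothesis classes. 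Two trivial bookkeeping points: your union bound should run over the $m(m-1)$ ordered Scheff\'e sets rather than $\binom{m}{2}$ unordered ones (ties prevent $S_{ji}$ from being exactly the complement of $S_{ij}$), and, as the paper also notes, one should either assume the masses $H_i(S_{ij})$ are computable or account for estimating them; neither affects the asymptotics.
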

The sample complexity of this problem without privacy constraints is $O\left(\frac{\log m}{\alpha^2}\right)$, and thus the additional cost for $\varepsilon$-differential privacy is an additive $O\left(\frac{\log m}{\alpha \ve}\right)$.
We consider this cost to be minimal; in particular, the dependence on $m$ is unchanged.
Note that the running time of our algorithm is $O(nm^2)$ -- we conjecture it may be possible to reduce this to $\tilde O(nm)$ as has been done in the non-private setting~\cite{DaskalakisK14,SureshOAJ14,AcharyaJOS14b,AcharyaFJOS18}, though we have not attempted to perform this optimization.
Regardless, our main focus is on the sample complexity rather than the running time, since any method for generic hypothesis selection requires $\Omega(m)$ time, thus precluding efficient algorithms when $m$ is large.
Note that the approximation factor of $(3+\zeta)\alpha$ is effectively tight. 
That is, even in the infinite sample limit and without the constraint of privacy, information theoretically, one can not achieve a better approximation than $3\alpha$~\cite{DevroyeL01, MahalanabisS08}.\footnote{Note that this can be brought down to $2\alpha$ if one instead outputs a \emph{mixture} of $H_i \in \mathcal{H}$~\cite{ BousquetKM19}.}
Theorem~\ref{thm:tournament} requires prior knowledge of the value of $\alpha$, though we can use this to obtain an algorithm with similar guarantees which does not (Theorem~\ref{thm:what-did-i-do-to-deserve-this}).

It is possible to improve the guarantees of this algorithm in two ways (Theorem~\ref{thm:sel-inf}).
First, if the distributions are nicely structured, the former term in the sample complexity can be reduced from $O(\log m/\alpha^2)$ to $O(d/\alpha^2)$, where $d$ is a VC-dimension-based measure of the complexity of the collection of distributions.
Second, if there are few hypotheses which are close to the true distribution, then we can pay only logarithmically in this number, as opposed to the total number of hypotheses.
These modifications allow us to handle instances where $m$ may be very large (or even infinite), albeit at the cost of weakening to approximate differential privacy to perform the second refinement.
A technical discussion of our methods is in Section~\ref{sec:techniques}, our basic approach is covered in Section~\ref{sec:hyp-sel}, and the version with all the bells and whistles appears in Section~\ref{sec:infinite}.

From Theorem~\ref{thm:tournament}, we immediately obtain Corollary~\ref{cor:covertolearn} which applies when $\mathcal{H}$ itself may not be finite, but admits a finite cover with respect to total variation distance.
\begin{restatable}{corollary}{covertolearn}
  \label{cor:covertolearn}
  Suppose there exists an $\alpha$-cover $\mathcal{C}_\alpha$ of a set of distributions $\mathcal{H}$, and that we are given a set of samples $X_1, \dots, X_n \sim P$, where $\dtv(P, \mathcal{H}) \leq \alpha$.
  For any constant $\zeta > 0$, there exists an $\ve$-differentially private algorithm (with respect to the input $\{X_1, \dots, X_n\}$) which outputs a distribution $H^* \in \mathcal{C}_\alpha$ such that $\dtv(P, H^*) \leq (6 + 2\zeta)\alpha$ with probability $\geq 9/10$, as long as
  $$n = \Omega\left(\frac{\log |\mathcal{C}_\alpha|}{\alpha^2} + \frac{\log |\mathcal{C}_\alpha|}{\alpha \varepsilon }\right).$$
\end{restatable}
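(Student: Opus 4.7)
The plan is to reduce Corollary~\ref{cor:covertolearn} directly to Theorem~\ref{thm:tournament} by applying the latter with the cover $\mathcal{C}_\alpha$ playing the role of the finite hypothesis class $\mathcal{H}$. The key preliminary observation is a triangle-inequality argument: since $\dtv(P,\mathcal{H}) \leq \alpha$, there exists some $H \in \mathcal{H}$ with $\dtv(P,H) \leq \alpha$, and since $\mathcal{C}_\alpha$ is an $\alpha$-cover of $\mathcal{H}$, there exists some $C^\star \in \mathcal{C}_\alpha$ with $\dtv(H,C^\star) \leq \alpha$. Combining these yields $\dtv(P,C^\star) \leq 2\alpha$, so the cover contains a distribution within total variation $2\alpha$ of $P$.

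Next I would invoke Theorem~\ref{thm:tournament} on the input samples $X_1,\dots,X_n$ with hypothesis class $\mathcal{C}_\alpha$ and with the approximation parameter set to $\alpha' := 2\alpha$ (keeping the same $\zeta$). The hypothesis of that theorem is met by $C^\star$, and the sample-complexity requirement becomes
\[
n = \Omega\!\left(\frac{\log |\mathcal{C}_\alpha|}{(2\alpha)^2} + \frac{\log |\mathcal{C}_\alpha|}{(2\alpha)\,\varepsilon}\right) = \Omega\!\left(\frac{\log |\mathcal{C}_\alpha|}{\alpha^2} + \frac{\log |\mathcal{C}_\alpha|}{\alpha\,\varepsilon}\right),
\]
exactly as stated in the corollary. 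Theorem~\ref{thm:tournament} then outputs a distribution $\hat H \in \mathcal{C}_\alpha$ with $\dtv(P,\hat H) \leq (3+\zeta)\alpha' = (6+2\zeta)\alpha$ with probability at least $9/10$, which is the desired accuracy bound.

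Finally, since the algorithm supplied by Theorem~\ref{thm:tournament} is $\varepsilon$-differentially private with respect to the dataset $D = \{X_1,\dots,X_n\}$, and the cover $\mathcal{C}_\alpha$ is a data-independent object depending only on $\mathcal{H}$ and $\alpha$, the resulting procedure is itself $\varepsilon$-differentially private with respect to the input samples. There is no real obstacle here beyond correctly propagating the factor of $2$ from the triangle inequality into both the accuracy guarantee and the sample complexity; the fact that $\mathcal{H}$ itself may be uncountable plays no role because privacy, utility, and sample complexity all depend only on the finite set $\mathcal{C}_\alpha$ that the algorithm actually sees.
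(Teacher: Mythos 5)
Your proposal is correct and matches the paper's intended argument: the paper treats the corollary as an immediate consequence of Theorem~\ref{thm:tournament}, noting (in Section~\ref{sec:apps}) that the factor $(6+2\zeta)\alpha$ arises exactly because the closest element of $\mathcal{C}_\alpha$ is only $2\alpha$-close to $P$, so one invokes the theorem with approximation parameter $2\alpha$, which changes the sample complexity only by constant factors. Your handling of the triangle inequality, the privacy argument, and the constant-factor bookkeeping is the same reduction the paper has in mind.
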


Informally, this says that if a hypothesis class has an $\alpha$-cover $\mathcal{C}_\alpha$, then there is a private learning algorithm for the class which requires $O(\log |\mathcal{C}_\alpha|)$ samples.
Note that our algorithm works even if the unknown distribution is only \emph{close} to the hypothesis class.
This is useful when we may have model misspecification, or when we require adversarial robustness. (We also give an extension of this algorithm which gives guarantees in the \emph{semi-agnostic} learning model; see Section~\ref{sec:semi} for details.)
The requirements for this theorem to apply are minimal, and thus it generically provides learning algorithms for a wide variety of hypothesis classes.
That said, in non-private settings, the sample complexity given by this method is rather lossy: as an extreme example, there is no finite-size cover of univariate Gaussian distributions with unbounded parameters, so this approach does not give a finite-sample algorithm.
That said, it is well-known that $O(1/\alpha^2)$ samples suffice to estimate a Gaussian in total variation distance.
In the private setting, our theorem incurs a cost which is somewhat necessary: in particular, it is folklore that any pure $\ve$-differentially private learning algorithm must pay a cost which is logarithmic in the packing number of the class (for completeness, see Lemma~\ref{lem:packing}).
Due to the relationship between packing and covering numbers (Lemma~\ref{lem:pack-n-cover}), this implies that up to a constant factor relaxation in the learning accuracy, our results are tight (Theorem~\ref{thm:packing-tight}).
Further discussion appears in Sections~\ref{sec:packing}.

Given Corollary~\ref{cor:covertolearn}, in Section~\ref{sec:apps}, we derive new learning results for a number of classes.
Our main applications are for $d$-dimensional Gaussian and product distributions.
Informally, we obtain $\tilde O(d)$ sample algorithms for learning a product distribution and a Gaussian with known covariance (Corollaries~\ref{cor:learn-prod} and~\ref{cor:learn-gauss-mean}), and an $\tilde O(d^2)$ algorithm for learning a Gaussian with unknown covariance (Corollary~\ref{cor:learn-gauss-cov}).
These improve on recent results by Kamath, Li, Singhal, and Ullman~\cite{KamathLSU19} in two different ways.
First, as mentioned before, our results are semi-agnostic, so we can handle when the distribution is only \emph{close} to a product or Gaussian distribution.
Second, our results hold for pure $(\ve,0)$-differential privacy, which is a stronger notion than $\ve^2$-zCDP as considered in~\cite{KamathLSU19}.
In this weaker model, they also obtained $\tilde O(d)$ and $\tilde O(d^2)$ sample algorithms, but the natural modifications to achieve $\ve$-DP incur extra $\poly(d)$ factors.\footnote{Roughly, this is due to the fact that the Laplace and Gaussian mechanism are based on $\ell_1$ and $\ell_2$ sensitivity, respectively, and that there is a $\sqrt{d}$-factor relationship between these two norms, in the worst case.}
\cite{KamathLSU19} also showed $\tilde \Omega(d)$ lower bounds for Gaussian and product distribution estimation in the even weaker model of $(\varepsilon, \delta)$-differential privacy. Thus, our results show that the dimension dependence for these problems is unchanged for essentially any notion of differential privacy. 
In particular, our results show a previously-unknown separation between mean estimation of product distributions and non-product distributions under pure $(\varepsilon,0)$-differential privacy; see Remark~\ref{rem:separation}.

We also apply Theorem~\ref{thm:sel-inf} to obtain algorithms for learning Gaussians under $(\ve,\delta)$-differential privacy, with no bounds on the mean and variance parameters.
More specifically, we provide algorithms for learning multivariate Gaussians with unknown mean and known covariance (Corollary~\ref{cor:learn-gauss-mean2}), and univariate Gaussians with both unknown mean and variance (Corollary~\ref{cor:learn-gaussian-univ-infinite}).
For the former problem, we manage to avoid dependences which arise due to the application of advanced composition (similar to Remark~\ref{rem:separation}).

To demonstrate the flexibility of our approach, we also give private learning algorithms for sums of independent random variables (Corollaries~\ref{cor:learn-siirv} and~\ref{cor:learn-pmd}) and piecewise polynomials (Corollary~\ref{cor:learn-pp}).
To the best of our knowledge, the former class of distributions has not been considered in the private setting, and we rely on covering theorems from the non-private literature.
Private learning algorithms for the latter class, piecewise polynomials, have been studied by Diakonikolas, Hardt, and Schmidt~\cite{DiakonikolasHS15}.
They provide sample and time efficient algorithms for histogram distributions (i.e., piecewise constant distributions), and claim similar results for general piecewise polynomials.
Their method depends heavily on rather sophisticated algorithms for the non-private version of this problem~\cite{AcharyaDLS17}.
In constrast, we can obtain comparable sample complexity bounds from just the existence of a cover and elementary VC dimension arguments, which we derive in a fairly self-contained manner.

We additionally give algorithms for learning mixtures of any coverable class (Corollary~\ref{cor:learn-mixture}).
In particular, this immediately implies algorithms for learning mixtures of Gaussians, product distributions, and all other classes mentioned above.

To conclude our applications, we discuss a connection to PAC learning (Corollary~\ref{cor:pac}). It is known that the sample complexity of differentially private distribution-free PAC learning can be higher than that of non-private learning. However, this gap does not exist for distribution-specific learning, where the learning algorithm knows the distribution of (unlabeled) examples, as both sample complexities are characterized by VC dimension. Private hypothesis selection allows us to address an intermediate situation where the distribution of unlabeled examples is not known exactly, but is known to come (approximately) from a class of distributions. When this class has a small cover, we are able to recover sample complexity guarantees for private PAC learning which are comparable to the non-private case.

\subsection{Techniques}
\label{sec:techniques}
Non-privately, most algorithms for hypothesis selection involve a tournament-style approach.
We conduct a number of pairwise comparisons between distributions, which may either have a winner and a loser, or may be declared a draw.
Intuitively, a distribution will be declared the winner of a comparison if it is much closer than the alternative to the unknown distribution, and a tie will be declared if the two distributions are comparably close.
The algorithm will output any distribution which never loses a comparison.
A single comparison between a pair of hypotheses requires $O(1/\alpha^2)$ samples, and a Chernoff plus union bound argument over the $O(m^2)$ possible comparisons increases the sample complexity to $O(\log m/\alpha^2)$.
In fact, we can use uniform convergence arguments to reduce this sample complexity to $O(d/\alpha^2)$, where $d$ is the VC dimension of the $2 \binom{m}{2}$ sets (the ``Scheff\'e'' sets) defined by the subsets of the domain where the PDF of one distribution dominates another.
Crucially, we must reuse the same set of samples for all comparisons to avoid paying polynomially in the number of hypotheses.

A private algorithm for this problem requires additional care.
Since a single comparison is based on the number of samples which fall into a particular subset of the domain, the sensitivity of the underlying statistic is low, and thus privacy may seem easily achievable at first glance.
However, the challenge comes from the fact that the same samples are reused for all pairwise comparisons, thus greatly increasing the sensitivity: changing a single datapoint could flip the result of every comparison!
In order to avoid this pitfall, we instead carefully construct a score function for each hypothesis, namely, the minimum number of points that must be changed to cause the distribution to lose any comparison.
For this to be a useful score function, we must show that the best hypothesis will win all of its comparisons by a large margin.
We can then use the Exponential Mechanism~\cite{McSherryT07} to select a distribution with high score.

Further improvements can be made if we are guaranteed that the number of ``good'' hypotheses (i.e., those that have total variation distance from the true distribution bounded by $(3+\zeta)\alpha$) is at most some parameter $k$, and if we are willing to relax to approximate differential privacy. The parameter $k$ here is related to the doubling dimension of the hypothesis class with respect to total variation distance.
If we randomly assign the hypotheses to $\Omega(k^2)$ buckets, with high probability, no bucket will contain more than one good hypothesis.
We can identify a bucket containing a good hypothesis using a similar method based on the exponential mechanism as described above.
Moreover, since we are likely to only have one ``good'' hypothesis in the chosen bucket, this implies a significant gap between the best and second-best scores in that bucket. This allows us to use stability-based techniques~\cite{DworkL09, ThakurtaS13}, and in particular the GAP-MAX algorithm of Bun, Dwork, Rothblum, and Steinke~\cite{BunDRS18}, to identify an accurate distribution.

\subsection{Related Work}
Our main result builds on a long line of work on non-private hypothesis selection.
One starting point for the particular style of approach we consider here is~\cite{Yatracos85}, which was expanded on in~\cite{DevroyeL96,DevroyeL97,DevroyeL01}.
Since then, there has been study into hypothesis selection under additional considerations, including computational efficiency, understanding the optimal approximation factor, adversarial robustness, and weakening access to the hypotheses~\cite{MahalanabisS08, DaskalakisDS12b, DaskalakisK14, SureshOAJ14, AcharyaJOS14b, DiakonikolasKKLMS16, AcharyaFJOS18, BousquetKM19}.
Our private algorithm examines the same type of problem, with the additional constraint of differential privacy.

Perhaps the most closely related work is that of Canonne, Kamath, McMillan, Smith, and Ullman~\cite{CanonneKMSU19}, which focuses on the case of private simple hypothesis testing.
This is a more restricted setting than we consider in this paper, as it focuses on the case where we are trying to decide between $m = 2$ hypotheses, and we are guaranteed that the unknown distribution is one of these two hypotheses.
However, in this setting, they are able to get an instance-by-instance characterization of the sample complexity, depending on both the total variation and Hellinger distance between the two distributions.

There has recently been a great deal of interest in differentially private distribution learning.
In the central model, most relevant are~\cite{DiakonikolasHS15}, which gives algorithms for learning structured univariate distributions, and~\cite{KarwaV18,KamathLSU19}, which focus on learning Gaussians and binary product distributions.
\cite{CaiWZ19} also studies private statistical parameter estimation.
Privately learning mixtures of Gaussians was considered in~\cite{NissimRS07,KamathSSU19}.
The latter paper (which is concurrent with the present work) gives a computationally efficient algorithm for the problem, but with a worse sample complexity, and incomparable accuracy guarantees (they require a separation condition, and perform clustering and parameter estimation, while we do proper learning).
\cite{BunNSV15} give an algorithm for learning distributions in Kolmogorov distance.
Upper and lower bounds for learning the mean of a product distribution over the hypercube in $\ell_\infty$-distance include~\cite{BlumDMN05, BunUV14, DworkMNS06, SteinkeU17a}.
\cite{AcharyaKSZ18} focuses on estimating properties of a distribution, rather than the distribution itself.
\cite{Smith11} gives an algorithm which allows one to estimate asymptotically normal statistics with optimal convergence rates, but no finite sample complexity guarantees.
There has also been a great deal of work on distribution learning in the local model of differential privacy~\cite{DuchiJW13,WangHWNXYLQ16,KairouzBR16,AcharyaSZ19, DuchiR18,JosephKMW19,YeB18,GaboardiRS19}.
For further coverage of differentially private statistics, see~\cite{KamathU20}.

Non-privately, there has been a significant amount of work on learning specific classes of distributions.
The PAC-style formulation of the problem we consider originated in~\cite{KearnsMRRSS94}.
While learning Gaussians and product distributions can be considered folklore at this point, some of the other classes we learn have enjoyed more recent study.
For instance, learning sums of independent random variables was recently considered in~\cite{DaskalakisDS12b} toward the problem of learning Poisson Binomial Distributions (PBDs).
Since then, there has been additional work on learning PBDs and various generalizations~\cite{DaskalakisKT15, DaskalakisDKT16, DiakonikolasKS16a, DiakonikolasKS16b, DiakonikolasKS16c,DeLS18}.

Piecewise polynomials are a highly-expressive class of distributions, and they can be used to approximate a number of other univariate distribution classes, including distributions which are multi-modal, concave, convex, log-concave, monotone hazard rate, Gaussian, Poisson, Binomial, and more.
Algorithms for learning such classes are considered in a number of papers, including~\cite{DaskalakisDS12a,ChanDSS14a,ChanDSS14b,AcharyaDK15, AcharyaDLS17}.

There has also been a great deal of work on learning mixtures of distribution classes, particularly mixtures of Gaussians.
There are many ways the objective of such a problem can be defined, including clustering~\cite{Dasgupta99, DasguptaS00, AroraK01, VempalaW02, AchlioptasM05, ChaudhuriR08a, ChaudhuriR08b, KumarK10, AwasthiS12, RegevV17, HopkinsL18, DiakonikolasKS18b, KothariSS18}, parameter estimation~\cite{KalaiMV10, MoitraV10, BelkinS10,HsuK13,AndersonBGRV14,BhaskaraCMV14, HardtP15,GeHK15, XuHM16, DaskalakisTZ17,AshtianiBHLMP18}, proper learning~\cite{FeldmanOS06,FeldmanOS08, DaskalakisK14, SureshOAJ14, DiakonikolasKKLMS16, LiS17}, and improper learning~\cite{ChanDSS14a}.
Our work falls into the line on proper learning: the algorithm is given a set of samples from a mixture of Gaussians, and must output a mixture of Gaussians which is close in total variation distance.

\subsubsection{Subsequent Work}
Since the initial appearance of this work, there have been several relevant results in the surrounding area.
Most pertinent is the work of Aden-Ali, Ashtiani, and Kamath~\cite{AdenAliAK21}, which improves our main hypothesis selection result by improving the constant approximation factor, making the algorithm agnostic, and having a simpler analysis.
The authors also argue that, given a small cover for each distribution in the space, there exists a cover which is locally-small everywhere as required by Theorem~\ref{thm:sel-inf}, thus allowing them to learn unbounded Gaussians with arbitrary covariances.

Our results and techniques have seen use for other problems as well.
Kamath, Singhal, and Ullman~\cite{KamathSU20} use a similar approach based on pairwise comparisons in order to perform mean estimation, rather than our work which focuses on density estimation.
Liu et al.~\cite{LiuSYKR20} use our main algorithm to estimate discrete distributions in a type of federated learning setting.

Hypothesis selection has since been studied in other privacy models.
Informally speaking, our work shows that the sample complexity of hypothesis selection under central differential privacy is $O(\log m)$.
Gopi et al.~\cite{GopiKKNWZ20} study the problem under the stronger notion of local differential privacy~\cite{Warner65,EvfimievskiGS03,KasiviswanathanLNRS11}, showing that the sample complexity is $\tilde \Theta(m)$, an exponential increase in cost.
Under various notions of pan-privacy~\cite{DworkNPRY10} and the shuffled model~\cite{CheuSUZZ19, ErlingssonFMRTT19}, which are intermediate to local and central differential privacy, the sample complexity of $m$-wise simple hypothesis testing (the easier version of hypothesis selection, where the unknown distribution is equal to one of the given distributions) was shown to be $\tilde \Theta(\sqrt{m})$~\cite{CheuU20}.

\subsection{Organization}
We begin in Section~\ref{sec:preliminaries} with preliminaries.
In Section~\ref{sec:hyp-sel}, we give a basic algorithm for private hypothesis selection, via the exponential mechanism.
In Section~\ref{sec:infinite}, we extend this approach in two ways: by using VC dimension arguments to reduce the sample complexity for sets of hypotheses with additional structure, and combining this with a GAP-MAX algorithm to achieve non-trivial guarantees for infinite hypothesis classes.
Section~\ref{sec:packing} shows that our approach leads to algorithms which essentially match lower bounds for most distribution classes (in the constant $\alpha$ regime).
We consider applications in Section~\ref{sec:apps}: through a combination of arguments about covers and VC dimension, we derive algorithms for learning a number of classes of distributions, as well as describe an application to private PAC learning.
Finally, we conclude in Section~\ref{sec:conclusions} with open questions.


\section{Preliminaries}
\label{sec:preliminaries}
We start with some preliminaries and definitions.

\begin{definition}
  The \emph{total variation distance} or \emph{statistical distance} between $P$ and $Q$ is defined as
  $$\dtv(P,Q) = \max_{S \subseteq \Omega} P(S) - Q(S) = \frac{1}{2} \int_{x \in \Omega} |P(x) - Q(x)| dx = \frac{1}{2}\|P - Q\|_1 \in [0,1]. $$
  Moreover, if $\mathcal{H}$ is a set of distributions over a common domain, we define $\dtv(P, \mathcal{H}) = \inf_{H\in \mathcal{H}} \dtv(P, H)$.
\end{definition}

Throughout this paper, we consider packings and coverings of sets of distributions with respect to total variation distance.
\begin{definition}
  A \emph{$\gamma$-cover} of a set of distributions $\mathcal{H}$ is a set of distributions $\mathcal{C}_\gamma$, such that for every $H \in \mathcal{H}$, there exists some $P \in \mathcal{C}_\gamma$ such that $\dtv(P,H) \leq \gamma$.
  
  A \emph{$\gamma$-packing} of a set of distributions $\mathcal{H}$ is a set of distributions $\mathcal{P}_\gamma \subseteq \mathcal{H}$, such that for every pair of distributions $P, Q \in \mathcal{P}_\gamma$, we have that $\dtv(P,Q) > \gamma$.
\end{definition}

In this paper, we present semi-agnostic learning algorithms.
\begin{definition}
  An algorithm is said to be an $\alpha$-\emph{semi-agnostic learner} for a class $\mathcal{H}$ if it has the following guarantees.
  Suppose we are given $X_1, \dots, X_n \sim P$, where $\dtv(P,\mathcal{H}) \leq \mathrm{OPT}$.
  The algorithm must output some distribution $\hat H$ such that $\dtv(P,H) \leq c \cdot \mathrm{OPT} + O(\alpha)$, for some constant $c \geq 1$.
  If $c = 1$, then the algorithm is said to be agnostic.
\end{definition}

Now we define differential privacy. We say that $D$ and $D'$ are
neighboring datasets, denoted $D \sim D'$, if $D$ and $D'$ differ by
at most one observation. Informally, differential privacy requires
that the algorithm has close output distributions when run
on any pair of neighboring datasets. More formally:

\begin{definition}[\cite{DworkMNS06}]
  \label{def:DP}
  A randomized algorithm $T : X^* \rightarrow \mathcal{R}$ is
  \emph{$(\eps, \delta)$-differentially private} if for all $n\geq 1$,
  for all neighboring datasets $D,D'\in X^n$, and for all events
  $S\subseteq \mathcal{R}$,
  $$\Pr\left[T(D)\in S\right]\leq e^\eps \Pr[T(D')\in S] + \delta\,. $$
  If $\delta = 0$, we say that $T$ is $\eps$-differentially private.
\end{definition}

We will also use the related notion of concentrated differential privacy:
\begin{definition}[\cite{DworkR16,BunS16}]
  A randomized algorithm $T : X^* \rightarrow \mathcal{R}$ satisfies
  \emph{$\rho$-zero-concentrated differential privacy} if for all $n\geq 1$,
  for all neighboring datasets $D,D'\in X^n$, and for all $\alpha \in (1, \infty)$,
  $$R_\alpha(M(D)||M(D')) \leq \rho\alpha,$$
  where $R_\alpha(M(D)||M(D'))$ is the $\alpha$-R\'enyi divergence between $M(D)$ and $M(D')$.\footnote{Given two probability distributions $P, Q$ over $\Omega$, $R_\alpha(P||Q) = \frac{1}{\alpha -1}\log\left(\sum_{x \in \Omega} P(x)^\alpha Q(x)^{1 - \alpha}\right)$.}
\end{definition}

The exponential mechanism~\cite{McSherryT07} is a powerful $\ve$-differentially private
mechanism for selecting an approximately best outcome from a set of
alternatives, where the quality of an outcome is measured by a score
function relating each alternative to the underlying dataset. Letting
$\mathcal{R}$ be the set of possible outcomes, a score function
$q : X^* \times \mathcal{R} \rightarrow \mathbb{R}$ maps each pair consisting of a dataset
and an outcome to a real-valued score.  The exponential mechanism
$\mathcal{M}_E$ instantiated with a dataset $D$, a score function
$q$, and a privacy parameter $\eps$ selects an outcome $r$ in
$\mathcal{R}$ with probability proportional to
$\exp\left(\eps q(D, r) / (2\Delta(q))\right)$, where $\Delta(q)$ is
the sensitivity of the score function defined as
\[
  \Delta(q) = \max_{r\in \mathcal{R}, D\sim D'} \left|q(D, r) - q(D',
    r) \right|.
\]

\begin{theorem}[\cite{McSherryT07}]\label{exp}
  For any input dataset $D$, score function $q$ and privacy parameter
  $\eps >0$, the exponential mechanism $\mathcal{M}_E(D, q, \eps)$ is
  $\eps$-differentially private, and with probability at least
  $1 - \beta$, selects an outcome $r\in \mathcal{R}$ such that
  \[
    q(D, r) \geq \max_{r'\in \mathcal{R}} q(D, r') - \frac{2\Delta(q)
      \log(|\mathcal{R}|/\beta)}{\eps}.
  \]

\end{theorem}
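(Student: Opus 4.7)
The plan is to verify the two claims of the theorem (privacy and utility) separately, using the standard pointwise analysis of the exponential mechanism. Throughout, write $w(r,D) = \exp(\eps q(D,r)/(2\Delta(q)))$ and $W(D) = \sum_{r \in \mathcal{R}} w(r,D)$, so that $\Pr[\mathcal{M}_E(D,q,\eps) = r] = w(r,D)/W(D)$.

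For privacy, I would fix neighboring datasets $D \sim D'$ and an outcome $r$, and bound the ratio $\Pr[\mathcal{M}_E(D) = r]/\Pr[\mathcal{M}_E(D') = r]$ by factoring it into two pieces: the ratio $w(r,D)/w(r,D')$ of numerators, and the ratio $W(D')/W(D)$ of partition functions. The sensitivity bound $|q(D,r) - q(D',r)| \le \Delta(q)$ gives $w(r,D)/w(r,D') \in [e^{-\eps/2}, e^{\eps/2}]$ pointwise in $r$, and since the same bound holds termwise inside the sum defining $W$, the ratio of partition functions also lies in $[e^{-\eps/2}, e^{\eps/2}]$. Multiplying the two bounds gives the required $e^\eps$ factor, establishing $\eps$-differential privacy (with $\delta = 0$).

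For utility, let $q^* = \max_{r' \in \mathcal{R}} q(D,r')$ and let $t = 2\Delta(q)\log(|\mathcal{R}|/\beta)/\eps$. For any $r$ with $q(D,r) < q^* - t$, the numerator $w(r,D) \le \exp(\eps(q^* - t)/(2\Delta(q)))$, while the denominator $W(D)$ is at least the single term corresponding to a maximizer $r^*$, namely $\exp(\eps q^*/(2\Delta(q)))$. Dividing gives $\Pr[\mathcal{M}_E(D) = r] \le \exp(-\eps t/(2\Delta(q))) = \beta/|\mathcal{R}|$. A union bound over the at most $|\mathcal{R}|$ bad outcomes then yields the claimed failure probability $\beta$.

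Neither step presents a real obstacle; this is essentially a direct calculation. The only subtlety worth stating carefully is that the partition function ratio bound in the privacy argument requires the termwise inequality to hold uniformly across the sum, which is why one uses the \emph{sensitivity} $\Delta(q)$ (a worst-case bound over $r$ as well as $D \sim D'$) in the definition of the mechanism. I would present the argument in this order: define the mechanism explicitly, prove privacy via the partition-function factorization, then prove utility via the pointwise bound and a union bound.
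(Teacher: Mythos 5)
Your proof is correct: the privacy argument via the numerator/partition-function factorization and the utility argument via the pointwise bound plus a union bound over the at most $|\mathcal{R}|$ outcomes with score below $\max_{r'} q(D,r') - \frac{2\Delta(q)\log(|\mathcal{R}|/\beta)}{\eps}$ are exactly the standard McSherry--Talwar analysis. The paper does not reprove this statement---it is imported directly from \cite{McSherryT07}---so there is nothing to compare beyond noting that your argument is the canonical one and is complete.
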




\section{A First Method for Private Hypothesis Selection}
\label{sec:hyp-sel}

In this section, we present our first algorithm for private hypothesis
selection and obtain the following result.

\tournament*

Note that the sample complexity bound above scales logarithmically
with the size of the hypothesis class. 
In Section~\ref{sec:infinite},
we will provide a stronger result (which subsumes the present one as a special case) that can handle certain infinite
hypothesis classes.
For sake of exposition, we begin in this section with the basic algorithm.

\subsection{Pairwise Comparisons}

We first present a subroutine which compares two hypothesis
distributions. 
This subroutine is due to Daskalakis, Diakonikolas, and Servedio~\cite{DaskalakisDS12b}, and is essentially a modification of previous methods (e.g.,~\cite{DevroyeL01}) to allow for draws.
Let $H$ and $H'$ be two distributions over domain $\cX$ and consider
the following set, which is called the \emph{Scheff\'e set}:
\[
  \cW_1 = \{x\in \cX \mid H(x) > H'(x)\}
\]
Define $p_1 = H(\cW_1)$, $p_2 = H'(\cW_1)$, and $\tau = P(\cW_1)$ to be the probability masses that $H$, $H'$, and $P$ place on $\cW_1$, respectively. It follows that
$p_1 > p_2$ and $p_1 - p_2 = \tv(H, H')$.\footnote{For simplicity
  of our exposition, we will assume that we can evaluate the two
  quantities $p_1$ and $p_2$ exactly. In general, we can estimate
  these quantities to arbitrary accuracy, as long as, for each hypothesis $H$, we can evaluate
  the density of each point under $H$ and also draw samples from
  $H$.}

\begin{algorithm}[H]
  \SetAlgoLined
  \textbf{Input}: Two hypotheses $H$ and $H'$, input dataset $D$ of size $n$ drawn i.i.d.\ from target distribution $P$, approximation parameter $\zeta > 0$, and accuracy parameter $\alpha \in (0, 1)$.\\
  \textbf{Initialize}: Compute the fraction of points that fall into
  $\cW_1$:
  $\hat \tau = \frac{1}{n} \left|\{ x\in D \mid x \in \cW_1 \}\right|$.
  \\
  \textbf{If} $p_1 - p_2 \leq (2+\zeta) \alpha$, return ``Draw''.\\
  \textbf{Else If} $\hat \tau > p_1 - (1+\zeta/2)\alpha$, return $H$ as the winner.\\
  \textbf{Else If} $\hat \tau < p_2 + (1+\zeta/2)\alpha$, return $H'$ as the winner. \\
  \textbf{Else} return ``Draw''.
  \caption{\textsc{Pairwise Contest}: $\PC(H, H', D, \zeta, \alpha)$}
  \label{alg:friday}
\end{algorithm}

\noindent Now consider the following function of this ordered pair of
hypotheses:
\begin{equation*}
  \Gamma_{\zeta}(H, H', D) = \begin{cases}
    n \qquad &\text{if } p_1 - p_2 \leq (2+\zeta)\alpha;\\
    n \cdot \, \max\{0, \hat\tau - (p_2 + (1 + \zeta/2)\alpha)\}  \qquad&\text{otherwise.}
\end{cases}
\end{equation*}

When the two hypotheses are sufficiently far apart (i.e., 
$\tv(H, H') > (2+\zeta)\alpha$), $\Gamma_{\zeta}(H, H', D)$ is essentially the
number of points one needs to change in $D$ to make $H'$ the
winner.

\begin{lemma}\label{friday}
  Let $P,H,H'$ be distributions as above. With probability at least
  $1 - 2\exp(- n \zeta^2\alpha^2/8)$ over the random draws of $D$ from
  $P^n$, $\hat \tau$ satisfies $|\hat \tau - \tau| < \zeta\alpha/4$,
  and if $\tv(P, H) \leq \alpha$, then
  $\Gamma_{\zeta}(H, H', D) > \zeta\alpha n/4$.
\end{lemma}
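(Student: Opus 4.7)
The plan is to prove the two conclusions in order, using a single concentration event. The first conclusion is a standard concentration statement: the count $n\hat\tau = |\{x \in D : x \in \cW_1\}|$ is a sum of $n$ i.i.d.\ Bernoulli random variables with mean $\tau = P(\cW_1)$, so Hoeffding's inequality gives
\[
\pr{|\hat\tau - \tau| \geq \zeta\alpha/4} \;\leq\; 2\exp\!\bigl(-2n(\zeta\alpha/4)^2\bigr) \;=\; 2\exp\!\bigl(-n\zeta^2\alpha^2/8\bigr),
\]
matching the failure probability claimed. I will then condition on the complementary event $|\hat\tau - \tau| < \zeta\alpha/4$ for the rest of the argument.

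Next, under the assumption $\tv(P,H) \leq \alpha$ and the definition of total variation distance, I have $|\tau - p_1| = |P(\cW_1) - H(\cW_1)| \leq \alpha$, in particular $\tau \geq p_1 - \alpha$. Combined with the concentration event, this yields $\hat\tau > p_1 - \alpha - \zeta\alpha/4$. The goal is to lower bound $\Gamma_\zeta(H,H',D)$ by $\zeta\alpha n/4$. I split into the two cases that define $\Gamma_\zeta$. In the first case $p_1 - p_2 \leq (2+\zeta)\alpha$, we have $\Gamma_\zeta = n$, which trivially exceeds $\zeta\alpha n/4$ whenever $\zeta\alpha < 4$ (certainly true for the regime of interest, $\alpha \leq 1$ and $\zeta \leq 1$).

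In the remaining case $p_1 - p_2 > (2+\zeta)\alpha$, I need to show $\hat\tau - (p_2 + (1+\zeta/2)\alpha) > \zeta\alpha/4$. Plugging in the bound $\hat\tau > p_1 - \alpha - \zeta\alpha/4$, this reduces to verifying
\[
p_1 - \alpha - \zeta\alpha/4 - p_2 - (1+\zeta/2)\alpha \;>\; \zeta\alpha/4,
\]
i.e.\ $p_1 - p_2 > (2+\zeta)\alpha$, which is exactly the case hypothesis. Hence $\Gamma_\zeta(H,H',D) = n\cdot(\hat\tau - p_2 - (1+\zeta/2)\alpha) > \zeta\alpha n/4$, completing the proof.

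The argument is essentially a carefully chosen chain of triangle-type inequalities bookended by a Hoeffding bound; there is no real obstacle. The only step requiring any care is the bookkeeping of constants in the second case: the slack $\zeta\alpha/4$ in the concentration, the slack $\alpha$ from $\tv(P,H) \leq \alpha$, and the $(1+\zeta/2)\alpha$ threshold inside the definition of $\Gamma_\zeta$ must add up so that the case hypothesis $p_1 - p_2 > (2+\zeta)\alpha$ is exactly what is needed; the $1 + \zeta/2$ and $\zeta/4$ choices in the algorithm are calibrated precisely for this to work out.
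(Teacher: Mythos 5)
Your proof is correct and follows essentially the same route as the paper's: a Hoeffding bound with deviation $\zeta\alpha/4$, conditioning on that event, and then a case split on whether $p_1 - p_2 \leq (2+\zeta)\alpha$, with the second case reducing (after using $|\tau - p_1| \le \alpha$ from $\tv(P,H)\le\alpha$) to exactly the case hypothesis. The only difference is cosmetic bookkeeping: the paper passes through the two-sided bound $|p_1 - \hat\tau| < (1+\zeta/4)\alpha$ to get $\hat\tau > p_2 + (1+3\zeta/4)\alpha$, while you plug the one-sided bound in directly — the arithmetic is identical.
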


\begin{proof}
  By applying Hoeffding's inequality, we know that with
  probability at least $1 - 2\exp(- n\zeta^2\alpha^2/8)$,
  $|\tau - \hat \tau| < \zeta\alpha/4$. We condition on this event for
  the remainder of the proof. Consider the following two
  cases. In the first case, suppose that $p_1 - p_2 \leq (2+\zeta)\alpha$.
  Then we know that $\Gamma_{\zeta}(H, H', D) = n > \alpha n$. In the
  second case, suppose that $p_1 - p_2 > (2+\zeta)\alpha$. Since
  $\tv(P, H) \leq \alpha$, we know that $|p_1 - \tau| \leq \alpha$,
  and so $|p_1 - \hat \tau| < (1+\zeta/4)\alpha$. Since $p_1 > p_2 + (2+\zeta)\alpha$,
  we also have $\hat \tau > p_2 + (1 + 3\zeta/4)\alpha$. It follows that
  $\Gamma_{\zeta}(H, H', D) = n (\hat \tau - (p_2+ (1+\zeta/2)\alpha)) > \zeta\alpha
  n/4$.\end{proof}

\subsection{Na\"ive Approach via Laplace Mechanism}
We first sketch a na\"ive approach for private hypothesis selection, based on the primitive in Algorithm~\ref{alg:friday}.
This is a privatization of a similar approach which appeared in~\cite{DaskalakisK14}, though the idea behind the approach is older, e.g.,~\cite{DevroyeL01} -- the~\cite{DaskalakisK14} approach differs slightly since it employs a comparison procedure which allows ties, as we do.
Later, Lemma~\ref{lem:terrible} describes the approach and privatization of~\cite{DevroyeL01} in more detail, which are morally equivalent to what we discuss here.

A non-private algorithm for selection from $m$ hypotheses would run Algorithm~\ref{alg:friday} on each pair of hypotheses, either outputting a winner between the two distributions, or declaring a tie in the case when the total variation distance between the two distributions is small.
The algorithm would output any distribution which never loses a comparison.
Correctness of this algorithm relies on the empirical masses in all $O(m^2)$ Scheff\'e sets being estimated up to an additive $O(\alpha)$, which, by Hoeffding's inequality, happens with constant probability when $n \geq O\left(\frac{\log m}{\alpha^2}\right)$.
Crucially, we reuse the same set of samples for all comparisons.
With this in hand, it is not hard to show that a distribution $H$ which is $\alpha$-close to $P$ will never lose a comparison, and any distribution $H'$ which is $c\alpha$-far from $P$ (for an appropriately chosen constant $c > 1$) will lose its comparison with $H$, thus ensuring that the winning distribution will be $c\alpha$-close to $P$.

Now, we consider how to privatize this algorithm.
Each of the $O(m^2)$ comparisons is based on the quantity $\hat \tau = \frac{1}{n} \left|\{ x\in D \mid x \in \cW \}\right|$, where $\cW$ is the Scheff\'e set between the two distributions $H$ and $H'$.
To make a single comparison $\ve$-differentially private, we would have to add Laplace noise of order $O\left(\frac{1}{\ve n}\right)$ to this quantity.
However, since we reuse the same set of samples for all comparisons, in order to make the result of all $O(m^2)$ comparisons $\ve$-differentially private, the basic composition property of differential privacy would prescribe adding Laplace noise of order $O\left(\frac{m^2}{\ve n}\right)$ to the quantity used in each comparison.
To bound the noise error of all comparisons simultaneously by $O(\alpha)$, we thus require $n \geq O\left(\frac{m^2 \log m}{\alpha\ve}\right)$, and the rest of the analysis is then identical to before.

A formalization of this argument allows us to arrive at the following theorem.
The accuracy bound is of the appropriate form, but the cost of privacy is an exponential increase in the sample complexity.

\begin{theorem}
\label{thm:bad-tournament}
  Let $\mathcal{H} = \{H_1, \dots, H_m\}$ be a set of probability
  distributions.  Let $D = \{X_1, \dots, X_n\}$ be a set of samples
  drawn independently from an unknown probability distribution $P$.
  There exists an $\ve$-differentially private
  algorithm (with respect to the dataset $D$) which has following
  guarantees.  Suppose there exists a distribution
  $H^* \in \mathcal{H}$ such that $\tv(P, H^*) \leq \alpha$.  If
  $n = \Omega\left(\frac{\log m}{\alpha^2} + \frac{m^2 \log m}{\alpha
      \ve}\right)$, then the algorithm will output a distribution
  $\hat H \in \mathcal{H}$ such that $\dtv(P, \hat H) \leq O(\alpha)$ with probability at least $9/10$.
  The running time of the algorithm is $O(nm^2)$.
\end{theorem}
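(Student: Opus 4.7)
The plan is to formalize the noisy tournament sketched in the paragraphs preceding the theorem. First I would instantiate, for every ordered pair $(H_i, H_j)$ with $i \neq j$, the Scheff\'e set $\cW_{ij} = \{x : H_i(x) > H_j(x)\}$, along with the population masses $p_{ij} = H_i(\cW_{ij})$, $q_{ij} = H_j(\cW_{ij})$, and the empirical mass $\hat\tau_{ij} = \tfrac{1}{n}|\{x \in D : x \in \cW_{ij}\}|$. The private algorithm computes, independently for every ordered pair, a noisy statistic $\tilde\tau_{ij} = \hat\tau_{ij} + Z_{ij}$ where $Z_{ij} \sim \mathrm{Lap}(\Delta/\ve)$ with $\Delta = m(m-1)/n$, and then runs Algorithm~\ref{alg:friday} with $\tilde\tau_{ij}$ substituted in place of $\hat\tau_{ij}$. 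The output is any hypothesis that does not lose any comparison (breaking ties arbitrarily); if no such hypothesis exists, output an arbitrary one.

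For privacy, I would argue that each $\hat\tau_{ij}$ has sensitivity $1/n$ as a function of $D$, so the vector of all $m(m-1)$ empirical statistics has $\ell_1$-sensitivity $m(m-1)/n$. Releasing this entire vector via the Laplace mechanism with scale $\Delta/\ve$ is $\ve$-differentially private, and the tournament decisions are a post-processing of this release. For accuracy, I would show that with probability at least $9/10$ two good events hold simultaneously: (i) $|\hat\tau_{ij} - \tau_{ij}| \le \zeta\alpha/8$ for all pairs, which by Hoeffding and a union bound over $O(m^2)$ pairs holds once $n = \Omega(\log m/(\zeta\alpha)^2)$; and (ii) $|Z_{ij}| \le \zeta\alpha/8$ for all pairs, which by a tail bound on Laplace variables and a union bound over $O(m^2)$ pairs holds once $\Delta \log m /\ve \le \zeta\alpha/8$, i.e., $n = \Omega(m^2 \log m/(\zeta\alpha\ve))$. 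Together $|\tilde\tau_{ij}-\tau_{ij}| \le \zeta\alpha/4$ for every pair.

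Conditioned on these events, the analysis of Lemma~\ref{friday} goes through essentially unchanged with $\tilde\tau_{ij}$ in place of $\hat\tau_{ij}$: in any comparison involving $H^*$, the quantity $\Gamma_\zeta$ remains strictly positive, so $H^*$ never loses and the set of non-losing hypotheses is non-empty. Conversely, any hypothesis $\hat H$ with $\dtv(P,\hat H) > (3+\zeta)\alpha$ would, in its comparison against $H^*$, satisfy $p_{H^*,\hat H} - q_{H^*,\hat H} = \dtv(H^*,\hat H) \ge \dtv(P,\hat H) - \alpha > (2+\zeta)\alpha$ by the triangle inequality, and the triangle-inequality computation in the second case of Lemma~\ref{friday} applied to $H^*$ forces $H^*$ to beat $\hat H$, contradicting $\hat H$ being a non-loser. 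Hence any output hypothesis satisfies $\dtv(P,\hat H) \le (3+\zeta)\alpha = O(\alpha)$.

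The main technical obstacle, modest as it is, is verifying that replacing the empirical masses by their noised versions does not break the case analysis inside the pairwise contest — specifically, that the slack of $\zeta\alpha/2$ in the thresholds of Algorithm~\ref{alg:friday} can absorb both the sampling error and the Laplace error simultaneously. This is what dictates the choice of the two additive bounds $\zeta\alpha/8$ above, and it is the only place where the $m^2$ factor in the sample complexity enters, arising directly from naive composition over the $O(m^2)$ reused statistics.
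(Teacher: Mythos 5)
Your proposal is correct and follows essentially the same route as the paper, which only sketches this result: privatize the $O(m^2)$ reused Scheff\'e statistics via Laplace noise of scale $O(m^2/\ve n)$ (your vector-valued $\ell_1$-sensitivity release is just basic composition in disguise), union-bound the sampling and noise errors, and rerun the draw-allowing tournament analysis so that $H^*$ never loses while any far hypothesis loses to $H^*$. The accounting that yields the two terms $\log m/\alpha^2$ and $m^2\log m/(\alpha\ve)$ matches the paper's intended argument.
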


\subsection{Selection via Exponential Mechanism}
\label{sec:exp-mech}
In light of the definition of the pairwise comparison defined above, we
consider the following score function
$S \colon \mathcal{H} \times \mathcal{X}^n$, such that for any
$H_j\in \cH$ and dataset $D$,
\begin{equation}
  S(H_j, D) = \min_{H_k\in \cH} \Gamma_{\zeta}(H_j, H_k, D). \label{score}
\end{equation}
Roughly speaking, $S(H_j, D)$ is the minimum number of points required
to change in $D$ in order for $H_j$ to lose at least one pairwise
contest against a different hypothesis. When the hypothesis $H_j$
is very close to every other distribution, such that all pairwise
contests return ``Draw,'' then the score will be $n$.

\begin{algorithm}[H] \label{alg:phs}
  \SetAlgoLined \textbf{Input}: Dataset $D$, a collection of
  hypotheses
  $\cH =\{H_1, \ldots, H_m\}$, privacy parameter $\epsilon$.\\
  Output a random hypothesis $\hat H \in \cH$ such that for each $H_j$
  \[
    \Pr[\hat H = H_j] \propto \exp\left(\frac{S(H_j, D)}{2\epsilon}
    \right)
  \]
  where $S(H_j, D)$ is defined in \eqref{score}.
  \caption{\textsc{Private Hypothesis Selection}:
    $\PHS(\cH, D, \epsilon)$}
\end{algorithm}

\begin{lemma}[Privacy]\label{lem:privacy}
  For any $\epsilon > 0$ and collection of hypotheses $\cH$, the
  algorithm $\PHS(\cH, \cdot, \epsilon)$ satisfies
  $\epsilon$-differential privacy.
\end{lemma}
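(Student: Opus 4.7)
The plan is to verify that $\PHS$ is an instance of the exponential mechanism applied to a score function of bounded sensitivity, and then invoke Theorem~\ref{exp} (the standard privacy guarantee for the exponential mechanism). Concretely, I want to show that the score $S(H_j, D)$ has sensitivity $\Delta(S) \le 1$ with respect to neighboring datasets $D \sim D'$, after which privacy is immediate.

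The key step is sensitivity analysis of $\Gamma_\zeta(H_j, H_k, D)$ for fixed $H_j, H_k$. Inspecting the definition, $\Gamma_\zeta$ is either the constant $n$ (when $p_1 - p_2 \leq (2+\zeta)\alpha$, a condition depending only on $H_j, H_k$ and not on $D$) or equals $n \cdot \max\{0, \hat\tau - (p_2 + (1+\zeta/2)\alpha)\}$, where $n\hat\tau = |\{x \in D : x \in \cW_1\}|$ is an integer-valued counting statistic. Replacing a single point in $D$ by another changes $n\hat\tau$ by at most $1$, and the function $t \mapsto \max\{0, t - c\}$ is $1$-Lipschitz, so $|\Gamma_\zeta(H_j, H_k, D) - \Gamma_\zeta(H_j, H_k, D')| \leq 1$ for any neighbors $D \sim D'$. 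Since the minimum of a family of $1$-sensitive functions is itself $1$-sensitive, $|S(H_j, D) - S(H_j, D')| \leq 1$ as well, uniformly in $H_j$.

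Given sensitivity $\Delta(S) \le 1$, the mechanism $\PHS$ samples $\hat H = H_j$ with probability proportional to $\exp(\eps \cdot S(H_j, D)/2)$ (i.e., the exponential mechanism with score $S$ and privacy parameter $\eps$), so Theorem~\ref{exp} yields $\eps$-differential privacy directly. I do not expect a real obstacle here; the only subtlety worth flagging is that $\Gamma_\zeta$ depends on $D$ only through the counting statistic $\hat\tau$ and that the ``draw by default'' branch returns a constant independent of $D$, so both branches are individually $1$-sensitive and there is no discontinuity to worry about when $D$ crosses the $p_1 - p_2 \leq (2+\zeta)\alpha$ threshold (this threshold does not involve $D$ at all).
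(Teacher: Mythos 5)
Your proposal is correct and follows the same route as the paper: observe that $\Gamma_\zeta(H_j,H_k,\cdot)$ has sensitivity $1$, hence so does the minimum $S(H_j,\cdot)$, and then invoke the exponential mechanism's privacy guarantee (Theorem~\ref{exp}). The paper states the sensitivity claim without elaboration, so your explicit verification (the draw branch is independent of $D$, the count $n\hat\tau$ changes by at most $1$, and $\max\{0,\cdot\}$ is $1$-Lipschitz) simply fills in the details of the same argument.
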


\begin{proof}
  First, observe that for any pairs of hypotheses $H_j, H_k$,
  $\Gamma_{\zeta}(H_j, H_k, \cdot)$ has sensitivity 1. As a result, the score
  function $S$ is also 1-sensitive. Then the result directly follows
  from the privacy guarantee of the exponential mechanism (Theorem \ref{exp}).
\end{proof}

\begin{lemma}[Utility] \label{lem:tournament-utility} Fix any
  $\alpha, \beta \in (0, 1)$, and $\zeta > 0$. Suppose that there
  exists $H^*\in \cH$ such that $\tv(P, H^*) \leq \alpha$. Then with
  probability $1 - \beta$ over the sample $D$ and the algorithm
  $\PHS$, we have that $\PHS(\cH, D)$ outputs an hypothesis $\hat H$
  such that $\tv(P, \hat H) \leq (3+\zeta)\alpha$, as long as the
  sample size satisfies
  \[
    n\geq \frac{8 \ln(4m/\beta)}{\zeta^2\alpha^2} +
    \frac{8\ln(2m/\beta)}{\zeta \alpha \epsilon}.
  \]
\end{lemma}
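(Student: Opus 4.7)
The plan is to show that the sample complexity suffices to guarantee that (i) the guaranteed good hypothesis $H^*$ has a large score and (ii) every hypothesis that is significantly farther than $(3+\zeta)\alpha$ from $P$ has score zero, after which the utility of the exponential mechanism finishes the argument.

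First, I would lower-bound $S(H^*, D)$. Fix an arbitrary $H_k \in \cH$ and apply Lemma~\ref{friday} to the ordered pair $(H^*, H_k)$: since $\tv(P, H^*) \leq \alpha$, with probability at least $1 - 2\exp(-n\zeta^2\alpha^2/8)$ we get $\Gamma_\zeta(H^*, H_k, D) > \zeta\alpha n/4$. Taking a union bound over all $m$ choices of $H_k$, with failure probability at most $2m\exp(-n\zeta^2\alpha^2/8)$ we obtain $S(H^*, D) > \zeta\alpha n/4$. Requiring this failure probability to be at most $\beta/2$ is exactly the condition $n \geq 8\ln(4m/\beta)/(\zeta^2\alpha^2)$.

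Next, I would show that every ``bad'' hypothesis $H_j$ with $\tv(P, H_j) > (3+\zeta)\alpha$ has $S(H_j, D) = 0$ on the same high-probability event. To do this, bound $\Gamma_\zeta(H_j, H^*, D)$ using the Scheff\'e set $\cW_1 = \{x : H_j(x) > H^*(x)\}$ and set $p_1 = H_j(\cW_1)$, $p_2 = H^*(\cW_1)$, $\tau = P(\cW_1)$. By the triangle inequality, $p_1 - p_2 = \tv(H_j, H^*) \geq \tv(P, H_j) - \tv(P, H^*) > (2+\zeta)\alpha$, so we fall into the non-draw branch of $\Gamma_\zeta$. Since $|p_2 - \tau| \leq \tv(P, H^*) \leq \alpha$ and $|\hat\tau - \tau| < \zeta\alpha/4$ on the good event, $\hat\tau < p_2 + (1 + \zeta/4)\alpha < p_2 + (1+\zeta/2)\alpha$, forcing $\Gamma_\zeta(H_j, H^*, D) = 0$, hence $S(H_j, D) = 0$.

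Finally, I would invoke the exponential mechanism's utility (Theorem~\ref{exp}) with sensitivity $\Delta(S) = 1$ and failure probability $\beta/2$: with probability at least $1 - \beta/2$, the output $\hat H$ satisfies $S(\hat H, D) \geq S(H^*, D) - 2\ln(2m/\beta)/\epsilon$. Combined with $S(H^*, D) > \zeta\alpha n/4$ and the hypothesis $n \geq 8\ln(2m/\beta)/(\zeta\alpha\epsilon)$, this yields $S(\hat H, D) > 0$. Since every bad hypothesis has score $0$, $\hat H$ must be good, i.e., $\tv(P, \hat H) \leq (3+\zeta)\alpha$. A union bound over the two failure events (Hoeffding/utility) closes the proof with total failure probability at most $\beta$. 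The only slightly subtle step is the triangle-inequality calculation that converts $\tv(P,H_j) > (3+\zeta)\alpha$ into the strict margin needed to enter the non-draw branch of $\Gamma_\zeta$; everything else is bookkeeping on the two tail bounds.
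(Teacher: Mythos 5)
Your proof is correct and follows essentially the same route as the paper: Lemma~\ref{friday} plus a union bound to get $S(H^*,D) > \zeta\alpha n/4$, the exponential mechanism to ensure $S(\hat H, D) > 0$, and the same Scheff\'e-set/triangle-inequality calculation to conclude. The only difference is cosmetic: you prove the contrapositive up front (every $H_j$ with $\tv(P,H_j) > (3+\zeta)\alpha$ has $\Gamma_\zeta(H_j,H^*,D)=0$, hence score zero), whereas the paper argues by contradiction about the selected $\hat H$; both versions use the same concentration bound on the reversed Scheff\'e set $\{x : H_j(x) > H^*(x)\}$ and the bound $|H^*(\cW)-P(\cW)| \le \alpha$.
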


\begin{proof}
  First, consider the $m$ pairwise contests between $H^*$ and every
  candidate in $\cH$. Let $\cW_j = \{x\in \cX\mid H^*(x) > H_j(x)  \}$
  be the collection of Scheff\'{e} sets. For any event $W\subseteq \cX$,
  let $\hat P(W)$ denote the empirical probability of event $W$ on the
  dataset $D$. By Lemma~\ref{friday} and an application of the union
  bound, we know that with probability at least $1 - 2m \exp(-n\zeta^2\alpha^2/8)$
  over the draws of $D$, $|P(\cW_j) - \hat P(\cW_j)| \leq \zeta\alpha/4$ 
  and $\Gamma_{\zeta}(H^*, H_j, D) > \zeta \alpha n/4$ for all $H_j\in \cH$. In
  particular, the latter event implies that $S(H^*, D) > \zeta\alpha n/4$.

  Next, by the utility guarantee of the exponential mechanism
  (Theorem~\ref{exp}), we know that with probability at least
  $1 - \beta/2$, the output hypothesis satisfies
  \[
    S(\hat H, D) \geq S(H^*, D) - \frac{2 \ln(2m/\beta)}{\epsilon} >
    \zeta\alpha n/4 - \frac{2 \ln(2m/\beta)}{\epsilon}.
  \]
  Then as long as
  $n \geq \frac{8 \ln(4m/\beta)}{\zeta^2 \alpha^2} +
  \frac{8\ln(2m/\beta)}{\zeta \alpha \epsilon}$, we know that with
  probability at least $1 - \beta$, $S(\hat H, D) > 0$.  Let us
  condition on this event, which implies that
  $\Gamma_{\zeta}(\hat H, H^*, D) > 0$. We will now show that
  $\tv(\hat H, H^*) \leq (2+\zeta)\alpha$, which directly implies that
  $\tv(\hat H, P) \leq (3 + \zeta)\alpha$ by the triangle inequality. Suppose to the
  contrary that $\tv(\hat H, H^*) > (2+\zeta)\alpha$. Then by the definition of
  $\Gamma_{\zeta}$, $\hat P(\hat\cW) > H^*(\hat \cW) + (1 + \zeta/2)\alpha$, where
  $ \hat \cW = \{x\in \cX \mid \hat H(x) > H^*(x)\}$. Since
  $|P(\hat\cW) - \hat P(\hat \cW)| \leq \zeta\alpha/4$, we have
  $P(\hat \cW) > H^*(\hat \cW) + (1 + \zeta/4)\alpha$, which is a contradiction to
  the assumption that $\tv(P, H^*) \leq \alpha$.
\end{proof}

\subsection{Obtaining a Semi-Agnostic Algorithm} \label{sec:semi}

Theorem~\ref{thm:tournament} shows that given a hypothesis class $\mathcal{H}$ and samples from an unknown distribution $P$, we can privately find a distribution $\hat{H} \in \mathcal{H}$ with $\dtv(P, \hat{H}) \le (3+\zeta)\alpha$ \emph{provided} that we know $\dtv(P, \mathcal{H}) \le \alpha$. But what if we are not promised that $P$ is itself close to $\mathcal{H}$? We would like to design a private hypothesis selection algorithm for the more general semi-agnostic setting, where for any value of $\operatorname{OPT} := \dtv(P, \mathcal{H})$, we are able to privately identify a distribution $\hat{H} \in \mathcal{H}$ with $\dtv(P, \hat{H}) \le c \cdot \operatorname{OPT} + \alpha$ for some universal constant $c$. Our goal will be to do this with sample complexity which is still logarithmic in $|\mathcal{H}|$.

Our strategy for handling this more general setting is by a reduction to that of Theorem~\ref{thm:tournament}. We run that algorithm $T = O(\log(1/\alpha))$ times, doubling the choice of $\alpha$ in each run and producing a sequence of candidate hypotheses $H_1, \dots, H_T$. By the guarantees of Theorem~\ref{thm:tournament}, there is some candidate $H_t$ with $\dtv(P, H_t) \le 2(3+\zeta)\operatorname{OPT}$. The remaining task is to approximately select the best candidate from $H_1, \dots, H_T$. This is done by implementing a private version of the Scheff\'{e} tournament which is itself semi-agnostic, but has a very poor (quadratic) dependence on the number of candidates $T$. 

We prove the following result, which gives a semi-agnostic learner whose sample complexity is comparable to that of Theorem~\ref{thm:tournament}.

\begin{theorem} \label{thm:what-did-i-do-to-deserve-this}
  Let $\alpha, \beta, \ve \in (0, 1)$, and $\zeta > 0$ be a constant. Let $\mathcal{H}$ be a set of $m$ distributions and let $P$ be a distribution with $\dtv(P, \mathcal{H}) = \operatorname{OPT}$. There is an $\ve$-differentially private algorithm which takes as input $n$ samples from $P$ and with probability at least $1-\beta$, outputs a distribution $\hat{H} \in \mathcal{H}$ with $\dtv(P, \hat{H}) \le 18(3+\zeta)\operatorname{OPT} + \alpha$, as long as
\[n \ge O\left( \frac{\log(m/\beta) + \log\log(1/\alpha)}{\alpha^2} + \frac{\log m + \log^2(1/\alpha) \cdot ( \log(1/\beta) + \log\log(1/\alpha))}{\alpha\ve} \right).\]
  The running time of the algorithm is $O(m^2 n \log (1/\alpha) +  n \log^2 (1/\alpha) )$.
\end{theorem}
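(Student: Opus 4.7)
The plan is to carry out the two-phase strategy sketched before the theorem: a candidate-generation phase that produces a short list of hypotheses via repeated invocation of Theorem~\ref{thm:tournament}, followed by a selection phase that privately picks a good hypothesis from that list using a semi-agnostic Scheff\'{e} tournament. I split the privacy budget equally between the two phases, each operating at budget $\varepsilon/2$, so basic composition yields $\varepsilon$-DP overall.

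For Phase~1, I set $T = \lceil \log_2(1/\alpha)\rceil + 1$ and $\alpha_t = 2^{t-1}\alpha$ for $t = 1,\dots,T$, and invoke $\PHS(\cH, D, \varepsilon/(2T))$ once at each scale $\alpha_t$ (feeding $\alpha_t$ into the underlying $\Gamma_\zeta$), producing candidates $H_1,\dots,H_T \in \cH$. Basic composition over the $T$ runs gives $\varepsilon/2$-DP. Let $t^*$ be the smallest index with $\alpha_{t^*} \ge \max(\operatorname{OPT},\alpha)$; applying Lemma~\ref{lem:tournament-utility} at scale $\alpha_{t^*}$ with per-run failure $\beta/(2T)$ and union-bounding across the $T$ runs shows that, except with probability $\beta/2$, $\dtv(P, H_{t^*}) \le (3+\zeta)\alpha_{t^*} \le 2(3+\zeta)\max(\operatorname{OPT},\alpha)$. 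The resulting Phase~1 sample requirement is $\Omega\!\left(\log(mT/\beta)/\alpha^2 + T\log(mT/\beta)/(\alpha\varepsilon)\right)$, which after substituting $T = O(\log(1/\alpha))$ fits inside the theorem's bound.

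For Phase~2, I run on $\{H_1,\dots,H_T\}$ the Laplace-noise Scheff\'{e} tournament of Theorem~\ref{thm:bad-tournament} but with the standard semi-agnostic comparison rule in place of the realizable one: perturb each of the $\binom{T}{2}$ empirical Scheff\'{e}-set masses with independent Laplace noise of scale $O(T^2/(n\varepsilon))$, and apply the semi-agnostic rule that outputs $\hat H$ with $\dtv(P,\hat H) \le 9\min_t \dtv(P,H_t) + O(\alpha)$. This procedure is $\varepsilon/2$-DP by basic composition across the $O(T^2)$ noisy queries. With $n = \Omega\!\left(\log(T/\beta)/\alpha^2 + T^2\log(T/\beta)/(\alpha\varepsilon)\right)$, both the Hoeffding sampling error and the Laplace noise on each Scheff\'{e} mass stay below the thresholds required by the comparison rule, so with probability $\ge 1-\beta/2$ the output achieves the claimed semi-agnostic guarantee; substituting $T = O(\log(1/\alpha))$ gives the $\log^2(1/\alpha)(\log(1/\beta)+\log\log(1/\alpha))/(\alpha\varepsilon)$ contribution in the theorem.

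Composing the two guarantees yields $\dtv(P,\hat H) \le 9 \cdot 2(3+\zeta)\operatorname{OPT} + O(\alpha) = 18(3+\zeta)\operatorname{OPT} + \alpha$ (with the additive $O(\alpha)$ rescaled into $\alpha$ by adjusting hidden constants in $n$), and the running time decomposes as $O(T m^2 n) + O(T^2 n) = O(m^2 n \log(1/\alpha) + n \log^2(1/\alpha))$, matching the theorem. The main technical obstacle is the Phase~2 analysis: one must specify a semi-agnostic variant of the Scheff\'{e} comparison whose underlying statistic is $1$-sensitive, and then verify that Laplace noise of the prescribed magnitude simultaneously preserves the $9$-factor semi-agnostic approximation and all of the union-bounded correctness thresholds across the $\binom{T}{2}$ noisy comparisons. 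The rest of the argument is privacy composition between the two phases and routine union bounds over the $T$ runs of Phase~1 and the noisy comparisons of Phase~2.
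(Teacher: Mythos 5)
Your two-phase structure is the same as the paper's: run the exponential-mechanism selector at $T=O(\log(1/\alpha))$ geometrically increasing accuracy scales to get candidates $H_1,\dots,H_T$, then privately select among them with a Laplace-noised Devroye--Lugosi tournament (this second step is exactly the paper's Lemma~\ref{lem:terrible}, proved via the $9\cdot\mathrm{OPT}+16\max$-error analysis with sampling and noise error each bounded by $\alpha/32$). However, there is a genuine quantitative gap in your Phase~1: you split the privacy budget \emph{equally}, running $\PHS$ at budget $\ve/(2T)$ for every scale, and claim the resulting requirement $n=\Omega\bigl(T\log(mT/\beta)/(\alpha\ve)\bigr)$ ``fits inside the theorem's bound.'' It does not: with $T=\Theta(\log(1/\alpha))$ this term contains $\log(1/\alpha)\cdot\log m/(\alpha\ve)$, which is not dominated by the stated $\bigl(\log m + \log^2(1/\alpha)(\log(1/\beta)+\log\log(1/\alpha))\bigr)/(\alpha\ve)$ whenever $\log m \gg \log(1/\alpha)\bigl(\log(1/\beta)+\log\log(1/\alpha)\bigr)$ --- precisely the regime of the paper's applications, where $\log m$ scales with dimension and $\alpha$ is small. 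So as written you prove a weaker sample complexity than the theorem asserts.

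The fix is the paper's budget allocation: pair the geometrically \emph{increasing} accuracies $\alpha_t\propto 2^{t}$ with geometrically \emph{decreasing} privacy parameters $\ve_t\propto 2^{-t}$ (e.g.\ $\ve_t=2^{-(t+1)}\ve$, which still sums to at most $\ve/2$). Then $\alpha_t\ve_t$ is constant across $t$, so by Lemma~\ref{lem:tournament-utility} every run's privacy term costs only $O\bigl(\log(mT/\beta)/(\alpha\ve)\bigr)$ samples, independent of $T$, and the $\log(1/\alpha)$ factor multiplying $\log m$ disappears; the non-private $\log(mT/\beta)/\alpha^2$ term and your Phase~2 analysis ($T^2\log(T/\beta)/(\alpha\ve)$ from noise of scale $O(T^2/(\ve n))$ on each of the $\binom{T}{2}$ Scheff\'e masses) are fine and match the paper. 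Two smaller points: your handling of the additive error by rescaling $\alpha$ by a constant is legitimate (the paper instead hard-codes constants $\alpha/126$, $\alpha/18$, $\alpha/2$), but the Phase~2 correctness should not be left as an ``obstacle'' --- you need the explicit comparison rule (declare $H$ the winner over $H'$ if $|H(\cW_{H,H'})-c_{H,H'}|<|H'(\cW_{H,H'})-c_{H,H'}|$, output the hypothesis with the most wins) together with the Devroye--Lugosi guarantee to certify the factor $9$ survives the noisy estimates.
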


As discussed above, the algorithm relies on the following variant with a much worse dependence on $m$.

\begin{lemma} \label{lem:terrible}
Let $\alpha, \beta, \ve \in (0, 1)$. There is an $\ve$-differentially private algorithm which takes as input $n$ samples from $P$ and with probability at least $1-\beta$, outputs a distribution $\hat{H} \in \mathcal{H}$ with $\dtv(P, \hat{H}) \le 9\operatorname{OPT} + \alpha$, as long as
\[n \ge O\left( \frac{\log(m/\beta)}{\alpha^2} + \frac{m^2\log(m/\beta)}{\alpha\ve} \right).\]
  The running time of the algorithm is $O(m^2 n)$.
\end{lemma}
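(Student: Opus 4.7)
My plan is to privatize the classical Devroye--Lugosi Scheff\'e tournament by perturbing each Scheff\'e-set empirical mass with Laplace noise, in direct analogy with the sketch that precedes Theorem~\ref{thm:bad-tournament}. For every ordered pair $(H_j,H_k) \in \mathcal{H}^2$ with $j \neq k$, the algorithm computes the Scheff\'e set $\cW_{jk} = \{x \in \cX : H_j(x) > H_k(x)\}$, the exact masses $p_{j,k} = H_j(\cW_{jk})$, and the empirical mass $\hat\tau_{jk} = \tfrac{1}{n}|\{x \in D : x \in \cW_{jk}\}|$; releases $\tilde\tau_{jk} = \hat\tau_{jk} + \xi_{jk}$ with independent $\xi_{jk} \sim \mathrm{Lap}(m^2/(n\ve))$; and then runs a standard (non-private) semi-agnostic Scheff\'e tournament on the noisy estimates $(\tilde\tau_{jk})$, outputting any hypothesis that is never beaten (equivalently, the minimum-distance estimator $\hat H = \arg\min_{H_j} \max_{k \neq j} |\tilde\tau_{jk} - p_{j,k}|$).

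Privacy follows from a one-line $\ell_1$-sensitivity argument. Altering a single datapoint in $D$ changes each of the at most $m(m-1)$ counts $n\hat\tau_{jk}$ by at most $1$, so the vector of counts has $\ell_1$-sensitivity at most $m^2$; the Laplace mechanism with per-coordinate noise scale $m^2/\ve$ therefore yields $\ve$-DP, and selection of $\hat H$ is post-processing.

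For utility, Hoeffding plus a union bound over the at most $m^2$ Scheff\'e sets gives $\max_{j,k}|\hat\tau_{jk} - P(\cW_{jk})| \le \alpha/c$ with probability $\ge 1-\beta/2$, provided $n = \Omega(\log(m/\beta)/\alpha^2)$. A union bound over $m^2$ independent Laplace tails gives $\max_{j,k}|\xi_{jk}| \le \alpha/c$ with probability $\ge 1-\beta/2$, provided $n = \Omega(m^2 \log(m/\beta)/(\alpha\ve))$. On the intersection of these events, every $\tilde\tau_{jk}$ is an $O(\alpha)$-additive approximation of $P(\cW_{jk})$, which is precisely the accuracy regime in which the non-private semi-agnostic Scheff\'e tournament is known to succeed.

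The only substantive remaining step is the deterministic correctness of the semi-agnostic Scheff\'e tournament: given Scheff\'e-mass estimates within $\eta$ of the true $P$-masses, the output satisfies $\dtv(P, \hat H) \le 9\operatorname{OPT} + O(\eta)$ (rescaling $\eta = \alpha/c$ produces additive slack exactly $\alpha$ and absorbs only into the hidden constants of the concentration bounds above). Fixing $H^* = \arg\min_{H \in \mathcal{H}} \dtv(P,H)$ and combining the Scheff\'e identity $H_j(\cW_{jk}) - H_k(\cW_{jk}) = \dtv(H_j,H_k)$ with the pointwise bound $|P(\cW) - H(\cW)| \le \dtv(P,H)$, one shows via the triangle inequality that any $H_j$ with $\dtv(P,H_j) > 9\operatorname{OPT} + \alpha$ is strictly dominated by $H^*$ in the tournament and hence cannot be the selected $\arg\min$; this is the standard Scheff\'e bookkeeping (cf.\ Chapter~6 of~\cite{DevroyeL01}). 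I expect the one point requiring real care to be shepherding the constants -- the factor $9$ multiplying $\operatorname{OPT}$ and the factor $1$ multiplying $\alpha$ -- through the Laplace slack, sampling slack, and triangle inequalities; everything upstream is pure composition and concentration.
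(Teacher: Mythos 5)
Your proposal is correct and is essentially the paper's own proof: both privatize the Devroye--Lugosi Scheff\'e tournament by adding Laplace noise of scale $\Theta(m^2/(\ve n))$ to each Scheff\'e-set empirical mass (your $\ell_1$-sensitivity argument and the paper's per-comparison basic-composition argument are interchangeable), then combine a Chernoff bound and Laplace tail bounds via a union bound over the $O(m^2)$ pairs and invoke the robustness of the Devroye--Lugosi analysis to an $O(\alpha)$ perturbation of the Scheff\'e statistics. The only (harmless) discrepancy is your selection rule: ``a hypothesis that is never beaten'' and the minimum-distance rule $\arg\min_j \max_{k\ne j}\left|\tilde\tau_{jk}-H_j(\cW_{jk})\right|$ are not equivalent (the former need not exist under draw-free comparisons), but the argmin rule you actually analyze does give the claimed guarantee --- in fact with constant $3$ in place of $9$ --- whereas the paper uses the ``most pairwise wins'' rule of \cite[Theorem 6.2]{DevroyeL01}, which is where its factor $9\operatorname{OPT}$ comes from.
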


\begin{proof}[Proof sketch.]
We use a different variation of the Scheff\'{e} tournament which appears in~\cite{DevroyeL01}. Non-privately, the algorithm works as follows. For every pair of hypotheses $H, H' \in \mathcal{H}$ with Scheff\'{e} set $\cW_{H, H'} = \{x\in \cX \mid H(x) > H'(x)\}$, let $H(\cW_{H, H'})$, $H'(\cW_{H, H'})$, and $P(\cW_{H, H'})$ denote the probability masses of $H, H', P$ on $\cW_{H, H'}$, respectively. Moreover, let $\hat{P}(\cW_{H, H'})$ denote the fraction of points in the input sample $D$ which lie in $\cW_{H, H'}$. We declare $H$ to be the winner of the pairwise contest between $H$ and $H'$ if $|H(\cW_{H, H'}) - \hat{P}(\cW_{H, H'})| < |H'(\cW_{H, H'}) - \hat{P}(\cW_{H, H'})|$. Otherwise, we declare $H'$ to be the winner. The algorithm outputs the hypothesis $\hat{H}$ which wins the most pairwise contests (breaking ties arbitrarily).

To make this algorithm $\ve$-differentially private, we replace $\hat{P}(\cW_{H, H'})$ in each pairwise contest with the $(\ve/{m \choose 2})$-differentially private estimate $c_{H, H'} = \hat{P}(\cW_{H, H'}) + \operatorname{Lap}({m\choose 2}/\ve n)$. By the composition guarantees of differential privacy, the algorithm as a whole is $\ve$-differentially private.

The analysis of Devroye and Lugosi~\cite[Theorem 6.2]{DevroyeL01} shows that the (private) Scheff\'{e} tournament outputs a hypothesis $\hat{H}$ with 
\[\dtv(\hat{H}, P) \le 9\operatorname{OPT} + 16 \max_{H, H' \in \mathcal{H}} \left| P(\cW_{H, H'}) - c_{H, H'} \right|.\]
Fix an arbitrary pair $H, H'$. A Chernoff bound shows that $|P(\cW_{H, H'}) - \hat{P}(\cW_{H, H'})| \le \alpha / 32$ with probability at least $1-\beta/(2m^2)$ as long as $n \ge O(\ln(m/\beta)/\alpha^2)$. Moreover, properties of the Laplace distribution guarantee $|c_{H, H'} - \hat{P}(\cW_{H, H'})| \le \alpha / 32$ with probability at least $1 - \beta/(2m^2)$ as long as $n \ge O(m^2\log(m/\beta)/\alpha\ve)$. The triangle inequality and a union bound over all pairs $H, H'$ complete the proof.
\end{proof}

\begin{proof}[Proof of Theorem~\ref{thm:what-did-i-do-to-deserve-this}]
We now combine the private hypothesis selection algorithm of Theorem~\ref{thm:tournament} with the expensive semi-agnostic learner of Lemma~\ref{lem:terrible} to prove Theorem~\ref{thm:what-did-i-do-to-deserve-this}. Define sequences $\alpha_1 = \alpha/126, \alpha_2 = 2\alpha/126, \dots, \alpha_T = 2^{T-1}\alpha/126$ and $\ve_1 = \ve / 4, \ve_2 = \ve / 8, \dots, \ve_T = 2^{-(T+1)}\ve$ for $T = \lceil \log_2 (1/\alpha) \rceil + 1$. For each $t = 1, \dots, T$, let $H_t$ denote the outcome of a run of Algorithm~\ref{alg:phs} using accuracy parameter $\alpha_t$ and privacy parameter $\ve_t$. Finally, use the algorithm of Lemma~\ref{lem:terrible} to select a hypothesis from $H_0, \dots, H_T$ using accuracy parameter $\alpha$ and privacy parameter $\ve / 2$.

Privacy of this algorithm follows immediately from composition of differential privacy. We now analyze its sample complexity guarantee. By Lemma~\ref{lem:tournament-utility}, we have that all $T$ runs of Algorithm~\ref{alg:phs} succeed simultaneously with probability at least $1-\beta/2$ as long as
\[n \ge O\left( \frac{\log(m/\beta) + \log\log(1/\alpha)}{\alpha^2} + \frac{\log(m/\beta) + \log\log(1/\alpha)}{\alpha\ve}\right).\]
  Condition on this event occurring. Recall that success of run $t$ of Algorithm~\ref{alg:phs} means that if $\operatorname{OPT} \in (\alpha_{t-1}, \alpha_t]$, then $\dtv(P, H_t) \le (3+\zeta)\alpha_t \le 2(3+\zeta)\operatorname{OPT}$. Meanwhile, if $\operatorname{OPT} \le \alpha_1 = \alpha/126$, then we have $\dtv(P, H_1) \le \alpha/18$. Hence, regardless of the value of $\operatorname{OPT}$, there exists a run $t$ such that $\dtv(P, H_t) \le 2(3 +\zeta) \operatorname{OPT} + \alpha/18$. The algorithm of Lemma~\ref{lem:terrible} is now, with probability at least $1-\beta/2$, able to select a hypothesis $\hat{H}$ with $\dtv(P, \hat{H}) \le 9\dtv(P, H_t) + \alpha / 2 \le 18(3+\zeta)\operatorname{OPT} + \alpha$ as long as
\[n  \ge O\left( \frac{\log(1/\beta)  +\log\log(1/\alpha)}{\alpha^2} + \frac{\log^2(1/\alpha) \cdot \left(\log(1/\beta) + \log\log(1/\alpha)\right)}{\alpha\ve} \right).\]
This gives the asserted sample complexity guarantee.
\end{proof}



\section{An Advanced Method for Private Hypothesis Selection}
\label{sec:infinite}

In Section \ref{sec:hyp-sel}, we provided a simple algorithm whose sample complexity grows logarithmically in the size of the hypothesis class. We now demonstate that this dependence can be improved and, indeed, we can handle infinite hypothesis classes given that their VC dimension is finite and that the cover has small doubling dimension.

To obtain this improved dependence on the hypothesis class size, we must make two improvements to the analysis and algorithm.
First, rather than applying a union bound over all the pairwise contests to analyse the tournament, we use a uniform convergence bound in terms of the VC dimension of the Scheff\'e sets. 
Second, rather than use the exponential mechanism to select a hypothesis, we use a ``GAP-MAX'' algorithm~\cite{BunDRS18}. This takes advantage of the fact that, in many cases, even for infinite hypothesis classes, only a handful of hypotheses will have high scores. The GAP-MAX algorithm need only pay for the hypotheses that are close to optimal.
To exploit this, we must move to a relaxation of pure differential privacy which is not subject to strong packing lower bounds (as we describe in Section \ref{sec:packing}). 
Specifically, we consider approximate differential privacy, although results with an improved dependence are also possible under various variants of concentrated differential privacy \cite{DworkR16,BunS16,Mironov17,BunDRS18}.

\begin{theorem}\label{thm:sel-inf}
	Let $\mathcal{H}$ be a set of probability distributions on $\mathcal{X}$.  
	Let $d$ be the VC dimension of the set of functions $f_{H,H'} : \mathcal{X} \to \{0,1\}$ defined by $f_{H,H'}(x)=1\iff H(x)>H'(x)$ where $H,H'\in\mathcal{H}$.
	There exists a $(\varepsilon,\delta)$-differentially private algorithm which has following guarantee. Let $D = \{X_1, \dots, X_n\}$ be a set of private samples
	drawn independently from an unknown probability distribution $P$.
	Let $k = \left|\left\{ H \in \mathcal{H} : \tv(H,P) \le 7\alpha\right\}\right|$. 
	Suppose there exists a distribution $H^* \in \mathcal{H}$ such that $\tv(P, H^*) \leq \alpha$.  If $n = \Omega\left(\frac{d + \log(1/\beta)}{\alpha^2} + \frac{\log(k/\beta) +\min\{\log|\mathcal{H}|,\log(1/\delta)\}}{\alpha
		\ve}\right)$, then the algorithm will output a distribution
	$\hat H \in \mathcal{H}$ such that $\dtv(P, \hat H) \leq 7\alpha$ with probability at least $1-\beta$.
	
	Alternatively, we can demand that the algorithm be $\frac12\varepsilon^2$-concentrated differentially private if $n = \Omega\left(\frac{d + \log(1/\beta)}{\alpha^2} + \frac{\log(k/\beta) +\sqrt{\log|\mathcal{H}|}}{\alpha
		\ve}\right)$.
\end{theorem}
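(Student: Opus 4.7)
\textbf{Proof Plan for Theorem~\ref{thm:sel-inf}.}

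The plan is to combine two upgrades to the basic argument of Section~\ref{sec:hyp-sel}: (i) replace the Chernoff union bound over $O(m^2)$ Scheff\'e sets with a VC-uniform-convergence bound so the $\log m/\alpha^2$ term becomes $d/\alpha^2$; and (ii) replace the exponential mechanism, whose error scales with $\log|\mathcal{H}|$, with the GAP-MAX selection mechanism of Bun, Dwork, Rothblum, and Steinke~\cite{BunDRS18}, whose error scales only with the log of the number of near-maximum scores. The score function and comparison primitive will be essentially the same as in Section~\ref{sec:hyp-sel} (with $\zeta$ chosen so that $(3+\zeta)\alpha = 7\alpha$, e.g.\ $\zeta=4$); only the concentration argument and the selection mechanism change.

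\textbf{Step 1: VC uniform convergence.} Let $\mathcal{F}=\{f_{H,H'}:H,H'\in\mathcal{H}\}$ be the family of Scheff\'e-set indicators, which by hypothesis has VC dimension $d$. By the standard VC inequality, as long as $n=\Omega((d+\log(1/\beta))/\alpha^2)$, with probability at least $1-\beta/2$ over $D\sim P^n$ we have $|\hat P(\cW_{H,H'})-P(\cW_{H,H'})|\le \alpha/4$ simultaneously for every pair $H,H'\in\mathcal{H}$. Condition on this event for the remainder of the argument; this replaces the event guaranteed by the union bound in Lemma~\ref{lem:tournament-utility} and, in particular, the proof of Lemma~\ref{friday} goes through uniformly, yielding $\Gamma_\zeta(H^*,H,D)\ge \zeta\alpha n/4$ for every $H\in\mathcal{H}$, and hence $S(H^*,D)\ge \zeta\alpha n/4$.

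\textbf{Step 2: Bounding the number of high-score hypotheses.} The key observation is that any $H\in\mathcal{H}$ with $\tv(H,P)>7\alpha$ satisfies $\tv(H,H^*)>6\alpha=(2+\zeta)\alpha$ by the triangle inequality, so the contest between $H$ and $H^*$ is ``decisive''; the uniform-convergence event then forces $\Gamma_\zeta(H,H^*,D)=0$ and therefore $S(H,D)=0$. Consequently the set of hypotheses with score strictly greater than $0$ is contained in $\{H:\tv(H,P)\le 7\alpha\}$, whose size is at most $k$. In particular there are at most $k$ hypotheses whose score is within $\zeta\alpha n/4$ of $S(H^*,D)$.

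\textbf{Step 3: GAP-MAX selection.} The score function $S(\cdot,D)$ is $1$-sensitive since each $\Gamma_\zeta$ is. Applying the $(\varepsilon,\delta)$-DP GAP-MAX algorithm of~\cite{BunDRS18} to the scores $\{S(H,D):H\in\mathcal{H}\}$, with probability at least $1-\beta/2$ it returns some $\hat H$ satisfying
\[
S(\hat H,D)\;\ge\; \max_{H\in\mathcal{H}} S(H,D)\;-\;O\!\left(\frac{\log(k/\beta)+\log(1/\delta)}{\varepsilon}\right),
\]
where $k$ bounds the number of near-maximum scores (Step 2). Alternatively, the pure exponential-mechanism bound of Theorem~\ref{exp} gives the same guarantee with $\log(1/\delta)$ replaced by $\log|\mathcal{H}|$, which is better when $|\mathcal{H}|$ is finite and small; taking the better of the two mechanisms accounts for the $\min\{\log|\mathcal{H}|,\log(1/\delta)\}$ term in the sample complexity. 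For the zCDP version we instead invoke the zCDP GAP-MAX variant from~\cite{BunDRS18}, which incurs a $\sqrt{\log|\mathcal{H}|}$ rather than $\log|\mathcal{H}|$ dependence in its error bound, yielding the stated alternative sample bound.

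\textbf{Step 4: Conclusion.} Combining the bounds, when $n=\Omega\bigl(\tfrac{d+\log(1/\beta)}{\alpha^2}+\tfrac{\log(k/\beta)+\min\{\log|\mathcal{H}|,\log(1/\delta)\}}{\alpha\varepsilon}\bigr)$, the slack in the GAP-MAX guarantee is at most $\zeta\alpha n/8 < S(H^*,D)$, so $S(\hat H,D)>0$. Since $S(\hat H,D)>0$ implies $\Gamma_\zeta(\hat H,H^*,D)>0$, the argument at the end of Lemma~\ref{lem:tournament-utility} (using our uniform-convergence event in place of the Chernoff+union-bound event) forces $\tv(\hat H,H^*)\le(2+\zeta)\alpha=6\alpha$, and then the triangle inequality gives $\tv(\hat H,P)\le 7\alpha$.

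\textbf{Main obstacle.} The routine parts are the VC bound and the triangle-inequality bookkeeping. The crux is matching our setup to the precise GAP-MAX guarantees of~\cite{BunDRS18}: one must verify that the ``gap'' is defined via the number of hypotheses whose score comes within the final error tolerance of the max (which is exactly our $k$), and that both the $(\varepsilon,\delta)$ and zCDP variants accept a sensitivity-$1$ score over an unbounded index set and give error of the stated form. Threading the parameters through both variants simultaneously, so that the resulting bound is exactly $\log(k/\beta)+\min\{\log|\mathcal{H}|,\log(1/\delta)\}$ (respectively $\log(k/\beta)+\sqrt{\log|\mathcal{H}|}$), is where the bookkeeping is most delicate.
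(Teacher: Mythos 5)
Your architecture matches the paper's: replace the union bound by VC uniform convergence (your Steps 1--2 are essentially the paper's Lemma~\ref{lem:sel-unif-conv} -- the score of $H^*$ is $\Omega(\alpha n)$, every $H$ with $\tv(H,P)>7\alpha$ gets score $0$, so at most $k$ hypotheses have near-maximal score), and then run a gap-exploiting private selection, concluding as in Lemma~\ref{lem:tournament-utility}. Steps 1, 2 and 4 are correct as written.

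The genuine gap is Step 3, which you yourself flag as the crux: you invoke, as a black box, a GAP-MAX guarantee of~\cite{BunDRS18} of the form $S(\hat H,D)\ge \max_H S(H,D)-O\bigl(\frac{\log(k/\beta)+\log(1/\delta)}{\varepsilon}\bigr)$ over a possibly infinite hypothesis set, where $k$ is the number of near-maximal scores, and similarly a zCDP variant with $\sqrt{\log|\mathcal{H}|}$. That statement is not available off the shelf; it is exactly what the paper proves as Theorem~\ref{thm:gap-max}, and the paper is explicit that it must \emph{combine} the GAP-MAX idea of~\cite{BunDRS18} with the exponential mechanism precisely ``to improve the dependence on the parameter $k$.'' The missing construction is: randomly hash $\mathcal{H}$ into $m=\lceil k^2/\beta\rceil$ buckets so that, with probability $1-\beta/2$, no two of the at most $k$ high-scoring hypotheses collide; select a bucket via the exponential mechanism applied to the bucket-wise maximum score (this is where the $\log(k/\beta)/(\alpha\varepsilon)$ term comes from); then within the chosen bucket run a noisy gap-max on the sensitivity-$1$ statistic $\frac12\max\{0,S(H,D)-S(H_B^2,D)\}$ (the gap to the bucket's second-best), where the noise distribution determines the remaining term: truncated Laplace gives $\log(1/\delta)$, unbounded Laplace gives $\log|\mathcal{H}|$ (hence the minimum, also obtainable by your simpler observation of just using the plain exponential mechanism when $\log|\mathcal{H}|\le\log(1/\delta)$), and Gaussian noise gives $\sqrt{\log|\mathcal{H}|}$ for the $\tfrac12\varepsilon^2$-zCDP claim. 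Without supplying this mechanism and its privacy/utility analysis (or a verified citation delivering the same bound), the private term of the sample complexity -- the novel content of the theorem -- is not established; the rest of your argument is the routine part.
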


Comparing Theorem \ref{thm:sel-inf} to Theorem \ref{thm:tournament}, we see that the first (non-private) $\log|\mathcal{H}|$ term is replaced by the VC dimension $d$ and the second (private) $\log|\mathcal{H}|$ term is replaced by $\log k + \log(1/\delta)$.  
Here $k$ is a measure of the ``local'' size of the hypothesis class $\mathcal{H}$; its definition is similar to that of the doubling dimension of the hypothesis class under total variation distance. 

We note that the $\log(1/\delta)$ term could be large, as the privacy failure probability $\delta$ should be cryptographically small. 
Thus our result includes statements for pure differential privacy (by using the other term in the minimum with $\delta = 0$) and also concentrated differential privacy.
Note that, since $d$ and $\log k$ can be upper-bounded by $O(\log |\mathcal{H}|)$, this result supercedes the guarantees of Theorem~\ref{thm:tournament}.

\subsection{VC Dimension}

We begin by reviewing the definition of Vapnik-Chervonenkis (VC) dimension and its properties.

\begin{definition}[VC dimension \cite{VapnikC74}]\label{defn:VC}
	Let $\mathcal{F}$ be a set of functions $f : \mathcal{X} \to \{0,1\}$. The VC dimension of $\mathcal{F}$ is defined to be the largest $d$ such that there exist $x_1, \cdots, x_d \in \mathcal{X}$ and $f_1, \cdots, f_{2^d} \in \mathcal{H}$ such that for all $1 \le i < j \le 2^d$ there exists $1 \le k \le d$ such that $f_i(x_k) \ne f_j(x_k)$.
\end{definition}

For our setting, we must extend the definition of VC dimension from function families to hypothesis classes.

\begin{definition}[VC dimension of hypothesis class]
	Let $\mathcal{H}$ be a set of probability distributions on a space $\mathcal{X}$. For $H,H'\in\mathcal{H}$, define $f_{H,H'} : \mathcal{X} \to \{0,1\}$ by $f(x)=1 \iff H(x)>H'(x)$. Define $\mathcal{F}(\mathcal{H})=\{f_{H,H'} : H,H'\in\mathcal{H}\}$. We define the VC dimension of $\mathcal{H}$ to be the VC dimension of $\mathcal{F}(\mathcal{H})$.\footnote{Here, for simplicity, we assume that each distribution $H$ is given by a density function $H(\cdot)$. More generally, we define the VC dimension of $\mathcal{H}$ to be the smallest $d$ such that there exists a function family $\mathcal{F} \subseteq \{0,1\}^{\mathcal{X}}$ of VC dimension $d$ with the property that, for all $H,H'\in\mathcal{H}$ we have $\tv(H,H') = \sup_{f \in \mathcal{F}} \ex[X\leftarrow H]{f(X)}-\ex[X\leftarrow H']{f(X)}$, where the supremum is over $f$ measurable with respect to both $H$ and $H'$. We ignore this technicality throughout.}
\end{definition}

The key property of VC dimension is the following uniform convergence bound, which we use in place of a union bound.

\begin{theorem}[Uniform Convergence \cite{Talagrand94}]\label{thm:VC-unif_conv}
	Let $\mathcal{F}$ be a set of functions $f : \mathcal{X} \to \{0,1\}$ with VC dimension $d$. Let $P$ be a distribution on $\mathcal{X}$. Then $$\pr[D \leftarrow P^n]{\sup_{f \in \mathcal{F}} |f(D)-f(P)| \le \alpha} \ge 1-\beta$$ whenever $n = \Omega\left(\frac{d+\log(1/\beta)}{\alpha^2}\right)$. Here $f(D) := \frac{1}{n} \sum_{x \in D} f(x)$ and $f(P):=\ex[X \leftarrow P]{f(X)}$.
\end{theorem}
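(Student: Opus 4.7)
The plan is to combine classical Vapnik--Chervonenkis symmetrization with a chaining argument that leverages Haussler's sharp empirical covering bound for VC classes. First, introduce an independent ghost sample $D' \sim P^n$. A standard symmetrization inequality gives
\[
\pr[D \leftarrow P^n]{\sup_{f \in \mathcal{F}} |f(D) - f(P)| > \alpha} \;\le\; 2\,\pr[D, D' \leftarrow P^n]{\sup_{f \in \mathcal{F}} |f(D) - f(D')| > \alpha/2}
\]
whenever $n \ge c/\alpha^2$ (which follows by Chebyshev applied to each fixed $f$, ensuring $f(D')$ concentrates near $f(P)$). Conditioning on the multiset $S = D \cup D'$ of size $2n$ and invoking exchangeability, one replaces the two-sample difference with a Rademacher sum $\tfrac{1}{n} \sum_{i=1}^n \sigma_i (f(z_i) - f(z_i'))$ with independent signs $\sigma_i \in \{\pm 1\}$. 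This reduces the problem to controlling the Rademacher complexity of $\mathcal{F}$ restricted to a fixed $2n$-point set.

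The naive continuation — using Sauer--Shelah to note that $\mathcal{F}$ induces at most $(2en/d)^d$ distinct sign patterns on $S$, then applying Hoeffding and a union bound — yields a tail of order $\exp\bigl(-c n \alpha^2 / (d \log(n/d))\bigr)$, which translates into $n = O((d\log(1/\alpha) + \log(1/\beta))/\alpha^2)$, losing an extra $\log(1/\alpha)$ factor. To recover the sharp bound, I would instead chain using Haussler's packing lemma: for any VC-$d$ class and any empirical measure, the $\eta$-covering number in the empirical $L^1$ metric is at most $(C/\eta)^{O(d)}$, with exponent depending only on $d$ and \emph{not} on $n$. Setting dyadic scales $\alpha_k = 2^{-k}$ for $k = 1, \ldots, \lceil \log_2(1/\alpha) \rceil$, build nested covers $\mathcal{N}_k$ of these sizes, apply Hoeffding between consecutive levels of the chain, and sum the resulting geometric series. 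Because Haussler's exponent is scale-free, the chaining sum telescopes to $O(\sqrt{d/n})$ rather than $O(\sqrt{d \log(1/\alpha)/n})$.

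To promote this expectation bound to a high-probability statement at confidence $1 - \beta$, apply the bounded-differences (McDiarmid) inequality to $\Phi(D) = \sup_{f \in \mathcal{F}} |f(D) - f(P)|$: each coordinate has sensitivity $1/n$, so $\Phi$ concentrates around $\ex{\Phi(D)}$ with deviations of order $\sqrt{\log(1/\beta)/n}$. Combining with the chaining estimate $\ex{\Phi(D)} = O(\sqrt{d/n})$ and undoing the symmetrization yields $n = \Omega((d + \log(1/\beta))/\alpha^2)$ as required.

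The main obstacle is the Haussler empirical packing bound itself — the statement that the $\eta$-covering number of a VC-$d$ class under any empirical measure is $(C/\eta)^{O(d)}$ with no hidden $\log n$. This is the nontrivial input that distinguishes Talagrand's sharp estimate from the classical union-bound approach; everything else (the symmetrization, the dyadic summation, and the McDiarmid step) is essentially routine once this covering estimate is in hand. An alternative route avoiding Haussler would be a direct application of Talagrand's generic chaining bound, but either way the dimension-free covering exponent is the crux.
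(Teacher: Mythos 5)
The paper does not prove Theorem~\ref{thm:VC-unif_conv}; it cites it as a known result (Talagrand~1994) and uses it as a black box, so there is no ``paper proof'' to compare against. Your outline is a correct and essentially standard route to the sharp rate: symmetrization, a Rademacher/Dudley chaining argument with Haussler's dimension-independent $L^1$-packing bound $N(\eta) \le (C/\eta)^{O(d)}$ supplying the convergent entropy integral $\int_0^1 \sqrt{d\log(1/\eta)}\,d\eta = O(\sqrt{d})$, and a McDiarmid bounded-differences step to lift the expectation bound $\ex{\sup_f|f(D)-f(P)|} = O(\sqrt{d/n})$ to confidence $1-\beta$. Your diagnosis is exactly right that the crux is Haussler's scale-free, $n$-independent covering exponent; the na\"{i}ve Sauer--Shelah union bound over $(en/d)^d$ projections is what produces the spurious $\log(1/\alpha)$ factor, and chaining plus Haussler removes it. Two small tidying remarks: (i) your argument first performs ghost-sample symmetrization at the level of tail probabilities and then pivots to bounding an expectation and applying McDiarmid, which is slightly redundant --- the cleaner assembly is to symmetrize in expectation directly ($\ex{\sup_f|f(D)-f(P)|} \le 2\,\ex{\sup_f \frac1n\sum_i\sigma_i f(X_i)}$) and reserve concentration entirely for the McDiarmid step; (ii) the dyadic chain should run to the finest resolution (e.g.\ $\eta \approx 1/n$, below which $\{0,1\}$-valued functions collapse), not merely to scale $\alpha$, though since the entropy integral converges this does not change the rate.
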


It is immediate from Definition \ref{defn:VC} that $VC(\mathcal{F})\le\lfloor \log_2|\mathcal{F}| \rfloor $. Thus Theorem \ref{thm:VC-unif_conv} subsumes the union bound used in the proof of Theorem~\ref{thm:tournament}.

The relevant application of uniform convergence for our algorithm is the following lemma (roughly the equivalent of Lemma~\ref{friday}), which says that good hypotheses have high scores, and bad hypotheses have low scores.

\begin{lemma}\label{lem:sel-unif-conv}
	Let $\mathcal{H}$ be a collection of probability distributions on $\mathcal{X}$  with VC dimension $d$.
	
	Let $S : \mathcal{H} \times \mathcal{X}^n \to \mathbb{R}$ be a score function similar to \eqref{score}, namely $$S(H,D) =\inf_{H'\in \mathcal{H}}  \max\left\{\begin{array}{c} |\{x \in D : H(x)>H'(x)\}|-n\cdot(\pr[X\leftarrow H']{H(X)>H'(X)}+3\alpha),\\n \cdot \mathbb{I}[\tv(H,H')\le6\alpha]\end{array}\right\},$$
	where $\mathbb{I}$ denotes the indicator function.
	
	Let $P$ be a distribution on $\mathcal{X}$. Let $\alpha,\beta>0$ and $n \ge O(\frac{1}{\alpha^2}(d+\log(1/\beta)))$. Suppose there exists $H^*\in \mathcal{H}$ with $\tv(P, H^*) \leq \alpha$. Then, with probability at least $1-\beta$ over $D \leftarrow P^n$, we have
	\begin{itemize}
		\item $S(H^*,D)>\alpha n$ and
		\item $S(H,D)=0$ for all $H \in \mathcal{H}$ with $\tv(H,P)>7\alpha$.
	\end{itemize}
\end{lemma}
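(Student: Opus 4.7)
The plan is to derive the whole statement from a single invocation of uniform convergence (Theorem~\ref{thm:VC-unif_conv}) applied to the function class $\mathcal{F}(\mathcal{H})$ of VC dimension $d$. With $n = \Omega((d + \log(1/\beta))/\alpha^2)$ samples, this guarantees that, with probability at least $1-\beta$, the empirical mass $\hat P(\cW_{H,H'})$ agrees with $P(\cW_{H,H'})$ up to additive $\alpha$ \emph{simultaneously} for all pairs $H,H' \in \mathcal{H}$, where $\cW_{H,H'} := \{x : H(x) > H'(x)\}$. All subsequent analysis conditions on this event, and the two bullets become deterministic consequences via triangle inequalities, executed once globally instead of pointwise via Hoeffding plus union bound as in Lemma~\ref{friday}.

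For the first bullet, I will fix an arbitrary $H' \in \mathcal{H}$ and lower-bound the $\max$ inside the infimum defining $S(H^*,D)$. The analysis splits on whether $\tv(H^*,H') \le 6\alpha$: in that regime the indicator fires and the $\max$ is at least $n$, so certainly exceeds $\alpha n$. Otherwise $\tv(H^*, H') > 6\alpha$, and writing $\cW = \cW_{H^*,H'}$ one has $H^*(\cW) - H'(\cW) = \tv(H^*,H')$. Combining this with $|P(\cW) - H^*(\cW)| \le \tv(P,H^*) \le \alpha$ and the uniform convergence bound $|\hat P(\cW) - P(\cW)| \le \alpha$ yields $n\hat P(\cW) - n(H'(\cW) + 3\alpha) > \alpha n$. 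Taking the infimum over $H'$ preserves the strict inequality with a mild boundary argument, since the $6\alpha$ threshold is closed and leaves an $\alpha n$-sized buffer in either case.

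For the second bullet, I will witness $S(H,D) \le 0$ for each bad $H$ (with $\tv(H,P) > 7\alpha$) by the specific choice $H' = H^*$. The triangle inequality gives $\tv(H,H^*) \ge \tv(H,P) - \tv(P,H^*) > 6\alpha$, so the indicator term vanishes. For the first argument of the $\max$, a symmetric application of uniform convergence together with $\tv(P,H^*) \le \alpha$ to $\cW = \cW_{H,H^*}$ bounds $\hat P(\cW) \le H^*(\cW) + 2\alpha$, making the first term at most $-\alpha n$. Hence the $\max$ equals $0$ for this witness, so $S(H,D) \le 0$. Conversely, the $\max$ in the score is always $\ge 0$ because its second argument is a nonnegative multiple of $n$, so $S(H,D) = 0$ exactly.

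The main obstacle is really only bookkeeping: the strict versus non-strict inequality at the $6\alpha$ boundary in the first bullet, and lining up the two uses of the triangle inequality with the two uses of uniform convergence so that each side gets a clean $\alpha n$ slack. There is no conceptual difficulty beyond the standard Scheff\'e analysis, now packaged so that the $\log m$ dependence of the union bound is replaced by the VC dimension $d$.
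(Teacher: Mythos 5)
Your proposal is correct and follows essentially the same route as the paper's proof: a single VC-based uniform convergence event over all Scheff\'e sets, the case split at the $6\alpha$ threshold (indicator versus triangle-inequality bound) for $S(H^*,D)>\alpha n$, and the witness $H'=H^*$ plus nonnegativity of the score for the second bullet. The boundary subtlety you flag at the $6\alpha$ threshold is treated no more carefully in the paper itself and is harmless for how the lemma is used, so your argument matches the paper's in both structure and level of rigor.
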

\begin{proof}
	For $H,H'\in\mathcal{H}$, define $f_{H,H'} : \mathcal{X} \to \{0,1\}$ by $f_{H,H'}(x)=1 \iff H(x)>H'(x)$.
	Note that $|\{x \in D : H(x)>H'(x)\}| = \sum_{x \in D} f_{H,H'}(x)$ and $d$ is the VC dimension of the function class $\{f_{H,H'} : H,H'\in\mathcal{H}\}$. By Theorem~\ref{thm:VC-unif_conv}, if $n = \Omega\left(\frac{d+\log(1/\beta)}{\alpha^2}\right)$, then $$\pr[D \leftarrow P^n]{\forall H,H'\in\mathcal{H}~~\left||\{x \in D : H(x)>H'(x)\}| - n \cdot \pr[X \leftarrow P]{H(X)>H'(X)}\right| \le \alpha n} \ge 1-\beta.$$
	We condition on this event happening. 
	
	In order to prove the first conclusion -- namely, $S(H^*,D)>\alpha n$ -- it remains to show that, for all $H' \in \mathcal{H}$, we have either $\tv(H^*,H') \le 6\alpha$ or $$|\{x \in D : H(x)>H'(x)\}|-n\cdot(\pr[X\leftarrow H']{H^*(X)>H'(X)}+3\alpha)>\alpha n.$$ If $\tv(H^*,H') \le 6\alpha$, we are done, so assume $\tv(H^*,H')> 6\alpha$.  By the uniform convergence event we have conditioned on, 
	\begin{align*}
		|\{x \in D : H(x)>H'(x)\}|&\ge n \cdot (\pr[X \leftarrow P]{H(X)>H'(X)}-\alpha)\\
		&\ge n \cdot (\pr[X \leftarrow H^*]{H(X)>H'(X)}-\tv(P,H^*)-\alpha)\\
		&\ge n\cdot(\tv(H^*,H')+\pr[X \leftarrow H']{H(X)>H'(X)}-2\alpha)\\
		&> n\cdot(6\alpha+\pr[X \leftarrow H']{H(X)>H'(X)}-2\alpha),
	\end{align*}
	from which the desired conclusion follows.
	
	In order to prove the second conclusion -- namely, $S(H,D)=0$ for all $H \in \mathcal{H}$ with $\tv(H,P)>7\alpha$ -- it suffices to show that one $H'\in\mathcal{H}$ yields a score of zero for any $H \in \mathcal{H}$ with $\tv(H,P)>7\alpha$. In particular, we show that $H'=H^*$ yields a score of zero for any such $H$. That is, if $\tv(H,P)>7\alpha$, then $\tv(H,H^*)> 6\alpha$ and $$|\{x \in D : H(x)>H^*(x)\}|-n\cdot(\pr[X\leftarrow H^*]{H(X)>H^*(X)}+3\alpha) \le 0.$$ By the triangle inequality $\tv(H,H^*) \ge \tv(H,P)-\tv(P,H^*) > 7\alpha-\alpha=6\alpha$, as required. By the uniform convergence event we have conditioned on, 
	\begin{align*}
	|\{x \in D : H(x)>H^*(x)\}|&\le n \cdot (\pr[X \leftarrow P]{H(X)>H^*(X)}+\alpha)\\
	&\le n \cdot (\pr[X \leftarrow H^*]{H(X)>H^*(X)}+\tv(P,H^*)+\alpha)\\
	&\le n\cdot(\pr[X\leftarrow H^*]{H(X)>H^*(X)}+2\alpha),
	\end{align*}
	which completes the proof.
\end{proof}

\subsection{GAP-MAX Algorithm}

In place of the exponential mechanism for privately selecting a hypothesis we use the following algorithm that works under a ``gap'' assumption. That is, we assume that there is a $5\alpha n$ gap between the highest score and the $(k+1)$-th highest score. Rather than paying in sample complexity for the total number of hypotheses we pay for the number of high-scoring hypotheses $k$.

 This algorithm is based on the GAP-MAX algorithm of Bun, Dwork, Rothblum, and Steinke \cite{BunDRS18}. However, we combine their GAP-MAX algorithm with the exponential mechanism to improve the dependence on the parameter $k$.

\begin{theorem}\label{thm:gap-max}
	Let $\mathcal{H}$ and $\mathcal{X}$ be arbitrary sets. Let $S : \mathcal{H} \times \mathcal{X}^n \to \mathbb{R}$ have sensitivity at most 1 in its second argument -- that is, for all $H \in \mathcal{H}$ and all $D,D'\in\mathcal{X}^n$ differing in a single example, $|S(H,D)-S(H,D')|\le 1$. 
	
	For $D \in \mathcal{X}^n$ and $\alpha>0$, define $$K(D,5\alpha):=\left|\left\{H \in \mathcal{H} : S(H,D) \ge \sup_{H' \in \mathcal{H}} S(H',D)-5\alpha n\right\}\right|.$$
	
	Given parameters $\varepsilon,\delta,\beta>0$ and $n,k \ge 1$, there exists a $(\varepsilon,\delta)$-differentially private randomized algorithm $M : \mathcal{X}^n \to \mathcal{H}$ such that, for all $D \in \mathcal{X}^n$ and all $\alpha>0$, $$K(D,5\alpha) \le k \implies \pr{S(M(D),D) \ge \sup_{H' \in \mathcal{H}} S(H',D) - \alpha n} \ge 1-\beta$$ provided $n = \Omega\left(\frac{\min\{\log |\mathcal{H}|,\log(1/\delta)\}+\log(k/\beta)}{\alpha\varepsilon}\right)$.
	
	Furthermore, given $\varepsilon,\beta>0$ and $n,k\ge 1$, there exists a $\frac12 \varepsilon^2$-concentrated differentially private \cite{BunS16} algorithm $M : \mathcal{X}^n \to \mathcal{H}$ such that, for all $D \in \mathcal{X}^n$ and all $\alpha>0$, $$K(D,5\alpha) \le k \implies \pr{S(M(D),D) \ge \sup_{H' \in \mathcal{H}} S(H',D) - \alpha n} \ge 1-\beta$$ provided $n = \Omega\left(\frac{\sqrt{\log |\mathcal{H}|}+\log(k/\beta)}{\alpha\varepsilon}\right)$.
\end{theorem}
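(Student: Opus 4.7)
The plan is to handle the two regimes of the theorem separately. For both, the core algorithmic primitive is the exponential mechanism, augmented in the approximate-DP case with a stability-based preselection step along the lines of the GAP-MAX framework of Bun--Dwork--Rothblum--Steinke.

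For the pure-DP branch of the minimum (which applies when $\delta=0$ or $|\mathcal{H}|$ is moderate), I would simply run the exponential mechanism with score $S$ and privacy parameter $\varepsilon$ on all of $\mathcal{H}$. Privacy is immediate because $S$ has sensitivity $1$. For utility, let $M^* = \sup_H S(H,D)$ and split the ``bad'' set $B = \{H : S(H,D) < M^* - \alpha n\}$ into $B_1 \subseteq \{H : S(H,D) \geq M^* - 5\alpha n\}$, which has size at most $k$ by the gap hypothesis, and $B_2 = \mathcal{H} \setminus B_1$, whose members have scores at most $M^*-5\alpha n$. Upper bounding the numerator and lower bounding the denominator of the selection probability by $\exp(\varepsilon M^*/2)$ gives $\Pr[\hat H \in B] \leq k\, e^{-\varepsilon \alpha n/2} + |\mathcal{H}|\, e^{-5\varepsilon \alpha n/2}$, which is at most $\beta$ under the stated sample complexity involving $\log|\mathcal{H}|$ and $\log(k/\beta)$.

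For the approximate-DP branch that pays only $\log(1/\delta)$ instead of $\log|\mathcal{H}|$, I would precede the exponential mechanism with a GAP-MAX-style stability test that privately prunes $\mathcal{H}$ to a data-dependent set of size at most $k$. Concretely, define $T(D) := \{H : S(H,D) \geq M^*(D) - 3\alpha n\}$ and let the distance-to-instability $d(D)$ equal the smallest number of sample changes that either alters the identity of $T$ or pushes $|T|$ past $k$. Release $\tilde d = d(D) + \mathrm{Lap}(2/\varepsilon)$, abort if $\tilde d < (2/\varepsilon)\log(1/\delta) + 1$, and otherwise run the exponential mechanism on $T(D)$ with the remaining budget $\varepsilon/2$. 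The standard propose-test-release analysis gives $(\varepsilon,\delta)$-DP: whenever the noisy distance exceeds the threshold, $T$ is identical on all neighbors of $D$, so the overall mechanism coincides with exp mech on a fixed set, and $\delta$ absorbs the probability that Laplace noise lets an unstable dataset slip through. For utility under $K(D,5\alpha)\leq k$, the gap hypothesis implies $d(D) \gtrsim \alpha n$, so the test passes except with probability $\beta/2$ once $n \gtrsim \log(1/\delta)/\alpha\varepsilon$, and then exp mech over a size-$\leq k$ set contributes the $\log(k/\beta)/\alpha\varepsilon$ term.

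For the $\tfrac12\varepsilon^2$-zCDP version, I would replace the Laplace-based distance-to-instability test with a Gaussian-noise analogue; converting from zCDP to a high-probability tail bound inflates the needed noise scale by $\sqrt{\log|\mathcal{H}|}$ rather than $\log(1/\delta)$, but the rest of the argument is identical. The main obstacle I anticipate is the privacy analysis of the preselection step, since the set $T(D)$ depends on the private data. The cleanest path is the distance-to-instability framework, which requires verifying that $T(D)=T(D')$ for every neighbor $D'$ whenever $d(D)\geq 1$, and carefully tuning the constants in the score-function slack ($3\alpha n$ vs.\ $5\alpha n$) and in the Laplace threshold so that the hypothesis $K(D,5\alpha)\leq k$ is exactly strong enough to make $d(D)$ large enough to pass the test with the desired probability.
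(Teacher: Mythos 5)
Your pure-DP branch is fine: running the exponential mechanism directly on $\mathcal{H}$ and splitting the bad set at the $5\alpha n$ level gives $\Pr[\text{bad}] \le k\,e^{-\varepsilon\alpha n/2}+|\mathcal{H}|\,e^{-5\varepsilon\alpha n/2}$, which matches the $\log|\mathcal{H}|$ branch of the theorem (the paper handles this case inside its bucketing algorithm, via unbounded Laplace noise union-bounded over a bucket). The genuine gap is in your approximate-DP branch. The claim that $K(D,5\alpha)\le k$ forces $d(D)=\Omega(\alpha n)$ for $T(D)=\{H: S(H,D)\ge M^*(D)-3\alpha n\}$, where $M^*(D)=\sup_{H'\in\mathcal{H}}S(H',D)$, is false: the hypothesis only bounds the \emph{number} of hypotheses scoring above $M^*-5\alpha n$; it provides no margin around the level $M^*-3\alpha n$. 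A hypothesis whose score sits exactly at (or one unit below) that threshold can enter or leave $T$ after changing a single sample, so the \emph{identity} of $T$ can be unstable ($d(D)\le 1$) on datasets that satisfy the hypothesis. Your propose-test-release mechanism remains private, but utility is not established: on such datasets the test aborts with probability close to $1$, so the implication ``$K(D,5\alpha)\le k$ implies accuracy'' fails. (The component of $d$ measuring when $|T|$ exceeds $k$ is indeed $\Omega(\alpha n)$, but identity stability is exactly what running the exponential mechanism on the data-dependent set $T(D)$ requires, and that is the part that breaks.) The paper's proof avoids needing any stable set: it hashes $\mathcal{H}$ into $\lceil k^2/\beta\rceil$ random buckets so that, with probability at least $1-\beta/2$, the bucket chosen by an exponential mechanism contains only one hypothesis within $5\alpha n$ of the top; this manufactures a genuine gap between the best and second-best scores \emph{inside the bucket}, and then the GAP-MAX step of Bun, Dwork, Rothblum, and Steinke adds noise to the gap score $\frac12\max\{0,S(H,D)-S(H_B^2,D)\}$ over a candidate set whose membership never has to be released or stabilized.

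Your zCDP branch has a further problem: an abort-based distance-to-instability test cannot yield pure $\frac12\varepsilon^2$-zCDP, because there is always a nonzero probability that an unstable dataset passes the test, after which the exponential mechanism's support differs between neighbors, giving infinite R\'enyi divergence -- and zCDP has no $\delta$ to absorb this event. The paper obtains the zCDP guarantee by composing two genuinely concentrated-DP steps, the exponential mechanism for bucket selection and Gaussian noise added to the gap scores; the $\sqrt{\log|\mathcal{H}|}$ term comes from the Gaussian tail bound over the candidates in the chosen bucket, not from a stability test.
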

\begin{proof}
	We begin by describing the algorithm.
	\begin{enumerate}
		\item Let $m = \left\lceil\frac{k^2}{\beta}\right\rceil$ and let $G : \mathcal{H} \to [m]$ be a uniformly random function.\footnote{It suffices for $G$ to be a drawn from a universal hash function family.} 
		\item Randomly select $B \in [m]$ with $$\pr{B=b} \propto \exp\left(\frac{\varepsilon}{4} \sup \left\{ S(H,D) : H \in \mathcal{H}, G(H)=b\right\}\right).$$
		\item Define $\mathcal{H}_B = \{H \in \mathcal{H} : G(H)=B\}$. Let $H_B^1 = \argmax_{H \in \mathcal{H}_B} S(H,D)$ and $H_B^2 = \argmax_{H \in \mathcal{H}_B \setminus \{H_B^1\}} S(H,D)$, breaking ties arbitrarily. (That is, $\mathcal{H}_B$ is the $B$-th ``bin'' and $H_B^1$ and $H_B^2$ are the items in this bin with the largest and second-largest scores respectively.) Define $S_B' : \mathcal{H}_B \times \mathcal{X}^n \to \mathbb{R}$ by $$S_B'(H,D) = \frac12 \max\{0,S(H,D)-S(H_B^2,D)\}.$$ (Note that $S_B'$ has sensitivity 1 and $S_B'(H,D)=0$ whenever $H \ne H_B^1$.)
		\item Let $\mathcal{D}$ be a distribution on $\mathbb{R}$ such that adding a sample from $\mathcal{D}$ to a sensitivity-$1$ function provides $(\varepsilon/4,\delta/2)$-differential privacy (or, respectively, $\frac{1}{6}\varepsilon^2$-concentrated differential privacy). For example, $\mathcal{D}$ could be a Laplace distribution with scale $4/\varepsilon$ truncated to the interval $[-t,t]$ for $t=4(1+\log(1/\delta))/\varepsilon$ (or unbounded if $\delta=0$). To attain concentrated differential privacy, we can set $\mathcal{D}=N\left(0,\frac{3}{\varepsilon^2}\right)$, a centered Gaussian with variance $3/\ve^2$.
		\item Draw a sample $Z_H$ i.i.d. from $\mathcal{D}$ corresponding to every $H \in \mathcal{H}_B$.
		\item Return $H^* = \argmax_{H \in \mathcal{H}_B} S_B'(H,D) + Z_H$. 
	\end{enumerate}
	The selection of $B$ is an instantiation of the exponential mechanism \cite{McSherryT07} and is $(\varepsilon/2,0)$-differentially private. The selection of $H^*$ in the final step is a GAP-MAX algorithm \cite{BunDRS18} and is $(\varepsilon/2,\delta)$-differentially private.  By composition, the entire algorithm is $(\varepsilon,\delta)$-differentially private (or, respectively, $\frac12\varepsilon^2$-concentrated differentially private).
	
	For the utility analysis, in order for the algorithm to output a good $H^*$, it suffices for the following three events to occur.
	\begin{itemize}
		\item $S(H_B^1,D) \ge \sup_{H' \in \mathcal{H}} S(H',D) - \alpha n$.\\That is, restricting the search to $\mathcal{H}_B$, rather than all of $\mathcal{H}$, only reduces the score of the optimal choice by $ \alpha n$. The exponential mechanism ensures that this happens with probability at least $1-\beta/4$, as long as $n \ge \frac{4\log(2 k/\beta)}{\varepsilon\alpha}$.
		\item $S(H_B^2,D) <  \sup_{H' \in \mathcal{H}} S(H',D) -  5\alpha n$.\\That is, the second-highest score within $\mathcal{H}_B$ is at least $5\alpha n$ less than the highest score overall. We have assumed that there are at most $k$ elements $H \in \mathcal{H}$ such that $S(H,D) \ge  \sup_{H' \in \mathcal{H}} S(H',D) -  5\alpha n$. Call these ``large elements.'' Since $G : \mathcal{H} \to [m]$ is random and $m \ge k^2/\beta$, the probability that an arbitrary but fixed pair of large elements collide -- that is, are in the same $H_B$ is $1/m \le \beta/k^2$. If we union bound over the ${k \choose 2} < k^2/2$ pairs, we see that the probability of any collisions is at most $\beta/2$. Thus, the probability that more than one large element satisfies $G(H)=B$ is at most $\beta/2$. This suffices for the event to occur.
		\item $\sup_{H\in \mathcal{H}_B} |Z_H| \le  \alpha n$.\\If the noise distribution $\mathcal{D}$ is supported on $[-\alpha n, \alpha n]$, then this condition holds with probability $1$. For the truncated Laplace distribution, this is possible whenever $ n \ge 1+4\log(1/\delta)/\alpha\varepsilon$. Alternatively, we can use unbounded Laplace noise and a union bound to show that this event occurs with probability at least $1-\beta/4$ whenever $n \ge 4\log(4|\mathcal{H}_B|/\beta)/\varepsilon\alpha$. For Gaussian noise, $n \ge \frac{3}{\varepsilon\alpha}\sqrt{\log(4|\mathcal{H}_B|/\beta)}$ suffices.
	\end{itemize}
	Assuming the first and second events occur, we have $S'_B(H_B^1,D) = \frac{S(H_B^1,D) - S(H_B^2,D)}{2} > 2\alpha n$. Given this, the third event implies $H^*=H_B^1$. Finally, the first event then implies $S(H^*, D) \ge \sup_{H' \in \mathcal{H}} S(H',D) - \alpha n$, as required. A union bound over the three events completes the proof.
\end{proof}

Now we can combine the VC-based uniform convergence bound with the GAP-MAX algorithm to prove our result.

\begin{proof}[Proof of Theorem \ref{thm:sel-inf}]
	By Lemma \ref{lem:sel-unif-conv}, with high probability over the draw of the dataset $D$, our score function satisfies $\sup_{H \in \mathcal{H}} S(H,D) \ge S(H^*,D)>\alpha n$ and $S(H,D)=0$ whenever $\tv(H,P)>7\alpha$. This requires $n = \Omega(d/\alpha^2)$.
	
	Note that the score function $S$ has sensitivity-1, since it is the supremum of counts. Conditioned on the uniform convergence event, the maximum score is at least $\alpha n$ and there are at most $k$ elements of $\mathcal{H}$ with score greater than $0$. Thus we can apply the GAP-MAX algorithm of Theorem \ref{thm:gap-max}. If $n = \Omega((\min\{\log |\mathcal{H}|, \log(1/\delta)\}+\log(k))/\alpha\varepsilon)$, then with high probability, the algorithm outputs $\hat H \in \mathcal{H}$ with score at least $\frac45 \alpha n$, as required.
\end{proof}


\section{Packings, Lower Bounds, and Relations to Covers}
\label{sec:packing}
In this section, we show that the sample complexity of our algorithms for private hypothesis selection with pure differential privacy cannot be improved, at least for constant values of the proximity parameter $\alpha$. We first apply a packing argument~\cite{HardtT10, BeimelBKN14} to show a lower bound which is logarithmic in the packing number of the class of distributions (Lemma~\ref{lem:packing}). We then state a folklore relationship between the sizes of maximal packings and minimal covers (Lemma~\ref{lem:pack-n-cover}), which shows that instantiating our private hypothesis selection algorithm with a minimal cover gives essentially optimal sample complexity (Theorem~\ref{thm:packing-tight}).



\begin{lemma}
	\label{lem:packing}
	Suppose there exists an $\alpha$-packing $\mathcal{P}_{\alpha}$ of a set of distributions $\mathcal{H}$.
	Then any $\ve$-differentially private algorithm which takes as input samples $X_1, \dots, X_n \sim P$ for some $P \in \mathcal{H}$ and produces a distribution $\hat H$ such that $\dtv(P,\hat H) \leq \alpha/2$ with probability $\geq 9/10$ requires
	$$n = \Omega\left(\frac{\log |\mathcal{P}_{\alpha}|}{\varepsilon}\right).$$
\end{lemma}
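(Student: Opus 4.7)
The plan is to run the standard packing-based lower bound argument for pure differential privacy (in the style of Hardt--Talwar and Beimel--Brenner--Kasiviswanathan--Nissim, as cited). The key primitive is group privacy: an $\varepsilon$-differentially private mechanism $M$ satisfies $\Pr[M(D) \in S] \le e^{\varepsilon n} \Pr[M(D') \in S]$ for any two datasets $D, D'$ of size $n$ and any measurable event $S$, obtained by iterating the neighboring-dataset guarantee $n$ times.

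First, for each $P \in \mathcal{P}_\alpha$, define the ``good output region''
\[ S_P = \{H : \dtv(P,H) \le \alpha/2\}. \]
Because $\mathcal{P}_\alpha$ is an $\alpha$-packing, the triangle inequality implies that the sets $\{S_P\}_{P \in \mathcal{P}_\alpha}$ are pairwise disjoint: if $H \in S_P \cap S_{P'}$ for distinct $P,P' \in \mathcal{P}_\alpha$, then $\dtv(P,P') \le \alpha$, contradicting the packing property. Correctness of $M$ says that for each $P \in \mathcal{P}_\alpha$,
\[ \Pr_{D \sim P^n,\,M}[M(D) \in S_P] \ge 9/10. \]

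Second, fix an arbitrary reference dataset $D_0 \in X^n$ and consider the numbers $p_P := \Pr[M(D_0) \in S_P]$. By disjointness of the $S_P$, we have $\sum_{P \in \mathcal{P}_\alpha} p_P \le 1$, so there exists a ``hard'' instance $P^\star \in \mathcal{P}_\alpha$ with $p_{P^\star} \le 1/|\mathcal{P}_\alpha|$. Apply group privacy to compare $D_0$ against an arbitrary size-$n$ dataset $D'$: we obtain
\[ \Pr[M(D') \in S_{P^\star}] \le e^{\varepsilon n}\,\Pr[M(D_0) \in S_{P^\star}] \le \frac{e^{\varepsilon n}}{|\mathcal{P}_\alpha|}. \]
Since this bound holds pointwise in $D'$, it survives averaging over $D' \sim (P^\star)^n$, giving $\Pr_{D' \sim (P^\star)^n, M}[M(D') \in S_{P^\star}] \le e^{\varepsilon n}/|\mathcal{P}_\alpha|$.

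Third, combine the two estimates. Correctness at $P^\star$ forces $9/10 \le e^{\varepsilon n}/|\mathcal{P}_\alpha|$, which upon rearrangement yields $n \ge \varepsilon^{-1}\log(9|\mathcal{P}_\alpha|/10) = \Omega(\log |\mathcal{P}_\alpha|/\varepsilon)$, as claimed. The only non-routine step is really the pigeonhole choice of $P^\star$; everything else is disjointness plus group privacy. The argument is robust and in fact also gives the same lower bound for any constant success probability strictly greater than $1/|\mathcal{P}_\alpha|$ (which is trivially achievable by outputting a uniformly random packing element).
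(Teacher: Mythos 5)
Your proof is correct and follows essentially the same packing-plus-group-privacy argument as the paper: disjoint good-output regions $B_P$, the $e^{\varepsilon n}$ group-privacy factor, and the $9/10$ accuracy guarantee. The only cosmetic difference is that you pigeonhole on a fixed reference dataset $D_0$ to pick a single hard instance $P^\star$, whereas the paper sums the probabilities $\Pr[M(P_0^n)\in B_P]$ over all packing elements; the calculation and the resulting bound are the same.
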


One might conjecture a stronger version of this lemma exists, and that one could prove the lower bound $n = \Omega\left(\frac{\log |\mathcal{P}_{\alpha}|}{\alpha\varepsilon}\right)$.
However, such a statement cannot be true in general.
For example, consider an $\alpha$-packing of $N(\mu, 1)$ where $\mu \in [-R, R]$, which would have size $\Omega(R/\alpha)$. 
If such a lemma were true, it would imply a lower bound of $\tilde\Omega\left(\frac{\log R}{\alpha \ve}\right)$, which contradicts known upper bounds. 
One way to prove a lower bound achieving such a dependence on $\alpha$ would be to have a single ``central'' distribution which is close to all distributions in the packing (see an argument of this sort in Theorem 5.13 of~\cite{Vadhan17}).
However, we do not explore this here, as our goal is to match our upper bound which is stated in terms of a generic cover.

\begin{proof}
	Let $M$ be a $\ve$-differentially private algorithm with the stated accuracy requirement, and denote by $M(P^n)$ the distribution on hypotheses obtained by running $M$ on $n$ i.i.d.\ samples from a distribution $P \in \mathcal{H}$. For each $P \in \mathcal{P}_\alpha$, let $B_P$ denote the set of distributions which are at total variation distance at most $\alpha/2$ from $P$. Then the accuracy requirement implies that $\pr[\hat{H} \gets M(P^n)]{\hat{H} \in B_P} \ge 9/10$ for all $P \in \mathcal{H}$. Let $P_0 \in \mathcal{P}_\alpha$ be an arbitrary packing element. 
	Note that, trivially, samples from $P^n$ and $P^n_0$ have Hamming distance at most $n$ for any $P$.
	Recall the group privacy property of differential privacy, which states that if $M$ is $\ve$-DP, then $\Pr[M(X) \in S] \leq \exp\left(\ve d(X,X')\right) \cdot \Pr[M(X') \in S]$ for any set $S \subseteq \mathrm{Range}(M)$, where $d(X,X')$ is the Hamming distance between the two datasets.
	Applying this property with $P^n$ and $P^n_0$, we have
	\[\pr[\hat{H} \gets M(P_0^n)]{\hat{H} \in B_P} \ge e^{-\ve n} \cdot 9/10\]
	for every $P \in \mathcal{P}_\alpha$. The fact that $\mathcal{P}_\alpha$ is an $\alpha$-packing implies that the sets $B_P$ are all disjoint, and hence
	\[1 \ge \sum_{P \in \mathcal{P}_\alpha} \pr[\hat{H} \gets M(P_0^n)]{\hat{H} \in B_P} \ge \left| \mathcal{P}_\alpha \right| \cdot e^{-\ve n} \cdot 9/10.\]
	Rearranging gives us the stated lower bound on $n$.
\end{proof}

The following lemma is a well-known folklore relationship between packing and covering numbers.
We include a proof for completeness.

\begin{lemma}
	\label{lem:pack-n-cover}
	For a set of distributions $\mathcal{H}$, let $p_\alpha$ and $c_\alpha$ be the size of the largest $\alpha$-packing and smallest $\alpha$-cover of $\mathcal{H}$, respectively.
	Then
	$$p_{2\alpha} \leq c_\alpha \leq p_{\alpha}.$$
\end{lemma}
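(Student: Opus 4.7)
The plan is to prove the two inequalities separately, each by a short, direct argument: the left inequality by an injective pigeonhole mapping from the packing into the cover, and the right inequality by showing that any maximal packing is automatically a cover.

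For the upper bound $p_{2\alpha} \leq c_\alpha$, I would let $\mathcal{P}_{2\alpha}$ be a largest $2\alpha$-packing and $\mathcal{C}_\alpha$ be a smallest $\alpha$-cover, both of $\mathcal{H}$. For each $P \in \mathcal{P}_{2\alpha}$, pick any element $\phi(P) \in \mathcal{C}_\alpha$ with $\dtv(P, \phi(P)) \leq \alpha$, which exists by the definition of cover (noting $\mathcal{P}_{2\alpha} \subseteq \mathcal{H}$). The key step is to verify that $\phi$ is injective: if $\phi(P) = \phi(P') = C$ for distinct $P, P' \in \mathcal{P}_{2\alpha}$, then the triangle inequality gives $\dtv(P, P') \leq \dtv(P, C) + \dtv(C, P') \leq 2\alpha$, contradicting the strict packing condition $\dtv(P, P') > 2\alpha$. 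Injectivity then yields $|\mathcal{P}_{2\alpha}| \leq |\mathcal{C}_\alpha|$.

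For the lower bound $c_\alpha \leq p_\alpha$, I would take a largest $\alpha$-packing $\mathcal{P}_\alpha$ (which is, a fortiori, maximal under inclusion) and argue that $\mathcal{P}_\alpha$ itself already serves as an $\alpha$-cover of $\mathcal{H}$. Suppose not: then there exists some $H \in \mathcal{H}$ with $\dtv(H, P) > \alpha$ for every $P \in \mathcal{P}_\alpha$. But then $\mathcal{P}_\alpha \cup \{H\}$ would still satisfy the $\alpha$-packing condition (pairs within $\mathcal{P}_\alpha$ are already separated, and pairs involving $H$ are separated by assumption), contradicting maximality of $\mathcal{P}_\alpha$. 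Hence $\mathcal{P}_\alpha$ is itself a valid $\alpha$-cover, and taking the smallest such cover gives $c_\alpha \leq |\mathcal{P}_\alpha| = p_\alpha$.

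I do not anticipate any real obstacle here; both arguments are a single application of the triangle inequality plus maximality/injectivity. The only minor subtlety to track is the strict-versus-weak inequality in the packing definition (the excerpt uses $>$), which is precisely what makes the triangle-inequality contradiction in the upper bound tight at the factor of $2$ and what ensures that in the lower bound one can always append a point not yet $\alpha$-covered.
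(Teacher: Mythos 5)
Your proof is correct and follows essentially the same route as the paper: the left inequality via the triangle inequality applied when two packing points share a cover point (your injectivity phrasing is just the pigeonhole argument in the paper restated), and the right inequality by observing that a maximal $\alpha$-packing is itself an $\alpha$-cover. The only cosmetic difference is that the paper explicitly dispenses with the infinite cases ($c_\alpha = \infty$ or $p_\alpha = \infty$) up front, which your argument handles implicitly.
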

\begin{proof}
	We first prove the inequality on the left. Let $\mathcal{C}_\alpha$ be a cover of $\mathcal{H}$ of minimal size $c_\alpha$. If $c_\alpha = \infty$, we are done. Otherwise, let $S \subseteq \mathcal{H}$ be any set of distributions of size at least $c_\alpha + 1$. By the pigeonhole principle, there exists $P \in \mathcal{C}_\alpha$ and two distinct distributions $Q, Q' \in S$ such that $\dtv(P, Q) \le \alpha$ and $\dtv(P, Q') \le \alpha$.  Hence $\dtv(Q, Q') \le 2\alpha$ by the triangle inequality, so $S$ cannot be $(2\alpha)$-packing of $\mathcal{H}$. This suffices to show that $p_{2\alpha} \le c_\alpha$. 
	
	Next, we prove the inequality on the right. Let $\mathcal{P}_{\alpha}$ be a maximal $\alpha$-packing with size $|\mathcal{P}_{\alpha}| = p_{\alpha}$. If $p_{\alpha} = \infty$, we are done. Otherwise, we claim that $\mathcal{P}_{\alpha}$ is also an $\alpha$-cover of $\mathcal{H}$, and hence $c_\alpha \le |\mathcal{P}_\alpha| = p_\alpha$. To see this, suppose for the sake of contradiction that there were a distribution $P \in \mathcal{H}$ with $\dtv(P, \mathcal{P}_{\alpha}) > \alpha$. Then we could add $P$ to $\mathcal{P}_{\alpha}$ to produce a strictly larger packing, contradicting the maximality of $\mathcal{P}_{\alpha}$.
\end{proof}

\begin{theorem}
	\label{thm:packing-tight}
	Let $\mathcal{H}$ be a set of distributions, and let $n^*_\alpha$ denote the minimum number of samples such that there exists an $\ve$-differentially private algorithm which takes as input samples $X_1, \dots, X_{n^*_\alpha} \sim P$ for an arbitrary $P \in \mathcal{H}$ and outputs a distribution $\hat H$ such that $\dtv(P,\hat H) \leq \alpha/2$ with probability $\geq 9/10$. Then there exists a cover of $\mathcal{H}$ such that the instantiation of the algorithm underlying Theorem~\ref{thm:tournament} with this cover takes as input $n = \Omega(n^*_\alpha \cdot (\ve/\alpha^2 + 1/\alpha))$ samples from an arbitrary $P \in \mathcal{H}$ and outputs a $\hat H$ such that $\dtv(P,\hat H) \leq (6+2\zeta) \alpha$ with probability $\geq 9/10$ for any constant $\zeta > 0$.
\end{theorem}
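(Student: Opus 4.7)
The plan is to use a maximal $\alpha$-packing of $\mathcal{H}$ as the cover for Corollary~\ref{cor:covertolearn}, and to bound its size using the packing lower bound of Lemma~\ref{lem:packing}. The key observation is that these two facts, together with the fact that a maximal packing is automatically a cover (as established inside the proof of Lemma~\ref{lem:pack-n-cover}), directly chain into the desired bound.

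First, let $\mathcal{P}_\alpha$ be a maximal $\alpha$-packing of $\mathcal{H}$. Then $\mathcal{P}_\alpha$ is also an $\alpha$-cover of $\mathcal{H}$: if some $P \in \mathcal{H}$ were at total variation distance greater than $\alpha$ from every element of $\mathcal{P}_\alpha$, then $\mathcal{P}_\alpha \cup \{P\}$ would be a strictly larger $\alpha$-packing, contradicting maximality. So we may take $\mathcal{C}_\alpha := \mathcal{P}_\alpha$ as the cover on which to instantiate the algorithm of Corollary~\ref{cor:covertolearn}.

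Second, I would apply Lemma~\ref{lem:packing} to $\mathcal{P}_\alpha$ to relate $\log|\mathcal{P}_\alpha|$ to $n^*_\alpha$. That lemma says any $\ve$-differentially private algorithm which, on input samples from an unknown $P$ in the packing, outputs $\hat H$ with $\dtv(P,\hat H) \le \alpha/2$ with probability $\ge 9/10$, must use $\Omega(\log|\mathcal{P}_\alpha|/\ve)$ samples. Since the algorithm underlying the definition of $n^*_\alpha$ must in particular succeed for every $P \in \mathcal{P}_\alpha \subseteq \mathcal{H}$, this forces $n^*_\alpha = \Omega(\log|\mathcal{P}_\alpha|/\ve)$, i.e., $\log|\mathcal{P}_\alpha| = O(\ve \cdot n^*_\alpha)$.

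Third, I would plug this cover size bound into Corollary~\ref{cor:covertolearn}. The corollary guarantees that running the algorithm with the cover $\mathcal{P}_\alpha$ produces $\hat H$ with $\dtv(P,\hat H) \le (6+2\zeta)\alpha$ with probability at least $9/10$ as soon as
\[
n = \Omega\!\left(\frac{\log|\mathcal{P}_\alpha|}{\alpha^2} + \frac{\log|\mathcal{P}_\alpha|}{\alpha\ve}\right).
\]
Substituting $\log|\mathcal{P}_\alpha| = O(\ve \cdot n^*_\alpha)$ gives
\[
n = \Omega\!\left(\ve \cdot n^*_\alpha \cdot \left(\frac{1}{\alpha^2} + \frac{1}{\alpha\ve}\right)\right) = \Omega\!\left(n^*_\alpha \cdot \left(\frac{\ve}{\alpha^2} + \frac{1}{\alpha}\right)\right),
\]
which is exactly the claimed bound.

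There is no substantial technical obstacle: the proof is essentially a three-line chaining of Lemma~\ref{lem:packing}, Lemma~\ref{lem:pack-n-cover}, and Corollary~\ref{cor:covertolearn}. The only conceptual step worth highlighting is the choice of cover --- taking a maximal $\alpha$-packing (rather than an arbitrary $\alpha$-cover) allows us to avoid losing a constant factor in the packing radius when transferring between covering and packing numbers, so that both the accuracy guarantee and the sample complexity scale with the same parameter $\alpha$.
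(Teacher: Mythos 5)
Your proposal is correct and follows essentially the same route as the paper: the paper likewise chains Lemma~\ref{lem:packing}, the inequality $c_\alpha \le p_\alpha$ from Lemma~\ref{lem:pack-n-cover} (whose proof is exactly your observation that a maximal $\alpha$-packing is itself an $\alpha$-cover), and Corollary~\ref{cor:covertolearn}. The only cosmetic difference is that you instantiate the cover as the maximal packing directly rather than citing Lemma~\ref{lem:pack-n-cover} as a black box.
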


\begin{proof}
	Let $p_\alpha$ denote the size of the largest $\alpha$-packing of $\mathcal{H}$. By Lemma~\ref{lem:packing}, we have $n^*_\alpha = \Omega(\log p_\alpha / \ve)$. On the other hand, by Lemma~\ref{lem:pack-n-cover}, we know that there exists an $\alpha$-cover $\mathcal{C}_\alpha$ of $\mathcal{H}$ with $|\mathcal{C}_\alpha| \le p_\alpha$. Hence $\log |\mathcal{C}_\alpha| \le O(\ve \cdot n^*_\alpha)$ and the asserted sample complexity guarantee follows from Corollary~\ref{cor:covertolearn}.
\end{proof}


\section{Applications of Hypothesis Selection}
\label{sec:apps}
In this section, we give a number of applications of Theorem~\ref{thm:tournament}, primarily to obtain sample complexity bounds for learning a number of distribution classes of interest.
Recall Corollary~\ref{cor:covertolearn}, which is an immediate corollary of Theorem~\ref{thm:tournament}. 
This indicates that we can privately semi-agnostically learn a class of distributions with a number of samples proportional to the logarithm of its covering number.
\covertolearn*

Note that the factor of $(6+2\zeta)\alpha$ in the corollary statement (versus $(3+\zeta)\alpha$ in the statement of Theorem~\ref{thm:tournament}) is due to the fact the algorithm is semi-agnostic, and the closest element in the cover is $2\alpha$-close to $P$, rather than just $\alpha$-close.

We instantiate this result to give the sample complexity results for semi-agnostically learning product distributions (Section~\ref{sec:prod-app}), Gaussian distributions (Section~\ref{sec:gaussian-app}), sums of some independent random variable classes (Section~\ref{sec:sirv-app}), piecewise polynomials (Section~\ref{sec:pp-app}), and mixtures (Section~\ref{sec:mix-app}).
Furthermore, we mention an application to private PAC learning (Section~\ref{sec:pac-app}), when the distribution of unlabeled examples is known to come from some hypothesis class.

\subsection{Product Distributions}
\label{sec:prod-app}
As a first application, we first give an $\ve$-differentially private algorithm for learning product distributions over discrete alphabets.

\begin{definition}
  A \emph{$(k,d)$-product distribution} is a distribution over $[k]^d$, such that its marginal distributions are independent (i.e., the distribution is the product of its marginals).
\end{definition}

We start by constructing a cover for product distributions.
\begin{lemma}
  There exists an $\alpha$-cover of the set of $(k,d)$-product distributions of size
  $$O\left(\frac{kd}{\alpha}\right)^{d(k-1)}.$$
\end{lemma}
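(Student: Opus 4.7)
The plan is to build the cover one marginal at a time and then take the product, exploiting subadditivity of total variation distance under tensorization. Recall that if $P = P_1 \times \cdots \times P_d$ and $Q = Q_1 \times \cdots \times Q_d$ are product distributions, then
\[
\dtv(P, Q) \;\le\; \sum_{i=1}^{d} \dtv(P_i, Q_i),
\]
a standard fact that follows e.g.\ from a hybrid argument. So it suffices to construct, for each coordinate $i$, an $(\alpha/d)$-cover of the set of distributions on $[k]$ in total variation distance, and then take products of elements across coordinates.

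For the single-marginal cover, I will discretize the probability simplex on $[k]$. A distribution on $[k]$ is specified by a vector $(p_1,\dots,p_k)$ with $p_j \ge 0$ and $\sum_j p_j = 1$, and $\dtv = \tfrac12 \|\cdot\|_1$. I would place a grid of spacing $\eta = \alpha/(dk)$ on the first $k-1$ coordinates, namely $p_j \in \{0, \eta, 2\eta, \dots, \lfloor 1/\eta \rfloor \eta\}$ for $j < k$, discard tuples with $\sum_{j<k} p_j > 1$, and set $p_k = 1 - \sum_{j<k} p_j$ (then renormalize/round if needed so the result is a valid probability vector). For any target marginal, rounding each of its first $k-1$ coordinates to the nearest grid point introduces $\ell_1$-error at most $k \cdot \eta = \alpha/d$, hence total variation error at most $\alpha/(2d) \le \alpha/d$. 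The number of grid points per marginal is at most $(1/\eta + 1)^{k-1} = O(dk/\alpha)^{k-1}$.

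Taking the Cartesian product of these per-coordinate covers across all $d$ marginals yields a collection of $(k,d)$-product distributions of size
\[
\Bigl(O(dk/\alpha)^{k-1}\Bigr)^{d} \;=\; O\!\left(\tfrac{kd}{\alpha}\right)^{d(k-1)},
\]
and by the tensorization bound above, the total variation error against any target product distribution is at most $d \cdot (\alpha/d) = \alpha$, giving an $\alpha$-cover of the claimed size.

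I do not expect any serious obstacles. The only mildly delicate point is the simplex constraint: when we round the first $k-1$ coordinates independently we may produce a ``distribution'' whose last coordinate is negative, so one must either restrict to tuples with $\sum_{j<k} p_j \le 1$ or clip and renormalize, and verify that the TV error is still $O(\alpha/d)$. This is a standard argument and does not change the asymptotic count of grid points, so the final size bound remains $O(kd/\alpha)^{d(k-1)}$.
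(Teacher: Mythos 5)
Your proposal is correct and follows essentially the same route as the paper: bound $\dtv$ of product distributions by the sum of marginal distances, grid each marginal's first $k-1$ probabilities at granularity $\Theta(\alpha/(kd))$ with the last symbol absorbing the normalization, and take the $d$-fold Cartesian product to get size $O(kd/\alpha)^{d(k-1)}$. Your added care about the simplex constraint is exactly the point the paper handles by arguing the error on symbol $k$ is also $O(\alpha/d)$.
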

\begin{proof}
  Consider some fixed product distribution $P$, with marginal distributions $(P_1, \dots, P_d)$.
  We will construct a cover that contains a distribution $Q$ (with marginals $(Q_1, \dots, Q_d)$) that is $\alpha$-close in total variation distance.
  
  First, by triangle inequality, we have that $\dtv(P,Q) \leq \sum_{i=1}^d \dtv(P_i, Q_i)$, so it suffices to approximate each marginal distribution to accuracy $\alpha/d$.
  Stated another way, we must generate an $(\alpha/d)$-cover of distributions over $[k]$, and we can then take its $d$-wise Cartesian product.
  Raising the size of this underlying cover to the power $d$ gives us the size of the overall cover.

  To $(\alpha/d)$-cover a distribution over $[k]$, we will additively grid the probability of each symbol at granularity $\Theta\left(\frac{\alpha}{kd}\right)$, choosing the probability of the last symbol $k$ such that the sum is normalized.
  This will incur $\Theta\left(\frac{\alpha}{kd}\right)$ error per symbol (besides for symbol $k$), and summing over the $k-1$ symbols accumulates error $\Theta\left(\frac{\alpha}{d}\right)$.
  It can also be argued that the error on symbol $k$ is $O\left(\frac{\alpha}{d}\right)$ -- with an appropriate choice of granularity, this gives us an $(\alpha/d)$-cover for distributions over $[k]$.
  The size of this cover is $O\left(\frac{kd}{\alpha}\right)^{k-1}$, which allows us to conclude the lemma statement.
\end{proof}

With this cover in hand, applying Corollary~\ref{cor:covertolearn} allows us to conclude the following sample complexity upper bound.
\begin{corollary}
\label{cor:learn-prod}
  Suppose we are given a set of samples $X_1, \dots, X_n \sim P$, where $P$ is $\alpha$-close to a $(k,d)$-product distribution.
  Then for any constant $\zeta > 0$, there exists an $\ve$-differentially private algorithm which outputs a $(k,d)$-product distribution $H^*$ such that $\dtv(P,H^*) \leq (6+2\zeta)\alpha$ with probability $\geq 9/10$, so long as
  $$n = \Omega\left(kd \log\left(\frac{kd}{\alpha}\right)\left(\frac{1}{\alpha^2} + \frac{1}{\alpha \ve}\right)\right).$$
\end{corollary}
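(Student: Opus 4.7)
The plan is to derive this corollary as a direct consequence of Corollary~\ref{cor:covertolearn} applied to the $\alpha$-cover constructed in the preceding lemma. Since Corollary~\ref{cor:covertolearn} turns any $\alpha$-cover into a private semi-agnostic learner whose sample complexity scales as $\log|\mathcal{C}_\alpha|$, all that remains is to plug in the cover bound.

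First I would set $\mathcal{C}_\alpha$ to be the cover of $(k,d)$-product distributions guaranteed by the previous lemma, whose size is $|\mathcal{C}_\alpha| = O(kd/\alpha)^{d(k-1)}$. Taking logarithms gives
\[
\log|\mathcal{C}_\alpha| \;=\; d(k-1)\log\!\left(\frac{kd}{\alpha}\right) \;=\; O\!\left(kd\log\!\left(\frac{kd}{\alpha}\right)\right).
\]
Next I would invoke Corollary~\ref{cor:covertolearn} on the class $\mathcal{H}$ of $(k,d)$-product distributions: since $P$ is $\alpha$-close to $\mathcal{H}$, the corollary produces an $\ve$-differentially private algorithm which outputs some $H^* \in \mathcal{C}_\alpha$ with $\dtv(P,H^*) \le (6+2\zeta)\alpha$ with probability at least $9/10$, provided
\[
n \;=\; \Omega\!\left(\frac{\log|\mathcal{C}_\alpha|}{\alpha^2} + \frac{\log|\mathcal{C}_\alpha|}{\alpha\ve}\right).
\]
Substituting the bound on $\log|\mathcal{C}_\alpha|$ yields exactly the sample complexity in the statement.

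One small point to verify is that the output $H^*$ lies in the class of $(k,d)$-product distributions. This is immediate because every element of the cover constructed in the lemma is itself a product distribution (being a $d$-wise Cartesian product of gridded marginals over $[k]$), so the output is automatically a proper learner for the class.

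There is essentially no obstacle here beyond confirming the arithmetic above; the proof is a one-line composition of the cover bound with Corollary~\ref{cor:covertolearn}. The only thing requiring any thought is the observation about properness, which is immediate from the construction of $\mathcal{C}_\alpha$.
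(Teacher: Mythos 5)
Your proposal is correct and matches the paper's argument exactly: the paper likewise derives this corollary by instantiating Corollary~\ref{cor:covertolearn} with the $\alpha$-cover of size $O(kd/\alpha)^{d(k-1)}$ from the preceding lemma, so that $\log|\mathcal{C}_\alpha| = O(kd\log(kd/\alpha))$ gives the stated sample complexity. Your added remark on properness of the output is a harmless (and accurate) observation that the paper leaves implicit.
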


This gives the first $\tilde O(d)$ sample algorithm for learning a binary product distribution in total variation distance under pure differential privacy, improving upon the work of Kamath, Li, Singhal, and Ullman~\cite{KamathLSU19} by strengthening the privacy guarantee at a minimal cost in the sample complexity.
The natural way to adapt their result from concentrated to pure differential privacy would require $\Omega(d^{3/2})$ samples.

\begin{remark}\label{rem:separation}
	Properly learning a product distribution over $\{0,1\}^d$ to total variation distance $\le\frac12$ implies learning its mean $\mu \in [0,1]^d$ up to $\ell_1$ error $\le 2\sqrt{d}$; see Lemma \ref{lem:prod-tv-L1} below.
	
	Thus Corollary \ref{cor:learn-prod} implies a $\varepsilon$-differentially private algorithm which takes $n=\tilde O(d/\varepsilon)$ samples from a product distribution $P$ on $\{0,1\}^d$ and, with high probability, outputs an estimate $\hat \mu$ of its mean $\mu$ with $\|\hat \mu -\mu \|_1 \le 2\sqrt{d}$.
	
	In contrast, for non-product distributions over the hypercube, estimating the mean to the same accuracy under $\varepsilon$-differential privacy requires $n=\Omega(d^{3/2}/\varepsilon)$ samples \cite{HardtT10,SteinkeU15}. Thus we have a polynomial separation between estimating product and non-product distributions under pure differential privacy.
\end{remark}

\begin{lemma}\label{lem:prod-tv-L1}
	If $P$ and $Q$ are product distributions on $\mathbb{R}^d$ with $\tv(P,Q)\le\frac12$ and per-coordinate variance at most $\sigma^2$, then $$\left\|\ex[X\leftarrow P]{X}-\ex[X\leftarrow Q]{X}\right\|_1 \le 4\sqrt{d\sigma^2}.$$
\end{lemma}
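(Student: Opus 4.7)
The plan is to reduce the $\ell_1$ mean discrepancy to a scalar test-function problem and then exploit the maximal coupling of $P$ and $Q$. Writing $\mu=\ex[X\leftarrow P]{X}$ and $\nu=\ex[Y\leftarrow Q]{Y}$, I would pick signs $s_i=\sign(\mu_i-\nu_i)\in\{-1,+1\}$ and set $f(x)=\sum_{i=1}^d s_i x_i$, so that $\ex[P]{f}-\ex[Q]{f}=\|\mu-\nu\|_1$. Because $P$ and $Q$ are product distributions, the coordinates are independent under each, so $\Var_P(f)=\sum_i \Var_{P_i}(X_i)\le d\sigma^2$ and likewise $\Var_Q(f)\le d\sigma^2$.

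Next I would invoke the maximal coupling of $P$ and $Q$: there is a joint law of $(X,Y)$ with $X\sim P$, $Y\sim Q$, and $\pr{X\ne Y}=\tv(P,Q)\le 1/2$. Since $f(X)-f(Y)$ vanishes on $\{X=Y\}$, we have $\ex{f(X)-f(Y)}=\ex{(f(X)-f(Y))\,\id[X\ne Y]}$, and Cauchy--Schwarz gives
\[
\|\mu-\nu\|_1^2 = \bigl(\ex{(f(X)-f(Y))\,\id[X\ne Y]}\bigr)^2 \le \ex{(f(X)-f(Y))^2}\cdot\pr{X\ne Y}.
\]

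To handle the second moment I would expand
\[
\ex{(f(X)-f(Y))^2} = \Var_P(f)+\Var_Q(f)-2\operatorname{Cov}(f(X),f(Y))+\|\mu-\nu\|_1^2,
\]
and bound $|\operatorname{Cov}(f(X),f(Y))|\le\sqrt{\Var_P(f)\,\Var_Q(f)}\le d\sigma^2$, yielding $\ex{(f(X)-f(Y))^2}\le 4d\sigma^2+\|\mu-\nu\|_1^2$. Plugging in and using $\pr{X\ne Y}\le 1/2$ gives $\|\mu-\nu\|_1^2\le \tfrac12(4d\sigma^2+\|\mu-\nu\|_1^2)$, which rearranges to $\|\mu-\nu\|_1\le 2\sqrt{d\sigma^2}\le 4\sqrt{d\sigma^2}$.

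The only step with any real content is the second-moment bound, and even there the work is essentially: use the product structure for the marginal variances, and bound the cross-covariance by Cauchy--Schwarz. The one thing to be careful about is that the $\|\mu-\nu\|_1^2$ term appearing on the right-hand side could in principle absorb the left-hand side; this is why the hypothesis $\tv(P,Q)\le 1/2$ (strictly bounded away from $1$) is essential and must be used before rearranging. Everything else is routine.
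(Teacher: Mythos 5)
Your proof is correct, and it takes a genuinely different route from the paper's. Both arguments start identically: choose the sign vector $s=\sign(\mu-\mu')$ and the linear test function $f(x)=\langle s,x\rangle$, and use the product structure only to bound $\Var_P(f),\Var_Q(f)\le d\sigma^2$. The paper then lower-bounds $\tv(P,Q)$ by the gap in the probabilities that $\langle s,X\rangle$ exceeds the carefully chosen threshold $t=\langle s,\mu\rangle-\tfrac{\tau}{2}$ (with $\tau=\|\mu-\mu'\|_1$) and applies Chebyshev's inequality separately under $P$ and $Q$, arriving at $\tfrac12\ge 1-\tfrac{8d\sigma^2}{\tau^2}$ and hence $\tau\le 4\sqrt{d\sigma^2}$. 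You instead invoke the maximal coupling characterization of total variation and apply Cauchy--Schwarz to $(f(X)-f(Y))\,\id[X\ne Y]$, bounding the second moment via the variance decomposition and a Cauchy--Schwarz bound on the cross-covariance; the self-referential $\|\mu-\mu'\|_1^2$ term on the right is absorbed exactly because $\tv(P,Q)\le\tfrac12$, and the means are finite by the variance hypothesis so the rearrangement is legitimate. Your route avoids choosing a threshold, yields the slightly better constant $2\sqrt{d\sigma^2}$, and isolates the role of the TV hypothesis cleanly; the paper's route is marginally more elementary in that it uses only the event-based definition of total variation and tail bounds under each distribution separately, without appealing to the existence of a maximal coupling. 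Both use the product assumption in the same single place, so neither generalizes further than the other.
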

\begin{proof}
	Let $\mu=\ex[X\leftarrow P]{X} \in \mathbb{R}^d$ and $\mu'=\ex[X\leftarrow Q]{X} \in \mathbb{R}^d$. Let $\tau=\|\mu-\mu'\|_1$. Let $\nu = \mathrm{sign}(\mu-\mu') \in \{-1,+1\}^d$ so that $\langle \nu , \mu-\mu' \rangle = \tau$. We have
	\begin{align*}
	\frac12 \ge \tv(P,Q) &\ge \pr[X \leftarrow P]{\langle \nu , X \rangle \ge t} - \pr[X \leftarrow Q]{\langle \nu , X \rangle \ge t}\\
	&=  \pr[X \leftarrow P]{\langle \nu , X - \mu \rangle \ge t - \langle \nu, \mu \rangle} - \pr[X \leftarrow Q]{\langle \nu , X - \mu' \rangle \ge t - \langle \nu, \mu \rangle + \langle \nu, \mu-\mu' \rangle }\\
	(\text{set $t= \langle \nu, \mu \rangle - \frac{\tau}{2}$})~~&=  \pr[X \leftarrow P]{\langle \nu , X - \mu \rangle \ge -\frac{\tau}{2}} - \pr[X \leftarrow Q]{\langle \nu , X - \mu' \rangle \ge +\frac{\tau}{2} }\\
	&=  1-\pr[X \leftarrow P]{\langle \nu , X - \mu \rangle < -\frac{\tau}{2}} - \pr[X \leftarrow Q]{\langle \nu , X - \mu' \rangle \ge +\frac{\tau}{2} }\\
	(\text{Chebyshev's inequality})~~&\ge 1 - \frac{\ex[X \leftarrow P]{\langle \nu , X - \mu \rangle^2}}{(\tau/2)^2}  - \frac{\ex[X \leftarrow Q]{\langle \nu , X - \mu' \rangle^2}}{(\tau/2)^2} \\
	&= 1 - \frac{4}{\tau^2} \sum_{i=1}^d \ex[X \leftarrow P]{(X_i-\mu_i)^2}+ \ex[X \leftarrow Q]{(X_i-\mu'_i)^2}\\
	&\ge 1 - \frac{8d\sigma^2}{\tau^2}.
	\end{align*}
	Rearranging yields $\tau \le 4\sqrt{d\sigma^2}$, as required.
\end{proof}

\subsection{Gaussian Distributions}
\label{sec:gaussian-app}
We next give private algorithms for learning Gaussian distributions.
\begin{definition}
  A \emph{Gaussian distribution} $\mathcal{N}(\mu, \Sigma)$ in $\mathbb{R}^d$ is a distribution with PDF
  $$p(x) = \frac{\exp\left(-\frac{1}{2}(x - \mu)^T\Sigma^{-1}(x - \mu)\right)}{\sqrt{(2\pi)^d|\Sigma|}}.$$
\end{definition}

We describe covers for Gaussian distributions with known and unknown covariance.
\begin{lemma}
  \label{lem:gaussian-mean-cover}
  There exists an $\alpha$-cover of the set of Gaussian distributions $\mathcal{N}(\mu, I)$ in $d$ dimensions with $\|\mu\|_2 \leq R$ of size
  $$O\left(\frac{dR}{\alpha}\right)^d .$$
\end{lemma}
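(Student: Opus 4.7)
The plan is to reduce the problem of covering the Gaussians in total variation distance to the problem of covering the mean ball in Euclidean norm. First, I would establish the standard identity for the Kullback--Leibler divergence between identity-covariance Gaussians,
$$\dkl(\mathcal{N}(\mu_1, I) \,\|\, \mathcal{N}(\mu_2, I)) = \tfrac{1}{2}\|\mu_1 - \mu_2\|_2^2,$$
which is a one-line computation from the Gaussian density, and then apply Pinsker's inequality to obtain $\dtv(\mathcal{N}(\mu_1, I), \mathcal{N}(\mu_2, I)) \leq \tfrac{1}{2}\|\mu_1 - \mu_2\|_2$. Consequently, any $(2\alpha)$-cover of the ball $B_R = \{\mu \in \mathbb{R}^d : \|\mu\|_2 \leq R\}$ in $\ell_2$ immediately yields an $\alpha$-cover in total variation distance of the target class $\{\mathcal{N}(\mu, I) : \|\mu\|_2 \leq R\}$.

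Next I would construct an explicit cover of $B_R$ by gridding the enclosing hypercube $[-R, R]^d \supseteq B_R$. Choosing a per-coordinate spacing $s = \Theta(\alpha/\sqrt{d})$ guarantees that every $\mu \in B_R$ lies within $\ell_\infty$-distance $s/2$ of some grid point $\hat\mu$, and hence within $\ell_2$-distance $\sqrt{d}\cdot s/2 = O(\alpha)$, which is sufficient after adjusting constants. The number of grid points along each coordinate is at most $O(R/s) = O(R\sqrt{d}/\alpha)$, so the total number of grid points is bounded by $O(R\sqrt{d}/\alpha)^d \leq O(dR/\alpha)^d$, matching the claimed cardinality. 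The final cover consists of the distributions $\{\mathcal{N}(\hat\mu, I) : \hat\mu \in G\}$ where $G$ is the chosen grid (intersected with a slightly inflated ball, to ensure that every $\mu \in B_R$ has a neighbor in $G$).

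I do not expect any genuine obstacle here. Once the bound $\dtv \leq \tfrac{1}{2}\|\mu_1-\mu_2\|_2$ is in place, both the cover construction and the counting argument are entirely routine. The only mildly delicate point is choosing the norm (gridding in $\ell_\infty$ and converting to $\ell_2$ via the $\sqrt{d}$ factor) so that the final cardinality lands exactly at the stated $O(dR/\alpha)^d$ bound; a direct volumetric $\ell_2$-cover of $B_R$ would in fact give a sharper $(O(R/\alpha))^d$ bound, but the coordinate grid is the most transparent way to obtain the claimed estimate.
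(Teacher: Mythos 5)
Your proposal is correct and takes essentially the same route as the paper: reduce covering in total variation to covering the mean ball in $\ell_2$ (the paper cites this correspondence as well-known, whereas you derive it explicitly via the KL formula and Pinsker's inequality) and then build a coordinate-wise grid. Your spacing $\Theta(\alpha/\sqrt{d})$ is slightly finer than the paper's per-coordinate granularity $\Theta(\alpha/d)$, yielding a marginally smaller grid of size $O(R\sqrt{d}/\alpha)^d$, but both constructions fall within the claimed $O\left(\frac{dR}{\alpha}\right)^d$ bound.
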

\begin{proof}
  It is well-known that estimating a Gaussian distribution with unknown mean in total variation distance corresponds to estimating $\mu$ in $\ell_2$-distance (see, e.g.,~\cite{DiakonikolasKKLMS16}).
  By the triangle inequality, in order to $\alpha$-cover the space, it suffices to $(\alpha/d)$-cover each standard basis direction.
  Since we know the mean in each direction is bounded by $R$, a simple additive grid in each direction with granularity $\Theta\left(\frac{\alpha}{d}\right)$ will suffice, resulting in a cover for each direction of size $O\left(\frac{dR}{\alpha}\right)$.
  Taking the Cartesian product over $d$ dimensions gives the desired result.
\end{proof}

\begin{lemma}\label{lem:gaussian-var-cover}
  There exists an $\alpha$-cover of the set of Gaussian distributions $\mathcal{N}(\mu, \Sigma)$ in $d$-dimensions with $\|\mu\|_2 \leq R$ and $I \preceq \Sigma \preceq \kappa I$ of size
  $$O\left(\frac{dR}{\alpha}\right)^d \cdot O\left(\frac{d\kappa}{\alpha}\right)^{d(d+1)/2}.$$
\end{lemma}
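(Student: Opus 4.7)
The plan is to build the cover as a Cartesian product of a cover for the mean and a cover for the covariance, using the triangle inequality for total variation to decouple the two factors. Specifically, for any target $\mathcal{N}(\mu,\Sigma)$ in the class, I will pick a cover element $\mathcal{N}(\hat\mu,\hat\Sigma)$ and bound
\[\dtv(\mathcal{N}(\mu,\Sigma),\mathcal{N}(\hat\mu,\hat\Sigma)) \le \dtv(\mathcal{N}(\mu,\Sigma),\mathcal{N}(\hat\mu,\Sigma)) + \dtv(\mathcal{N}(\hat\mu,\Sigma),\mathcal{N}(\hat\mu,\hat\Sigma)),\]
so it suffices to cover the mean (with covariance held fixed) and the covariance (with mean held fixed) each to TV accuracy $\alpha/2$.

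For the mean factor I will use the standard identity that TV distance between Gaussians with common covariance $\Sigma$ is controlled by $\|\Sigma^{-1/2}(\mu-\hat\mu)\|_2$. Since $\Sigma \succeq I$ throughout the class, $\Sigma^{-1/2} \preceq I$ and this reduces to $\|\mu-\hat\mu\|_2$. Hence Lemma~\ref{lem:gaussian-mean-cover} applied directly gives a cover of size $O(dR/\alpha)^d$ that is valid uniformly over every allowed $\Sigma$.

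For the covariance factor I will invoke the Devroye--Mehrabian--Reddad style bound $\dtv(\mathcal{N}(\mu,\Sigma),\mathcal{N}(\mu,\hat\Sigma)) \le O(\|\Sigma^{-1/2}\hat\Sigma\Sigma^{-1/2}-I\|_F) = O(\|\Sigma^{-1/2}(\hat\Sigma-\Sigma)\Sigma^{-1/2}\|_F)$, then use $\Sigma^{-1/2}\preceq I$ again to upper bound this by $O(\|\hat\Sigma-\Sigma\|_F)$. So it suffices to produce an $\alpha$-cover in Frobenius norm of the set of symmetric matrices satisfying $I\preceq \Sigma\preceq \kappa I$. Such matrices have $d(d+1)/2$ free entries, each lying in $[-\kappa,\kappa]$ (diagonal in $[1,\kappa]$, off-diagonal in $[-\kappa,\kappa]$). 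Gridding each entry additively at granularity $\Theta(\alpha/d)$ gives $O(d\kappa/\alpha)$ choices per entry and a total Frobenius error of at most $\sqrt{d(d+1)/2}\cdot \Theta(\alpha/d) = O(\alpha)$, yielding $O(d\kappa/\alpha)^{d(d+1)/2}$ covariance cover elements. Multiplying the two factors gives the claimed bound.

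The main nuisance I expect is the last step: naively gridding entries may produce matrices that violate $I\preceq\hat\Sigma\preceq\kappa I$ or even fail to be positive semidefinite, so the elements of the ``cover'' are not all legal distributions. The standard fix is to snap each true $\Sigma$ to the nearest gridded symmetric matrix and, whenever that snapped matrix is not PSD in the required range, replace it by a nearby legal matrix (for instance, project onto the PSD cone and clip eigenvalues to $[1,\kappa]$), at the cost of at most an extra $O(\alpha)$ Frobenius error. Since projections in Frobenius norm are contractions, this only changes constants. I would also note that the parameter ranges $\|\mu\|_2\le R$ and $I\preceq \Sigma\preceq \kappa I$ are what control the $\log R$ and $\log\kappa$ dependences in the final cover size, exactly as in Lemma~\ref{lem:gaussian-mean-cover}.
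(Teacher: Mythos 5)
Your proposal is correct, and it matches the paper on the overall structure (triangle-inequality split into a mean cover and a covariance cover, with $\Sigma \succeq I$ used to reduce the mean part to plain $\ell_2$ gridding exactly as in Lemma~\ref{lem:gaussian-mean-cover}), but it handles the covariance discretization by a genuinely different route. The paper controls $\dtv(\mathcal{N}(0,\Sigma),\mathcal{N}(0,\hat\Sigma))$ via an \emph{entrywise} perturbation bound (Proposition 32 of~\cite{ValiantV10a}: entrywise error $\gamma$ with $I\preceq\Sigma$ gives TV error $O(d\gamma)$), and it sidesteps the positive-semidefiniteness issue you flag by gridding the entries of the Cholesky factor of $\hat\Sigma$ (bounded by $\sqrt{\kappa}$) rather than the entries of $\hat\Sigma$ itself, so every cover element is a legal covariance by construction. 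You instead use a Frobenius-norm bound $\dtv \le O\bigl(\|\Sigma^{-1/2}(\hat\Sigma-\Sigma)\Sigma^{-1/2}\|_F\bigr) \le O\bigl(\|\hat\Sigma-\Sigma\|_F\bigr)$ (valid since $\|\Sigma^{-1/2}\|_{\mathrm{op}}\le 1$), grid the matrix entries directly, and repair illegal grid points by Frobenius projection onto the convex set $\{M : I \preceq M \preceq \kappa I\}$; this works because $\Sigma$ lies in that set, projection onto a convex set is non-expansive, and the projected points number no more than the grid points, so the count $O(d\kappa/\alpha)^{d(d+1)/2}$ is unaffected. Both arguments yield the stated bound; the paper's Cholesky trick avoids any projection argument, while your version uses a cleaner and more quantitative TV-vs-Frobenius inequality and keeps the gridding on the covariance itself, at the small cost of having to argue that the repaired cover elements are still within $O(\alpha)$ (which your contraction observation does correctly).
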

\begin{proof}
  The former term is obtained similarly to the expression in Lemma~\ref{lem:gaussian-mean-cover}.
  Since $I \preceq \Sigma$, we can still bound the total variation contribution by the $\ell_2$-distance between the mean vectors.
  We thus turn our attention to the latter term.
  To construct our cover, we must argue about the total variation distance between $\mathcal{N}(0, \Sigma)$ and $\mathcal{N}(0, \hat \Sigma)$.
  If $|\Sigma(i,j) - \hat \Sigma(i,j)| \leq \gamma$, and $I \preceq \Sigma$, Proposition 32 of~\cite{ValiantV10a} implies:
  $$\dtv(\mathcal{N}(0,\Sigma), \mathcal{N}(0, \hat \Sigma)) \leq O(d\gamma).$$
  We will thus perform a gridding, in order to approximate each entry of $\Sigma$ to an additive $O(\gamma) = O(\alpha/d)$.
  However, in order to ensure that the resulting matrix is PSD, we grid over entries of $\hat \Sigma$'s Cholesky decomposition, rather than grid for $\hat \Sigma$ itself.
  Since the largest element of $\Sigma$ is bounded by $\kappa$, the larest element of its Cholesky decomposition must be bounded by $\sqrt{\kappa}$.
  An additive grid over the range $[0,\sqrt{\kappa}]$ with granularity $O(\gamma/\sqrt{\kappa})$ suffices to get $\hat \Sigma$ which bounds the entrywise distance as $O(\gamma)$.
  This requires $O(d\kappa/\alpha)$ candidates per entry, and we take the Cartesian product over all $d(d+1)/2$ entries of the Cholesky decomposition, giving the desired result.
\end{proof}

In addition, we can obtain bounds of the VC dimension of the Scheff\'e sets of Gaussian distributions.

\begin{lemma}\label{lem:vc-lin-quad}
	The set of Gaussian distributions with fixed variance -- i.e., all $\mathcal{N}(\mu,I)$ with $\mu \in \mathbb{R}^d$ -- has VC dimension $d+1$.  Furthermore, the set of Gaussians with unknown variance -- i.e., all $\mathcal{N}(\mu,\Sigma)$ with $\mu \in \mathbb{R}^d$ and $\Sigma \in \mathbb{R}^{d \times d}$ positive definite -- has VC dimension $O(d^2)$.
\end{lemma}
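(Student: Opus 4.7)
The plan is to compute the log-likelihood ratio $\log(H(x)/H'(x))$ for the two families in closed form, observe that the Scheff\'e sets are exactly the sublevel sets of these functions, and then invoke standard VC bounds for affine and polynomial threshold functions.

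For the fixed-variance case, pick $H = \mathcal{N}(\mu,I)$ and $H' = \mathcal{N}(\mu',I)$. A direct calculation gives
\[
\log \frac{H(x)}{H'(x)} = (\mu - \mu')^\top x + \tfrac{1}{2}\left(\|\mu'\|_2^2 - \|\mu\|_2^2\right),
\]
because the quadratic $x^\top x$ terms cancel. Thus $f_{H,H'}(x) = 1$ iff $\langle a, x\rangle + b > 0$ for $a = \mu-\mu' \in \mathbb{R}^d$ and a scalar $b$ depending on $\mu,\mu'$. Every $f_{H,H'}$ is therefore the indicator of an affine halfspace in $\mathbb{R}^d$, and since the VC dimension of affine halfspaces in $\mathbb{R}^d$ is $d+1$, we get the upper bound $d+1$. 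For the matching lower bound, I would show that any affine halfspace with nonzero normal can be realized: given arbitrary $a \neq 0$ and $b \in \mathbb{R}$, set $\mu' = \mu + a$ and solve the scalar equation $\langle\mu,a\rangle = -b - \tfrac12\|a\|_2^2$ for $\mu$, which is always possible. Hence the collection $\mathcal{F}(\mathcal{H})$ contains every nontrivial affine halfspace and can shatter the $d+1$ vertices of a regular simplex, giving VC dimension exactly $d+1$.

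For the unknown-variance case, the same computation for $H = \mathcal{N}(\mu,\Sigma)$ and $H' = \mathcal{N}(\mu',\Sigma')$ gives
\[
\log \frac{H(x)}{H'(x)} = -\tfrac12 x^\top (\Sigma^{-1} - \Sigma'^{-1}) x + (\Sigma^{-1}\mu - \Sigma'^{-1}\mu')^\top x + c(\mu,\mu',\Sigma,\Sigma'),
\]
a quadratic polynomial in $x$. Thus $f_{H,H'}$ is the indicator of a set of the form $\{x : x^\top A x + b^\top x + c > 0\}$ for some symmetric $A \in \mathbb{R}^{d \times d}$, $b \in \mathbb{R}^d$, $c \in \mathbb{R}$. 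Using the feature map $\phi(x) = (1,\, x,\, \mathrm{vec}(xx^\top))$ into $\mathbb{R}^{N}$ with $N = 1 + d + d(d+1)/2 = O(d^2)$, each such quadratic region is the pullback of an affine halfspace in $\mathbb{R}^N$. The VC dimension of halfspaces in $\mathbb{R}^N$ is $N+1$, and pullback under a fixed map can only decrease VC dimension, so $\mathcal{F}(\mathcal{H})$ has VC dimension at most $O(d^2)$.

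I do not anticipate a genuine obstacle here; the only mildly subtle point is the lower-bound direction of the first part, where I must verify that the map $(\mu,\mu') \mapsto (a,b)$ is surjective onto $(\mathbb{R}^d \setminus\{0\}) \times \mathbb{R}$, so that the realizable halfspaces are rich enough to shatter a simplex. The quadratic-to-linear embedding in the second part is standard, so the $O(d^2)$ bound drops out immediately.
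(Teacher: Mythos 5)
Your proposal is correct and follows essentially the same route as the paper, which simply observes that the Scheff\'e sets are linear threshold functions (VC dimension $d+1$) in the fixed-variance case and quadratic threshold functions (VC dimension $\binom{d+2}{2}=O(d^2)$, citing Anthony) in the general case; your explicit log-likelihood-ratio computation, feature-map embedding, and lower-bound surjectivity check just flesh out what the paper leaves to citation. The only nit is a sign slip in the lower bound: with $a=\mu-\mu'$ you should take $\mu'=\mu-a$ and solve $\langle\mu,a\rangle=\tfrac12\|a\|_2^2-b$, which is equally always solvable, so the argument goes through unchanged.
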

\begin{proof}
	For Gaussians with fixed variance, the Scheff\'e sets correspond to linear threshold functions, which have VC dimension $d+1$. For Gaussians with unknown variance, the Scheff\'e sets correspond to quadratic threshold functions, which have VC dimension ${d+2 \choose 2} = O(d^2)$~\cite{Anthony95}.
\end{proof}

Combining the covers of Lemmas \ref{lem:gaussian-mean-cover}  and 
\ref{lem:gaussian-var-cover} and the VC bound of Lemma \ref{lem:vc-lin-quad} with Theorem \ref{thm:sel-inf} implies the following corollaries for Gaussian estimation.
\begin{corollary}
\label{cor:learn-gauss-mean}
  Suppose we are given a set of samples $X_1, \dots, X_n \sim P$, where $P$ is $\alpha$-close to a Gaussian distribution $\mathcal{N}(\mu, I)$ in $d$-dimensions with $\|\mu\| \leq R$.
  Then for any constant $\zeta > 0$, there exists an $\ve$-differentially private algorithm which outputs a Gaussian distribution $H^*$ such that $\dtv(P,H^*) \leq (6+2\zeta)\alpha$ with probability $\geq 9/10$, so long as
  $$n = \Omega\left(\frac{d}{\alpha^2} + \frac{d}{\alpha\varepsilon} \log\left(\frac{dR}{\alpha}\right)\right).$$
\end{corollary}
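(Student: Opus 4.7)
The plan is to apply Theorem~\ref{thm:sel-inf} directly, instantiating the hypothesis class as the $\alpha$-cover from Lemma~\ref{lem:gaussian-mean-cover} and feeding in the VC bound from Lemma~\ref{lem:vc-lin-quad}. First, I would observe that if $P$ lies within total variation distance $\alpha$ of some $\mathcal{N}(\mu,I)$ with $\|\mu\|_2\le R$, then the triangle inequality guarantees the cover contains a hypothesis $H^\dagger$ with $\dtv(P,H^\dagger)\le 2\alpha$. So the ``realizable'' premise of Theorem~\ref{thm:sel-inf} is satisfied at accuracy parameter $2\alpha$, and I would run the algorithm of that theorem on the finite cover $\mathcal{C}_\alpha$, specializing to pure differential privacy by taking $\delta=0$ (thus using the $\log|\mathcal{H}|$ branch of the minimum in the sample complexity).

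Next I would substitute the two key quantitative inputs. Lemma~\ref{lem:vc-lin-quad} bounds the VC dimension of the Scheff\'e sets by $d+1 = O(d)$, so the first term of Theorem~\ref{thm:sel-inf} contributes $O(d/\alpha^2)$ (using constant failure probability $\beta=1/10$, and absorbing the factor $4$ from replacing $\alpha$ by $2\alpha$ into the $\Omega$). Lemma~\ref{lem:gaussian-mean-cover} gives $|\mathcal{C}_\alpha|=O(dR/\alpha)^d$, so $\log|\mathcal{C}_\alpha|=O(d\log(dR/\alpha))$; and since the local count $k$ is trivially at most $|\mathcal{C}_\alpha|$, the $(\log(k/\beta)+\log|\mathcal{H}|)/(\alpha\varepsilon)$ term collapses to $O(d\log(dR/\alpha)/(\alpha\varepsilon))$. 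Adding the two gives exactly the bound claimed in the corollary.

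The one point that deserves care is the approximation factor. Theorem~\ref{thm:sel-inf} as stated gives a $7\alpha$ bound in the realizable case, so naive substitution of $2\alpha$ for $\alpha$ yields $14\alpha$ rather than $(6+2\zeta)\alpha$. The resolution is that the constants $3\alpha$ and $6\alpha$ appearing in the score function of Lemma~\ref{lem:sel-unif-conv} were fixed for convenience; sharpening them in analogy with the parameter $\zeta$ of the \textsc{Pairwise Contest} subroutine and Theorem~\ref{thm:tournament} yields a $(3+\zeta)\alpha$ guarantee in the realizable case, which then becomes $(6+2\zeta)\alpha$ after the semi-agnostic doubling. I expect this tightening of constants to be the only nonroutine step; the cover size, VC bound, and privacy argument all plug in directly.
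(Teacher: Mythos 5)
Your proposal is correct and follows essentially the same route as the paper, whose entire derivation is the one-line remark that combining the cover of Lemma~\ref{lem:gaussian-mean-cover} and the VC bound of Lemma~\ref{lem:vc-lin-quad} with Theorem~\ref{thm:sel-inf} (taking $\delta=0$ so the $\log|\mathcal{H}|$ branch applies, and bounding $k\le|\mathcal{C}_\alpha|$) yields the stated sample complexity. Your remark that a verbatim application of Theorem~\ref{thm:sel-inf} at accuracy $2\alpha$ only gives $14\alpha$, and that one must $\zeta$-parameterize the constants in the score function of Lemma~\ref{lem:sel-unif-conv} (as in the \textsc{Pairwise Contest} analysis) to recover the advertised $(6+2\zeta)\alpha$, correctly identifies and repairs a constant-factor issue the paper glosses over.
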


\begin{corollary}
\label{cor:learn-gauss-cov}
  Suppose we are given a set of samples $X_1, \dots, X_n \sim P$, where $P$ is $\alpha$-close to a Gaussian distribution $\mathcal{N}(\mu, \Sigma)$ in $d$-dimensions with $\|\mu\| \leq R$ and $I \preceq \Sigma \preceq \kappa I$.
  Then for any constant $\zeta > 0$, there exists an $\ve$-differentially private algorithm which outputs a Gaussian distribution $H^*$ such that $\dtv(P,H^*) \leq (6+2\zeta)\alpha$ with probability $\geq 9/10$, so long as
  $$n = \Omega\left(\frac{d^2}{\alpha^2} + \frac{1}{\alpha\varepsilon}\left(d \log\left(\frac{dR}{\alpha}\right) + d^2 \log\left(\frac{d\kappa}{\alpha}\right) \right)\right).$$
\end{corollary}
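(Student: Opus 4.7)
The plan is to combine three ingredients already established in the paper—the explicit $\alpha$-cover of Lemma~\ref{lem:gaussian-var-cover}, the quadratic-threshold VC bound of Lemma~\ref{lem:vc-lin-quad}, and the advanced selection result of Theorem~\ref{thm:sel-inf}—in exactly the same way that Corollary~\ref{cor:learn-gauss-mean} was obtained in the known-covariance case. The only change is that the covariance now contributes an additional $O(d^2 \log(d\kappa/\alpha))$ factor to the cover size, and that the relevant Scheff\'e sets become quadratic rather than linear thresholds.

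First, I would instantiate Lemma~\ref{lem:gaussian-var-cover} to obtain an $\alpha$-cover $\mathcal{C}_\alpha$ of the class $\mathcal{G} = \{\mathcal{N}(\mu,\Sigma) : \|\mu\|_2 \le R,\; I \preceq \Sigma \preceq \kappa I\}$ of size satisfying
\[
\log|\mathcal{C}_\alpha| \;=\; O\!\left(d\log(dR/\alpha) + d^2 \log(d\kappa/\alpha)\right).
\]
Because $\dtv(P,\mathcal{G}) \le \alpha$, the triangle inequality guarantees some $H^* \in \mathcal{C}_\alpha$ with $\dtv(P,H^*) \le 2\alpha$, which is the realizability hypothesis required for Theorem~\ref{thm:sel-inf} at proximity parameter $2\alpha$.

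Next, I would bound the VC dimension of the Scheff\'e class $\{f_{H,H'} : H,H' \in \mathcal{C}_\alpha\}$. Since $\mathcal{C}_\alpha \subseteq \mathcal{G}$ and restricting the hypothesis class cannot enlarge the induced function family, Lemma~\ref{lem:vc-lin-quad} immediately gives a VC dimension of $O(d^2)$. With these two facts in hand, I would invoke Theorem~\ref{thm:sel-inf} with $\mathcal{H} = \mathcal{C}_\alpha$, failure probability $\beta = 1/10$, and the pure $\varepsilon$-differential privacy branch (i.e., taking $\log|\mathcal{H}|$ rather than $\log(1/\delta)$ in the minimum, so that we can set $\delta = 0$). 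Substituting $d_{\mathrm{VC}} = O(d^2)$ and the cover size above into the sample complexity of Theorem~\ref{thm:sel-inf}, and absorbing $\log(k/\beta) \le \log|\mathcal{H}| + O(1)$ into the $\log|\mathcal{H}|$ term, yields precisely
\[
n \;=\; \Omega\!\left(\frac{d^2}{\alpha^2} + \frac{1}{\alpha\varepsilon}\Bigl(d\log(dR/\alpha) + d^2\log(d\kappa/\alpha)\Bigr)\right).
\]

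Essentially no step is a genuine obstacle; this is a direct assembly. The only mild subtlety is the approximation constant: Theorem~\ref{thm:sel-inf} is stated with a literal factor of $7$, which applied at proximity $2\alpha$ would give $14\alpha$ rather than $(6+2\zeta)\alpha$. To recover the sharper constant, I would replace the fixed thresholds in Lemma~\ref{lem:sel-unif-conv} with the $\zeta$-tuned thresholds already used in Algorithm~\ref{alg:friday} and Lemma~\ref{lem:tournament-utility}; this sharpens the guarantee to $(3+\zeta)\alpha'$ per application, and setting $\alpha' = 2\alpha$ recovers $(6+2\zeta)\alpha$. This recalibration is routine and is the only place where care is needed.
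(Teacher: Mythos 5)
Your proposal is correct and matches the paper's own derivation, which likewise obtains this corollary by plugging the cover of Lemma~\ref{lem:gaussian-var-cover} and the $O(d^2)$ VC bound of Lemma~\ref{lem:vc-lin-quad} into Theorem~\ref{thm:sel-inf} (using the pure-DP branch of the minimum with $\delta=0$). Your observation about the approximation constant is a genuine subtlety that the paper glosses over---Theorem~\ref{thm:sel-inf} as stated gives a factor $7$ at proximity $2\alpha$, so recovering $(6+2\zeta)\alpha$ does require the $\zeta$-tuned thresholds you describe---and your proposed recalibration is the right fix.
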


Similar to the product distribution case, these are the first $\tilde O(d)$ and $\tilde O(d^2)$ sample algorithms for learning Gaussians total variation distance under pure differential privacy, improving upon the concentrated differential privacy results of Kamath, Li, Singhal, and Ullman~\cite{KamathLSU19}.

\subsubsection{Gaussians with Unbounded Mean}
Extending Corollary \ref{cor:learn-gauss-mean}, we consider multivariate Gaussian hypotheses with known covariance and unknown mean, \emph{without} assuming bound on the mean (the parameter $R$ in the discussion above). To handle the unbounded mean we must relax to approximate differential privacy. 

In place of Lemma \ref{lem:gaussian-mean-cover}, we construct a locally small cover:
\begin{lemma}
	\label{lem:gaussian-mean-cover2}
	For any $d \in \mathbb{N}$ and $\alpha \in (0,1/30]$, there exists an $\alpha$-cover  $\mathcal{C}_\alpha$ of the set of Gaussian distributions $\mathcal{N}(\mu, I)$ in $d$ dimensions satisfying
	$$\forall \mu \in \mathbb{R}^d  ~~ \left|\left\{ H \in \mathcal{C}_\alpha : \tv(H,\mathcal{N}(\mu, I)) \le 7\alpha \right\}\right| \le 2^{15d}.$$
\end{lemma}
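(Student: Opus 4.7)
The plan is to build the cover as the image of a maximal Euclidean packing under $\mu \mapsto \mathcal{N}(\mu,I)$, exploiting the fact that for identity-covariance Gaussians the total variation distance is a monotone function of the $\ell_2$ distance between the means. Concretely, define $f : [0,\infty) \to [0,1)$ by
\[
f(r) \;=\; \dtv\bigl(\mathcal{N}(0,I),\,\mathcal{N}(r e_1, I)\bigr) \;=\; 2\Phi(r/2)-1,
\]
so that $\dtv(\mathcal{N}(\mu,I),\mathcal{N}(\mu',I)) = f(\|\mu-\mu'\|_2)$. Let $r_\alpha = f^{-1}(\alpha)$. Choose $N \subset \mathbb{R}^d$ to be a maximal $r_\alpha$-packing of $\mathbb{R}^d$ in $\ell_2$ (such a set exists by Zorn's lemma), and set $\mathcal{C}_\alpha = \{\mathcal{N}(\nu,I) : \nu \in N\}$. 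Maximality of the packing implies that $N$ is also an $r_\alpha$-net, so every $\mathcal{N}(\mu,I)$ lies within TV distance $\alpha$ of some element of $\mathcal{C}_\alpha$; thus $\mathcal{C}_\alpha$ is an $\alpha$-cover.

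For local smallness, fix $\mu \in \mathbb{R}^d$. The cover elements within TV distance $7\alpha$ of $\mathcal{N}(\mu,I)$ correspond to points $\nu \in N$ with $\|\nu-\mu\|_2 \le r_{7\alpha}$. Since distinct elements of $N$ are pairwise $\ell_2$-separated by more than $r_\alpha$, the open balls of radius $r_\alpha/2$ around these $\nu$ are disjoint and contained in the ball of radius $r_{7\alpha}+r_\alpha/2$ around $\mu$. A volume comparison yields the bound
\[
\bigl|\{\nu \in N : \|\nu-\mu\|_2 \le r_{7\alpha}\}\bigr| \;\le\; \left(1 + \frac{2 r_{7\alpha}}{r_\alpha}\right)^{\!d}.
\]

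The main (and only really nontrivial) step is to control the ratio $r_{7\alpha}/r_\alpha$ for $\alpha \le 1/30$. Since $f'(r) = \phi(r/2) = \tfrac{1}{\sqrt{2\pi}} e^{-r^2/8}$, we have $f'(0) = 1/\sqrt{2\pi}$ and, on $r \in [0,1]$, $f'(r) \ge e^{-1/8}/\sqrt{2\pi}$. Integrating gives the two-sided bound
\[
\frac{e^{-1/8}}{\sqrt{2\pi}}\, r \;\le\; f(r) \;\le\; \frac{1}{\sqrt{2\pi}}\, r \qquad (0 \le r \le 1),
\]
which, after inversion, yields $\sqrt{2\pi}\,\alpha \le f^{-1}(\alpha) \le \sqrt{2\pi}\, e^{1/8}\,\alpha$ whenever $\alpha \le f(1)$. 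For $\alpha \le 1/30$ we have $7\alpha \le 7/30 < f(1) \approx 0.383$, so both $r_\alpha$ and $r_{7\alpha}$ lie in the range where this estimate applies, giving $r_{7\alpha}/r_\alpha \le 7 e^{1/8} < 8$.

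Combining the two estimates, the local count is at most $(1 + 16)^d = 17^d \le 2^{15d}$, which is the required bound. The only technical subtlety is the comparison between $f^{-1}(7\alpha)$ and $f^{-1}(\alpha)$; everything else is a standard maximal-packing construction and volume argument.
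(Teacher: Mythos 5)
Your proof is correct, but it takes a genuinely different route from the paper. The paper constructs an explicit cover, the rescaled lattice $\{\mathcal{N}(m\cdot \alpha\sqrt{8\pi}/\sqrt{d},\,I): m\in\mathbb{Z}^d\}$, bounds the TV distance above and below by explicit functions of $\|\mu-\mu'\|_2$, and then counts the lattice points in an $\ell_2$-ball of radius $O(\sqrt{d})$ by passing to an $\ell_1$-ball and a stars-and-bars argument, arriving at $\binom{15d-2}{2d-1}\le 2^{15d}$. You instead take a maximal $r_\alpha$-separated set of means (Zorn), use the exact identity $\dtv(\mathcal{N}(\mu,I),\mathcal{N}(\mu',I))=2\Phi(\|\mu-\mu'\|_2/2)-1$ to convert both the covering radius and the $7\alpha$-neighborhood into Euclidean radii, control the ratio $r_{7\alpha}/r_\alpha\le 7e^{1/8}<8$ by a derivative bound on $[0,1]$ (valid since $7\alpha\le 7/30<2\Phi(1/2)-1$), and finish with the standard disjoint-balls volume comparison, giving a local count of at most $17^d\le 2^{15d}$. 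Your route is cleaner and more robust -- the packing-plus-volume argument transfers to any family where TV is a monotone function of a metric on parameters, and it even yields a slightly better constant than the paper's $2^{15d}$ -- while the paper's lattice construction has the advantage of being fully explicit rather than relying on a non-constructive maximal packing, which can matter if one wants to actually instantiate the cover algorithmically. Since the lemma only asserts existence, both proofs are complete as stated.
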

\begin{proof}
	For $\mu,\mu'\in\mathbb{R}^d$, we have 
	\begin{align*}
	\tv(\mathcal{N}(\mu,I),\mathcal{N}(\mu',I)) &= 2\pr{\mathcal{N}(0,1) \in \left[0,\frac12 \|\mu-\mu'\|_2\right]}\\
	&= \sqrt{\frac{2}{\pi}}\int_0^{\frac12 \|\mu-\mu'\|_2} e^{-x^2/2} \mathrm{d}x\\
	&\le \frac{\|\mu-\mu'\|_2}{\sqrt{2\pi}}.
	\end{align*}
	Furthermore, for any $c>0$, $$\tv(\mathcal{N}(\mu,I),\mathcal{N}(\mu',I)) \ge \left\{\begin{array}{cl} \frac{\|\mu-\mu'\|_2}{\sqrt{2\pi}} \cdot e^{-c^2/2} & \text{ if } \frac12\|\mu-\mu'\|_2 \le c \\ \frac{c \cdot e^{-c^2/2}}{\sqrt{2\pi}} & \text{ if } \frac12\|\mu-\mu'\|_2 \ge c \end{array}\right..$$
	
	Let $$\mathcal{C}_\alpha = \left\{\mathcal{N}\left(m \cdot \frac{\alpha\sqrt{8\pi}}{\sqrt{d}}, I\right) : m \in \mathbb{Z}^d\right\}.$$
	Fix $\mu \in \mathbb{R}^d$. Let $\mu^*= \mu \frac{\sqrt{d}}{\alpha \sqrt{8\pi}}\in\mathbb{R}^d$ and let $m \in \mathbb{Z}^d$ be $\mu^*$ rounded to the nearest integer coordinate-wise, so that $\|m-\mu^*\|_\infty \le \frac12$. Then \begin{align*}
		\tv\left(\mathcal{N}(\mu,I),\mathcal{N}\left(m \cdot \frac{\alpha\sqrt{8\pi}}{\sqrt{d}},I\right)\right) &= \tv\left(\mathcal{N}\left(\mu^* \cdot \frac{\alpha\sqrt{8\pi}}{\sqrt{d}},I\right),\mathcal{N}\left(m \cdot \frac{\alpha\sqrt{8\pi}}{\sqrt{d}},I\right)\right)\\
		&\le \frac{1}{\sqrt{2\pi}} \frac{\alpha\sqrt{8\pi}}{\sqrt{d}} \|\mu^*-m\|_2\\
		&\le \alpha,
	\end{align*}
	since $\|\mu^*-m\|_2 \le \sqrt{d}\|\mu^*-m\|_\infty \le \frac{\sqrt{d}}{2}$. This proves that $\mathcal{C}_\alpha$ is a $\alpha$-cover of $\{\mathcal{N}(\mu,I):\mu\in\mathbb{R}^d\}$.
	
	It remains to show that the cover is ``locally small''. Let $m'\in\mathbb{Z}^d$. Then
	\begin{align*}
		\tv\left(\mathcal{N}(\mu,I),\mathcal{N}\left(m' \cdot \frac{\alpha\sqrt{8\pi}}{\sqrt{d}},I\right)\right) &= \tv\left(\mathcal{N}\left(\mu^* \cdot \frac{\alpha\sqrt{8\pi}}{\sqrt{d}},I\right),\mathcal{N}\left(m' \cdot \frac{\alpha\sqrt{8\pi}}{\sqrt{d}},I\right)\right)\\
		&\ge  \frac{c \cdot e^{-c^2/2}}{\sqrt{2\pi}}  ~~~~\text{ if } \frac12\|\mu^*-m'\|_2 \frac{\alpha\sqrt{8\pi}}{\sqrt{d}} \ge c \\
		&> 7\alpha ~~~~\text{ if } \|\mu^*-m'\|_2 \ge 30 \frac{\sqrt{d}}{\sqrt{2\pi}},
	\end{align*}
	where the final inequality follows by setting $c=30\alpha \le 1$. Thus 
	\begin{align*}
	\left|\left\{ H \in \mathcal{C}_\alpha : \tv(H,\mathcal{N}(\mu, I)) \le 7\alpha \right\}\right| &\le \left|\left\{ m' \in \mathbb{Z}^d :  \|\mu^*-m'\|_2 < 30 \frac{\sqrt{d}}{\sqrt{2\pi}} \right\}\right|\\	&\le \left|\left\{ m' \in \mathbb{Z}^d :  \|m-m'\|_2 < 30 \frac{\sqrt{d}}{\sqrt{2\pi}}+\|\mu^*-m'\|_2 \right\}\right|\\
	&\le \left|\left\{ m' \in \mathbb{Z}^d :  \|m-m'\|_2 < 13\sqrt{d} \right\}\right|\\
	&\le \left|\left\{ w \in \mathbb{Z}^d :  \|w\|_1 < 13d \right\}\right|.
	\end{align*}
	Now we note that any $w \in \mathbb{Z}^d$ with  $\|w\|_1 \le r$ can be written as $w=x-y$ where $x,y \in \mathbb{Z}^d$ with $\sum_{i=1}^d x_i + y_i = r$ and, for all $i \in [d]$, we have $x_i \ge 0$ and $y_i \ge 0$. Instead of counting these $w$ vectors, we can count such $(x,y)$ vector pairs. We can interpret a pair of $x,y$ vectors as a way of putting $r$ balls into $2d$ bins or $r$ ``stars'' and $2d-1$ ``bars''. We can thus count $$\left|\left\{ w \in \mathbb{Z}^d :  \|w\|_1 < 13d \right\}\right| \le \left|\left\{ x,y \in \mathbb{Z}^d :  \|x\|_1+\|y\|_2 = 13d-1, x \ge 0, y \ge 0 \right\}\right| \le {15d-2 \choose 2d-1} \le 2^{15d}.$$
\end{proof}

Applying Theorem \ref{thm:sel-inf} with the cover of Lemma \ref{lem:gaussian-mean-cover2} and the VC bound from Lemma \ref{lem:vc-lin-quad} now yields an algorithm.

\begin{corollary}
	\label{cor:learn-gauss-mean2}
	Suppose we are given a set of samples $X_1, \dots, X_n \sim P$, where $P$ is a spherical Gaussian distribution $\mathcal{N}(\mu, I)$ in $d$-dimensions.
	Then there exists a $(\ve,\delta)$-differentially private algorithm which outputs a spherical Gaussian distribution $H^*$ such that $\dtv(P,H^*) \leq 7\alpha$ with probability $\geq 1-2^{-d}$, so long as
	$$n = \Omega \left(\frac{d}{\alpha^2} + \frac{d + \log(1/\delta)}{\alpha \ve}\right).$$
\end{corollary}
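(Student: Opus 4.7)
The plan is to instantiate Theorem~\ref{thm:sel-inf} with the locally small cover $\mathcal{C}_\alpha$ from Lemma~\ref{lem:gaussian-mean-cover2} playing the role of the hypothesis class $\mathcal{H}$. Since $P = \mathcal{N}(\mu, I)$ itself is a spherical Gaussian, Lemma~\ref{lem:gaussian-mean-cover2} guarantees the existence of some $H^* \in \mathcal{C}_\alpha$ with $\tv(P, H^*) \le \alpha$, so the hypothesis~of Theorem~\ref{thm:sel-inf} is satisfied.

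Next, I would plug in the relevant parameters. The VC dimension of the Scheff\'e sets for the class $\{\mathcal{N}(\mu', I) : \mu' \in \mathbb{R}^d\}$ equals $d+1$ by Lemma~\ref{lem:vc-lin-quad}, since the PDFs' log-ratios are linear functions of $x$ and the Scheff\'e sets are halfspaces. The local size bound $k = |\{H \in \mathcal{C}_\alpha : \tv(H, P) \le 7\alpha\}| \le 2^{15d}$ is exactly what Lemma~\ref{lem:gaussian-mean-cover2} provides. Since $\mathcal{C}_\alpha$ is infinite, I would use the $\log(1/\delta)$ branch of the minimum in Theorem~\ref{thm:sel-inf} (invoking $(\varepsilon,\delta)$-differential privacy rather than pure DP), and set the failure probability parameter to $\beta = 2^{-d}$ so that $\log(1/\beta) = O(d)$ and $\log(k/\beta) = O(d)$.

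Substituting these values into the sample complexity bound of Theorem~\ref{thm:sel-inf} gives
\[
n = \Omega\!\left(\frac{(d+1) + d}{\alpha^2} + \frac{d + \log(1/\delta)}{\alpha\varepsilon}\right) = \Omega\!\left(\frac{d}{\alpha^2} + \frac{d + \log(1/\delta)}{\alpha\varepsilon}\right),
\]
which matches the asserted bound. The conclusion $\tv(P, \hat H) \le 7\alpha$ and the failure probability $\le 2^{-d}$ are then immediate from Theorem~\ref{thm:sel-inf}, and the output $\hat H \in \mathcal{C}_\alpha$ is by construction a spherical Gaussian.

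This is essentially a one-line reduction, so there is no real obstacle in the proof itself; the heavy lifting was done in establishing Lemma~\ref{lem:gaussian-mean-cover2} (which exhibits a cover whose local doubling dimension is bounded by a volume argument in $\mathbb{Z}^d$) and in Theorem~\ref{thm:sel-inf} (which replaces the $\log|\mathcal{H}|$ term by $\log k + \log(1/\delta)$ via the GAP-MAX algorithm). The only small subtlety is confirming that the $\log(k/\beta)$ and $\log(1/\beta)$ contributions truly collapse into the stated $d$-dependence when $\beta = 2^{-d}$; this is just arithmetic after noticing that $15d + d = O(d)$ absorbs into the leading $d$.
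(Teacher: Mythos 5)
Your proposal is correct and matches the paper's own derivation: the paper likewise obtains Corollary~\ref{cor:learn-gauss-mean2} by instantiating Theorem~\ref{thm:sel-inf} with the locally small cover of Lemma~\ref{lem:gaussian-mean-cover2} (giving $k \le 2^{15d}$), the VC bound $d+1$ from Lemma~\ref{lem:vc-lin-quad}, the $\log(1/\delta)$ branch of the minimum, and failure probability $\beta = 2^{-d}$. Your bookkeeping that $\log(1/\beta)$ and $\log(k/\beta)$ are both $O(d)$ is exactly the arithmetic needed, so there is nothing to add.
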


Karwa and Vadhan \cite{KarwaV18} give an algorithm for estimating a univariate Gaussian with unbounded mean. 
One can consider applying their algorithm independently to the $d$ coordinates (which is done in~\cite{KamathLSU19}), giving a sample complexity bound of $\tilde O\left(\frac{d}{\alpha^2} + \frac{d}{\alpha\varepsilon} + \frac{\sqrt{d}\log^{3/2}(1/\delta)}{\ve}\right)$, which our bound dominates except for very small values of $\alpha$.

\subsubsection{Univariate Gaussians with Unbounded Mean and Variance}

Our methods also allow us to derive learning algorithms for univariate Gaussians with unknown mean and variance.

\begin{lemma}\label{lem:univarGauss-cover2}
	For all $\alpha$ less then some absolute constant, there exists an $\alpha$-cover $\mathcal{C}_\alpha$ of the set of univariate Gaussian distributions satisfying $$\forall \mu,\sigma\in\mathbb{R} ~~~ \left| \left\{H \in \mathcal{C}_\alpha : \tv(H,\mathcal{N}(\mu,\sigma^2)) \le 7 \alpha \right\}\right|\le O(1).$$
\end{lemma}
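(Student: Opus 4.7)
The plan is to build an explicit cover that is geometric in the variance parameter and, at each variance level, arithmetic in the mean parameter with spacing proportional to the local standard deviation. Fix small constants $c_1, c_2 > 0$ (to be chosen) and set
\[
  \mathcal{C}_\alpha \;=\; \Bigl\{\, \mathcal{N}\bigl(j \cdot c_1 \alpha \cdot (1+c_2\alpha)^{i},\; (1+c_2\alpha)^{2i}\bigr) \;:\; i,j \in \mathbb{Z}\,\Bigr\}.
\]
The analysis rests on two-sided total variation bounds for univariate Gaussians. Specifically, I would use the standard fact (derivable from the closed-form Hellinger distance $H^2(\mathcal{N}(\mu,\sigma^2),\mathcal{N}(\mu',\sigma'^2))=1-\sqrt{2\sigma\sigma'/(\sigma^2+\sigma'^2)}\exp(-(\mu-\mu')^2/4(\sigma^2+\sigma'^2))$ together with $H^2 \le \tv \le H\sqrt{2}$) that there exist universal constants for which
\[
  \tv\bigl(\mathcal{N}(\mu,\sigma^2),\mathcal{N}(\mu',\sigma'^2)\bigr) \;\asymp\; \min\!\left\{1,\; \frac{|\mu-\mu'|}{\max(\sigma,\sigma')} + \frac{|\sigma^2-\sigma'^2|}{\max(\sigma,\sigma')^2}\right\}.
\]

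For the covering property, given a target $\mathcal{N}(\mu,\sigma^2)$, I would first pick $i\in\mathbb{Z}$ so that $\sigma_i := (1+c_2\alpha)^i$ satisfies $|\log(\sigma/\sigma_i)| \le \tfrac12\log(1+c_2\alpha)$, incurring relative variance error $O(c_2\alpha)$; then pick $j\in\mathbb{Z}$ to round $\mu$ to the nearest multiple of $c_1\alpha\sigma_i$, contributing mean error $\le \tfrac12 c_1 \alpha \sigma_i$. By the upper bound above, the resulting cover point is within $O((c_1+c_2)\alpha)$ in total variation, which is at most $\alpha$ once $c_1,c_2$ are chosen small enough.

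For the local smallness property, fix an arbitrary $\mathcal{N}(\mu,\sigma^2)$ and consider any cover point $\mathcal{N}(\mu',\sigma_i^2)$ with $\tv \le 7\alpha$. The lower bound (applied to the case $\mu = \mu'$, or more generally by considering the event $\{X \ge \mu\}$ or similar half-line events to isolate the variance contribution) forces $|\sigma^2-\sigma_i^2|/\max(\sigma,\sigma_i)^2 = O(\alpha)$, so $\sigma_i/\sigma$ lies in an interval of multiplicative width $1+O(\alpha)$. Since the variance levels of $\mathcal{C}_\alpha$ are spaced by a factor $1+c_2\alpha$, only $O(1)$ indices $i$ are consistent with the constraint. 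For each such admissible $i$, the lower bound similarly forces $|\mu-\mu'|/\sigma_i = O(\alpha)$, so $\mu'$ lies in an interval of length $O(\alpha\sigma_i)$; since the cover means at level $i$ are spaced by $c_1\alpha\sigma_i$, only $O(1)$ values of $j$ qualify. Multiplying the two $O(1)$ counts gives the required bound.

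The main obstacle is verifying the lower bound cleanly when both parameters drift simultaneously, since a well-chosen combination of mean and variance shifts can partially cancel. I would handle this either (i) by case-splitting on whether $|\mu-\mu'|/\sigma$ or $|\sigma^2-\sigma'^2|/\sigma^2$ dominates and in each case exhibiting a single half-line test event that already certifies TV distance $\Omega(\text{that term})$, or (ii) by arguing through the Hellinger distance, whose closed form factors as a product of a mean term $\exp(-(\mu-\mu')^2/4(\sigma^2+\sigma'^2))$ and a variance term $\sqrt{2\sigma\sigma'/(\sigma^2+\sigma'^2)}$, so that if $\tv \le 7\alpha$ then $H^2 \le O(\alpha)$ forces both factors to be $1-O(\alpha)$ independently, yielding both $|\mu-\mu'|^2/\max(\sigma,\sigma')^2 = O(\alpha)$ and $|\sigma-\sigma'|^2/\max(\sigma,\sigma')^2 = O(\alpha)$ as needed.
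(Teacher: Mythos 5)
Your construction is essentially the one in the paper: a grid that is geometric in the standard deviation (ratio $1+\Theta(\alpha)$) and arithmetic in the mean with spacing $\Theta(\alpha)$ times the local standard deviation, followed by the same two-stage count over admissible variance levels and then means. The covering half of your argument is fine: the Bhattacharyya/Hellinger closed form with $\tv \le \sqrt{2}\,H$ does yield $\tv\left(\mathcal{N}(\mu,\sigma^2),\mathcal{N}(\mu',\sigma'^2)\right) \lesssim \frac{|\mu-\mu'|}{\max(\sigma,\sigma')}+\frac{|\sigma^2-\sigma'^2|}{\max(\sigma,\sigma')^2}$, which is all that step needs.

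The gap is in your fallback (ii) for local smallness. The inequality $H^2\le \tv$ is quadratically lossy: from $\tv\le 7\alpha$ you only get $\frac{(\mu-\mu')^2}{\max(\sigma,\sigma')^2}=O(\alpha)$ and $\frac{(\sigma-\sigma')^2}{\max(\sigma,\sigma')^2}=O(\alpha)$, i.e.\ parameter displacements of order $\sqrt{\alpha}$, not $\alpha$ --- your own closing sentence concedes exactly this, and it contradicts the earlier claim in the counting paragraph that the lower bound ``forces $|\mu-\mu'|/\sigma_i=O(\alpha)$.'' With only $O(\sqrt{\alpha})$ control, the number of admissible variance levels is $O(1/\sqrt{\alpha})$ and likewise for the means, so the local count comes out to $O(1/\alpha)$ rather than $O(1)$, and the lemma is not proved. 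What the counting genuinely requires is a TV lower bound that is \emph{linear} in the normalized displacements; this is precisely what the paper imports from Devroye, Mehrabian, and Reddad (Theorem 1.3 of that work), namely $\tv \ge \frac{1}{200}\min\left\{1,\max\left\{\frac{|\tilde\sigma^2-\sigma^2|}{\tilde\sigma^2},\frac{40|\tilde\mu-\mu|}{\tilde\sigma}\right\}\right\}$. Your route (i) can supply such a bound if carried out: the half-line at the midpoint $(\mu+\mu')/2$ certifies $\tv\gtrsim \min\left\{1,\frac{|\mu-\mu'|}{\max(\sigma,\sigma')}\right\}$ irrespective of the variances, and the symmetric event $\{x: |x-\mu|\ge \max(\sigma,\sigma')\}$ certifies $\tv\gtrsim \min\left\{1,\frac{|\sigma-\sigma'|}{\max(\sigma,\sigma')}\right\}$ irrespective of the mean shift (the mass a Gaussian puts on an interval is maximized when the interval is centered at its mean, so no cancellation can occur). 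But as written that part is only a sketch, and it --- not the Hellinger argument --- is the step on which the proof must stand.
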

\begin{proof}
	For all $\mu,\tilde\mu\in\mathbb{R}$ and all $\sigma,\tilde\sigma>0$, we have \cite[Thm 1.3]{DevroyeMR18b} $$\frac{1}{200} \min\left\{1, \max\left\{\frac{|\tilde\sigma^2-\sigma^2|}{\tilde\sigma^2},\frac{40|\tilde\mu-\mu|}{\tilde\sigma}\right\}\right\}\le \tv(\mathcal{N}(\mu,\sigma^2),\mathcal{N}(\tilde\mu,\tilde\sigma^2)) \le \frac{3|\tilde\sigma^2-\sigma^2|}{2\tilde\sigma^2} + \frac{|\tilde\mu-\mu|}{2\tilde\sigma}.$$
	Let $\beta=\alpha$ and $\gamma=\log(1+\alpha/2)$. Define the set of distributions $$\mathcal{C}_\alpha=\left\{ \mathcal{N}\left(\beta e^{\gamma n} m, e^{2\gamma n}\right) : n,m\in\mathbb{Z} \right\}.$$
	We first show that $\mathcal{C}_\alpha$ is an $\alpha$-cover: Let $\mu\in\mathbb{R}$ and $\sigma>0$. Let $n = \left[\frac{\log \sigma}{\gamma}\right]$ and $m = \left[\frac{\mu}{\beta e^{\gamma n}}\right]$, where $[x]$ denotes the nearest integer to $x$, satisfying $|x-[x]|\le\frac12$. Let $\tilde\sigma = e^{\gamma n}$ and $\tilde\mu= \beta e^{\gamma n} m$ so that $e^{-\gamma} \le \frac{\tilde\sigma^2}{\sigma^2}\le e^{\gamma}$ and $|\mu-\tilde\mu| \le \frac12 \beta e^{\gamma n} = \frac12 \beta \tilde\sigma$. Thus $\mathcal{N}(\tilde\mu,\tilde\sigma^2) \in \mathcal{C}_\alpha$ and $\tv(\mathcal{N}(\mu,\sigma^2),\mathcal{N}(\tilde\mu,\tilde\sigma^2)) \le \frac32 \left(e^\gamma - 1\right)+ \frac{\beta}{4} \le \alpha$, as required.
	
	It only remains to show that the cover size is locally small. Let $\mu\in\mathbb{R}$ and $\sigma>0$. 
	\begin{align*}
		\left| \left\{H \in \mathcal{C}_\alpha : \tv(H,\mathcal{N}(\mu,\sigma^2)) \le 7 \alpha \right\}\right| 
		&= \left| \left\{n,m \in \mathbb{Z} : \tv(\mathcal{N}\left(\beta e^{\gamma n} m, e^{2\gamma n}\right),\mathcal{N}(\mu,\sigma^2)) \le 7 \alpha \right\}\right|\\
		&\le \left| \left\{n,m \in \mathbb{Z} : \max\left\{\frac{|e^{2\gamma n}-\sigma^2|}{e^{2\gamma n}},\frac{40|\beta e^{\gamma n} m-\mu|}{e^{\gamma n}}\right\} \le 1400 \alpha \right\}\right|\\
		&=\left|\left\{n,m\in\mathbb{Z}: \begin{array}{c} \frac{ - \log(1 + 1400\alpha) }{2\gamma}  \le n - \frac{\log \sigma}{\gamma} \le \frac{ -\log(1 - 1400\alpha) }{2\gamma}  \\ -35 \frac{\alpha}{\beta} \le m- \frac{\mu}{\beta e^{\gamma n}} \le 35 \frac{\alpha}{\beta} \end{array}\right\}\right|\\
		&\le \left( \frac{ -\log(1 - 1400\alpha) }{2\gamma} - \frac{ -\log(1 + 1400\alpha) }{2\gamma} + 1\right) \cdot (35 - (-35)+1)\\
		&= \frac{1}{2\log(1+\alpha/2)} \log\left(\frac{1+1400\alpha}{1-1400\alpha}\right) \cdot 71 + 71\\
		&=O(1).
	\end{align*}
\end{proof}

Combining Lemma \ref{lem:univarGauss-cover2} with Lemma \ref{lem:vc-lin-quad} and Theorem \ref{thm:sel-inf} yields the following.

\begin{corollary}
  \label{cor:learn-gaussian-univ-infinite}
	Suppose we are given a set of samples $X_1, \dots, X_n \sim P$, where $P$ is a univariate Gaussian distribution $\mathcal{N}(\mu, \sigma^2)$.
	Then there exists a $(\ve,\delta)$-differentially private algorithm which outputs a univariate Gaussian distribution $H^*$ such that $\dtv(P,H^*) \leq 7\alpha$ with probability $\geq 9/10$, so long as
	$$n = \Omega \left(\frac{1}{\alpha^2} + \frac{\log(1/\delta)}{\alpha \ve}\right).$$
\end{corollary}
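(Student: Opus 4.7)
The plan is to apply Theorem~\ref{thm:sel-inf} to the locally small cover $\mathcal{C}_\alpha$ of univariate Gaussians constructed in Lemma~\ref{lem:univarGauss-cover2}. The bulk of the work has already been done by that lemma, so the proof reduces to verifying that the three inputs required by Theorem~\ref{thm:sel-inf} — existence of an $\alpha$-close hypothesis, a bound on the VC dimension of the Scheff\'e sets, and a bound on the local size parameter $k$ — are all controlled by absolute constants in the univariate case.

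First I would instantiate Theorem~\ref{thm:sel-inf} with $\mathcal{H} = \mathcal{C}_\alpha$ from Lemma~\ref{lem:univarGauss-cover2}. Since $\mathcal{C}_\alpha$ is an $\alpha$-cover of the full family of univariate Gaussians, there is some $H^\ast \in \mathcal{C}_\alpha$ with $\dtv(P, H^\ast) \le \alpha$, satisfying the closeness hypothesis of Theorem~\ref{thm:sel-inf}. Next, the Scheff\'e sets of any two (univariate) Gaussians are defined by a quadratic inequality, so by Lemma~\ref{lem:vc-lin-quad} the VC dimension of the associated function class is $O(1)$; since $\mathcal{C}_\alpha$ is a subset of the family of univariate Gaussians, its Scheff\'e VC dimension is also $O(1)$. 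Finally, the local-smallness guarantee of Lemma~\ref{lem:univarGauss-cover2} states that for every $\mu,\sigma$ the number of $H \in \mathcal{C}_\alpha$ with $\dtv(H, \mathcal{N}(\mu,\sigma^2)) \le 7\alpha$ is $O(1)$, so in the notation of Theorem~\ref{thm:sel-inf} we have $k = O(1)$.

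Plugging $d = O(1)$, $k = O(1)$, $\beta = 1/10$, and using the $\log(1/\delta)$ branch of the minimum in the sample complexity expression of Theorem~\ref{thm:sel-inf} (since $|\mathcal{C}_\alpha|$ may be infinite), the requirement
\[ n = \Omega\!\left( \frac{d + \log(1/\beta)}{\alpha^2} + \frac{\log(k/\beta) + \min\{\log|\mathcal{H}|, \log(1/\delta)\}}{\alpha\varepsilon} \right) \]
collapses to
\[ n = \Omega\!\left( \frac{1}{\alpha^2} + \frac{\log(1/\delta)}{\alpha\varepsilon} \right), \]
which is exactly the claimed bound, and the resulting $\hat H \in \mathcal{C}_\alpha$ satisfies $\dtv(P, \hat H) \le 7\alpha$ with probability at least $9/10$.

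There is essentially no hard step here beyond having already proved Lemma~\ref{lem:univarGauss-cover2}; the only point that deserves a sentence of care is the trivial but necessary observation that the VC bound of Lemma~\ref{lem:vc-lin-quad} is monotone under restriction, so it transfers from the ambient Gaussian family to the countable sub-cover $\mathcal{C}_\alpha$. The qualitative statement being made is that the univariate unbounded Gaussian family, despite having an infinite cover and no bound on $\mu$ or $\sigma$, behaves like a constant-size hypothesis class from the perspective of private hypothesis selection, because both its VC dimension and its doubling dimension at scale $\alpha$ are $O(1)$.
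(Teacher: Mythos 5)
Your proposal is correct and follows exactly the paper's route: the paper proves this corollary by combining the locally small cover of Lemma~\ref{lem:univarGauss-cover2}, the VC bound of Lemma~\ref{lem:vc-lin-quad} (quadratic thresholds in one dimension give $d=O(1)$), and Theorem~\ref{thm:sel-inf} with $k=O(1)$ and the $\log(1/\delta)$ branch of the minimum. Your spelled-out verification of the three hypotheses, including the observation that the VC bound restricts to the sub-cover, matches the intended argument.
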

This sample complexity is comparable to to that of Karwa and Vadhan~\cite{KarwaV18}, who give an $(\ve,\delta)$-DP algorithm with sample complexity $\tilde O\left(\frac{1}{\alpha^2} + \frac{1}{\alpha\ve} + \frac{\log (1/\delta)}{\ve}\right)$.

\subsection{Sums of Independent Random Variables}
\label{sec:sirv-app}
In this section, we apply our results to distribution classes which are defined as the sum of independent (but not necessarily identical) distributions.
These are all generalizations of the classical Binomial distribution, and they have enjoyed a great deal of study into the construction of sparse covers.
To the best of our knowledge, we are the first to provide private learning algorithms for these classes.

We start with the Poisson Binomial distribution.
\begin{definition}
  A \emph{$k$-Poisson Binomial Distribution} ($k$-PBD) is the sum of $k$ independent Bernoulli random variables.
\end{definition}

We next consider sums of independent integer random variables, which generalize PBDs (which correspond to the case $d=2$).
\begin{definition}
  A \emph{$(k,d)$-Sum of Independent Integer Random Variables} ($(k,d)$-SIIRV) is the sum of $k$ independent random variables over $\{0, \dots, d-1\}$.
\end{definition}

Finally, we consider Poisson Multinomial distributions, which again generalize PBDs (which, again, correspond to the case $d=2$).
\begin{definition}
  A \emph{$(k,d)$-Poisson Multinomial Distribution} ($(k,d)$-PMD) is the sum of $k$ independent $d$-dimensional categorical random variables, i.e., distributions over $\{e_1, \dots, e_d\}$, where $e_i$ is the $i$th basis vector.
\end{definition}

We start with a covering result for SIIRVs (including the special case of PBDs), which appears in~\cite{DiakonikolasKS16a}.
Previous covers for PBDs and SIIRVs appear in~\cite{DaskalakisP09, DaskalakisP15a,DaskalakisDOST13}.
\begin{lemma}[\cite{DiakonikolasKS16a}]
  There exists an $\alpha$-cover of the set of $(k,d)$-SIIRVs of size
  $$k \cdot 2^{O(d \log^2 (1/\alpha)+ d \log^2 d)} .$$
\end{lemma}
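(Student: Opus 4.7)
The plan is to reduce the covering problem to a structural dichotomy for $(k,d)$-SIIRVs: every such distribution $X = \sum_{i=1}^k X_i$ is $\alpha$-close in total variation distance to a distribution of one of two canonical forms. The first (``sparse'') form is a $(k',d)$-SIIRV whose support is contained in an interval of length $\tilde O(d/\alpha)$ and which is effectively governed by only $k' = \tilde O(d/\alpha)$ components. The second (``Gaussian-like'') form is $T + c \cdot Y$, where $T \in \mathbb{Z}$ is a location parameter, $c \in \{1, \dots, d-1\}$ is an integer scale capturing the greatest common divisor structure of the component supports, and $Y$ is a discretized Gaussian specified by a mean and variance. Once this structural theorem is in hand, the cover is built by separately enumerating elements of each form and taking the union.

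To prove the structural theorem I would use Fourier analysis on $\mathbb{Z}$: let $\hat X(\xi) = \ex{e^{2\pi i \xi X}}$ be the characteristic function of $X$. If the only large values of $|\hat X(\xi)|$ occur near $\xi = 0$, then a quantitative Berry--Esseen / local CLT bound (comparing $\hat X$ to the characteristic function of a Gaussian) shows that $X$ is close to a discretized Gaussian. If instead there is a secondary peak at some nonzero $\xi_0$, then $\xi_0$ must be close to a rational $1/c$ for some small integer $c \in \{1,\dots,d-1\}$, and the distribution of $X$ modulo $c$ concentrates on a single residue class; this both extracts the scale factor $c$ and forces sparsity in the residual distribution. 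Combining these two regimes yields the dichotomy.

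Given the dichotomy, counting is routine. For the sparse form, we discretize the probabilities of each of the $\le k'$ relevant components at granularity $\alpha/\mathrm{poly}(k,d)$; with $k' = \tilde O(d/\alpha)$ components each taking $d$ values, the number of candidates is $(k d/\alpha)^{\tilde O(d/\alpha)}$, but after further truncation (supports of size $\tilde O(d/\alpha)$ suffice) one gets $2^{O(d\log^2(1/\alpha)+d\log^2 d)}$ distinct sparse approximants; the outer translation $T$ contributes the leading factor of $k$. For the Gaussian-like form, the mean, variance, and scale $c$ can be discretized to $O(\log(kd/\alpha))$ bits, giving only $\mathrm{poly}(k,d,1/\alpha)$ candidates, which is absorbed by the bound above. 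Taking the union gives the stated cover size.

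The main obstacle is proving the structural dichotomy with parameters sharp enough to give $\log^2(1/\alpha)$ (rather than $\log^{O(1)}(1/\alpha)$) in the exponent. This requires a careful quantitative analysis of the Fourier spectrum of $\hat X$: one must show both that secondary Fourier peaks are \emph{rational with small denominator} up to $O(\alpha)$-level precision, and that the passage from Fourier concentration to either the Gaussian approximation or the sparse approximation costs only polylogarithmic factors in $1/\alpha$ and $d$. The rest of the argument --- gridding probabilities, enumerating translations, and applying the triangle inequality --- is mechanical.
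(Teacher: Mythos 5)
The paper does not prove this lemma at all: it is imported verbatim from \cite{DiakonikolasKS16a} (with earlier covers for PBDs/SIIRVs in the Daskalakis--Papadimitriou line of work), so there is no in-paper argument to compare against. Your high-level plan does mirror the cited work's strategy --- a Fourier-analytic structural dichotomy (either a ``sparse'' SIIRV on a short interval, or an integer shift plus $c$ times a discretized Gaussian, $c \in \{1,\dots,d-1\}$), followed by enumeration --- but as a proof it has two gaps, one of which is fatal to the claimed bound as you have set things up.

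First, the structural dichotomy itself is the entire technical content of the cited result, and your proposal explicitly defers it (``the main obstacle''), so nothing is actually proved. Second, and more concretely, your counting of the sparse case does not give the stated cover size. Gridding the probabilities of $k' = \tilde O(d/\alpha)$ components (or, equivalently, the pmf over a support of size $\tilde O(d/\alpha)$) at granularity $\alpha/\mathrm{poly}(k,d)$ produces roughly $(kd/\alpha)^{\tilde O(d/\alpha)}$ candidates --- an exponent \emph{polynomial} in $1/\alpha$ --- and no ``further truncation'' of the support can reduce this to $2^{O(d\log^2(1/\alpha))}$, since you are still enumerating exponentially many pmfs on that support. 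The missing idea is the moment-matching (equivalently, low-frequency Fourier-coefficient matching) collapse used for PBDs by Daskalakis--Papadimitriou and extended to SIIRVs in \cite{DiakonikolasKS16a}: two sparse-form distributions whose first $O(d\log(1/\alpha))$ moments (or whose characteristic functions on the $O(d\log(1/\alpha))$ relevant frequencies) agree to suitable precision are already $\alpha$-close, so the cover only needs to enumerate discretized moment/Fourier profiles --- about $O(d\log(1/\alpha) + d\log d)$ parameters, each to $O(\log(d/\alpha))$ bits --- which is exactly what yields the $2^{O(d\log^2(1/\alpha)+d\log^2 d)}$ factor, with the leading factor of $k$ coming from the location/shift. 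Without that collapse step, your construction gives a qualitatively weaker (exponential in $d/\alpha$) cover, which would ruin the sample complexity in Corollary~\ref{cor:learn-siirv}.
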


Using this cover, we can apply Corollary~\ref{cor:covertolearn} to attain the following learning result for PBDs and SIIRVs.
\begin{corollary}
  \label{cor:learn-siirv}
  Suppose we are given a set of samples $X_1, \dots, X_n \sim P$, where $P$ is $\alpha$-close to a $(k,d)$-SIIRV.
  Then for any constant $\zeta > 0$, there exists an $\ve$-differentially private algorithm which outputs a $(k,d)$-SIIRV $H^*$ such that $\dtv(P,H^*) \leq (6+2\zeta)\alpha$ with probability $\geq 9/10$, so long as
  $$n = \Omega\left(\left(\log k + d\log^2(1/\alpha) + d \log^2 d\right)\left(\frac{1}{\alpha^2} + \frac{1}{\alpha \ve}\right)\right).$$
\end{corollary}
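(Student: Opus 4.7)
The plan is to derive Corollary~\ref{cor:learn-siirv} as a direct instantiation of Corollary~\ref{cor:covertolearn} with the SIIRV cover stated immediately above. Concretely, let $\mathcal{H}$ be the set of all $(k,d)$-SIIRVs, and let $\mathcal{C}_\alpha$ denote the $\alpha$-cover of $\mathcal{H}$ from the cited result of Diakonikolas, Kane, and Stewart, which has size $|\mathcal{C}_\alpha| \le k \cdot 2^{O(d\log^2(1/\alpha) + d \log^2 d)}$. Taking logarithms,
\[
\log|\mathcal{C}_\alpha| = O\bigl(\log k + d\log^2(1/\alpha) + d\log^2 d\bigr).
\]

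Next, I would feed $\mathcal{C}_\alpha$ into Corollary~\ref{cor:covertolearn}. Since $P$ is $\alpha$-close to some $(k,d)$-SIIRV by hypothesis, and $\mathcal{C}_\alpha$ is an $\alpha$-cover of all $(k,d)$-SIIRVs, the triangle inequality gives $\dtv(P,\mathcal{C}_\alpha) \le 2\alpha$. (Corollary~\ref{cor:covertolearn} is stated with a promise $\dtv(P,\mathcal{H}) \le \alpha$, but the proof actually only uses that the cover contains a $2\alpha$-close hypothesis; equivalently, we can simply rename $\alpha$ and absorb the constant factor into $\zeta$.) The corollary then yields an $\varepsilon$-DP algorithm that outputs $H^*\in\mathcal{C}_\alpha \subseteq \mathcal{H}$ with $\dtv(P,H^*)\le(6+2\zeta)\alpha$ with probability at least $9/10$, provided
\[
n = \Omega\!\left(\frac{\log|\mathcal{C}_\alpha|}{\alpha^2} + \frac{\log|\mathcal{C}_\alpha|}{\alpha\varepsilon}\right) = \Omega\!\left(\bigl(\log k + d\log^2(1/\alpha) + d\log^2 d\bigr)\left(\frac{1}{\alpha^2} + \frac{1}{\alpha\varepsilon}\right)\right),
\]
which is exactly the claimed sample complexity.

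There is essentially no obstacle here beyond bookkeeping: the nontrivial work (the construction of the sparse SIIRV cover) is imported as a black box, and the privacy and utility guarantees come straight from Corollary~\ref{cor:covertolearn}. The only minor subtlety to double-check is that the output is required to be a $(k,d)$-SIIRV, which follows because every element of $\mathcal{C}_\alpha$ is itself a $(k,d)$-SIIRV (the cover is constructed within the class). The PBD case is obtained by specializing $d=2$.
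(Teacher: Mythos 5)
Your proposal is correct and is exactly the paper's argument: the paper obtains Corollary~\ref{cor:learn-siirv} by plugging the cited SIIRV cover of size $k\cdot 2^{O(d\log^2(1/\alpha)+d\log^2 d)}$ directly into Corollary~\ref{cor:covertolearn}, whose hypothesis $\dtv(P,\mathcal{H})\le\alpha$ is precisely the assumption that $P$ is $\alpha$-close to a $(k,d)$-SIIRV, so no renaming of $\alpha$ is even needed. Your bookkeeping of $\log|\mathcal{C}_\alpha|$ and the remark that the cover elements are themselves $(k,d)$-SIIRVs complete the identical derivation.
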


Next, we move on to PMDs.
The following cover does not appear verbatim in any single location, but is a combination of results from a few different sources.
The proofs for the best bounds on first term appears in~\cite{DaskalakisDKT16}, the second in~\cite{DaskalakisKT15}, and the third in~\cite{DiakonikolasKS16c}.
Larger covers previously appeared in~\cite{DaskalakisP08, DaskalakisP15b}.
\begin{lemma}[\cite{DaskalakisKT15, DaskalakisDKT16,DiakonikolasKS16c}]
  For any $d > 2$, there exists an $\alpha$-cover of the set of $(k,d)$-PMDs of size 
  $$k^{O(d)} \cdot \min\left\{2^{\poly(d/\alpha)}, (1/\alpha)^{O(d\log(d/\alpha)/\log \log (d/\alpha))^{d-1}}\right\}.$$
\end{lemma}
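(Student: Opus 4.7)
The plan is to construct the cover by a structural decomposition of any $(k,d)$-PMD into a ``sparse/deterministic'' part and a ``bulk'' part, cover each part separately, and take the product cover. The three factors in the stated bound arise from three conceptually separate covering tasks, and the $\min$ reflects two alternative ways of bounding the bulk part.

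First I would handle the $k^{O(d)}$ prefactor. Each component categorical distribution can be classified according to which basis vector it is (almost) concentrated on; the ``near-deterministic'' components contribute essentially a lattice shift to the PMD. The key observation is that this integer shift lives in an $O(d)$-dimensional slice of $\mathbb{Z}^d$ (constrained by the total count $k$), so enumerating candidate shifts up to the granularity needed to keep the remaining TV error at $O(\alpha)$ requires $k^{O(d)}$ candidates. This part is essentially combinatorial.

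For the bulk part, I would use two separate arguments and take the better one. The first is a PMD-specific multidimensional central limit theorem: once enough components are ``spread out,'' their sum is close in total variation distance to a discretized Gaussian in $\mathbb{R}^d$ whose mean and covariance match those of the PMD. A Gaussian in $d$ dimensions is parameterized by a mean vector and a symmetric covariance matrix, and gridding this parameter space at an appropriate resolution gives a cover of size $2^{\poly(d/\alpha)}$. The second is a moment-matching approach: two PMDs that agree on all moments up to order $K = O(\log(d/\alpha)/\log\log(d/\alpha))$ in each of the $d-1$ independent coordinate directions are $\alpha$-close in TV. Enumerating possible moment tuples up to resolution $\poly(\alpha)$ gives the $(1/\alpha)^{O(d\log(d/\alpha)/\log\log(d/\alpha))^{d-1}}$ bound. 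Taking the better of the two and multiplying by the $k^{O(d)}$ prefactor yields the stated size.

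The main obstacle is establishing the two bulk-cover approximation theorems in total variation distance rather than some weaker metric such as Kolmogorov or Wasserstein distance, since TV is notoriously unstable for discrete distributions (a small shift of a concentrated mass can give TV close to $1$). The resolution used in the literature is a \emph{shift-invariance} / Fourier-analytic argument showing that nearby PMDs can be coupled via small rearrangements of their component parameters without incurring large TV error; combined with Stein's method for the CLT and careful moment-comparison estimates, this delivers the needed TV bounds. Given these structural theorems as black boxes, the cover construction itself is a straightforward gridding plus product argument.
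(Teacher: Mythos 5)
The paper itself gives no proof of this lemma: it is imported wholesale from the literature, with the $k^{O(d)}$ factor attributed to the size-free CLT of Daskalakis, De, Kamath, and Tzamos, the $2^{\poly(d/\alpha)}$ term to the CLT-based cover of Daskalakis, Kamath, and Tzamos, and the $(1/\alpha)^{O(d\log(d/\alpha)/\log\log(d/\alpha))^{d-1}}$ term to Diakonikolas, Kane, and Stewart. Your sketch is a faithful high-level account of how those works proceed (split off near-deterministic components as a lattice shift, cover the bulk either by a discretized Gaussian via a multidimensional CLT or by a low-complexity parametric family, and take a product cover), so in that sense it matches the ``proof'' the paper is relying on, with the same heavy structural theorems treated as black boxes. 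One small correction: the third bound in the cited work comes from sparsity of the Fourier transform of PMDs (the effective support of the Fourier transform has size $O(d\log(d/\alpha)/\log\log(d/\alpha))^{d-1}$, and one grids the Fourier coefficients), not from matching coordinate-wise moments up to order $O(\log(d/\alpha)/\log\log(d/\alpha))$; the moment-matching picture you describe is the one used for PBDs/SIIRVs, and it is the Fourier argument that yields the exponent $(d-1)$ structure here.
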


This implies the following learning result for PMDs.
\begin{corollary}
  \label{cor:learn-pmd}
  Suppose we are given a set of samples $X_1, \dots, X_n \sim P$, where $P$ is $\alpha$-close to a $(k,d)$-PMD, for any $d > 2$.
  Then there exists an $\ve$-differentially private algorithm which outputs a $(k,d)$-PMD $H^*$ such that $\dtv(P,H^*) \leq (6+2\zeta)\alpha$ with probability $\geq 9/10$, so long as
  $$n = \tilde \Omega\left(\left(d \log k + \min\left\{\poly\left(\frac{d}{\alpha}\right),O\left(\frac{d \log(d/\alpha)}{\log \log (d/\alpha)}\right)^{d-1}\cdot \log(1/\alpha)\right\}\right)\left(\frac{1}{\alpha^2} + \frac{1}{\alpha\ve}\right)\right). $$
\end{corollary}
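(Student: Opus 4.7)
The plan is to derive Corollary~\ref{cor:learn-pmd} as a direct application of Corollary~\ref{cor:covertolearn} to the $\alpha$-cover of $(k,d)$-PMDs supplied by the preceding lemma. Since Corollary~\ref{cor:covertolearn} already packages up the semi-agnostic guarantee and the sample complexity in terms of $\log|\mathcal{C}_\alpha|$, the entire argument reduces to (i) instantiating the class $\mathcal{H}$ as the set of $(k,d)$-PMDs and (ii) computing $\log|\mathcal{C}_\alpha|$ for the explicit cover from the cover lemma.

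For step (ii), I would simply take the logarithm of the stated cover size
\[
|\mathcal{C}_\alpha| \le k^{O(d)} \cdot \min\left\{2^{\poly(d/\alpha)},\, (1/\alpha)^{O(d\log(d/\alpha)/\log\log(d/\alpha))^{d-1}}\right\}.
\]
The first factor contributes $O(d\log k)$. The logarithm of the first term inside the min is $\poly(d/\alpha)$, while the logarithm of the second term is $O(d\log(d/\alpha)/\log\log(d/\alpha))^{d-1}\cdot \log(1/\alpha)$. Since $\log(\min\{a,b\}) = \min\{\log a, \log b\}$, the min is preserved, and we obtain
\[
\log|\mathcal{C}_\alpha| = O(d\log k) + \min\left\{\poly(d/\alpha),\, O\!\left(\tfrac{d\log(d/\alpha)}{\log\log(d/\alpha)}\right)^{d-1}\cdot \log(1/\alpha)\right\}.
\]
Substituting this bound into the sample complexity $n = \Omega\!\left(\frac{\log|\mathcal{C}_\alpha|}{\alpha^2} + \frac{\log|\mathcal{C}_\alpha|}{\alpha\ve}\right)$ from Corollary~\ref{cor:covertolearn} yields exactly the asserted bound (the $\tilde\Omega$ absorbs a constant factor and matches the form in the statement).

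For correctness, Corollary~\ref{cor:covertolearn} immediately guarantees, under this sample bound, the existence of an $\ve$-differentially private algorithm which with probability at least $9/10$ outputs an element $H^* \in \mathcal{C}_\alpha$ with $\dtv(P,H^*) \le (6+2\zeta)\alpha$; by construction every element of the cover is itself a $(k,d)$-PMD, so $H^*$ belongs to the target class as required. There is essentially no obstacle here beyond the bookkeeping: the only subtlety worth flagging is confirming that the two bounds inside the $\min$ survive the logarithm in the form written (they do, as above), and that the cover from the cited lemma is indeed a subset of $(k,d)$-PMDs so that the output is a proper learner.
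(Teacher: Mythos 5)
Your proposal is correct and matches the paper's (implicit) argument exactly: the paper derives this corollary as an immediate application of Corollary~\ref{cor:covertolearn} to the cited PMD cover, with the sample complexity obtained by taking the logarithm of the stated cover size just as you do.
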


\subsection{Piecewise Polynomials}
\label{sec:pp-app}
In this section, we apply our results to semi-agnostically learn piecewise polynomials.
This class of distributions is very expressive, allowing us to approximate a wide range of natural distribution classes.

\begin{definition}
  A \emph{$(t,d,k)$-piecewise polynomial distribution} is a distribution $P$ over $[k]$, such that there exists a partition of $[k]$ into $t$ disjoint intervals $I_1, \dots, I_t$ such that on each interval $I_j \subseteq [k]$, the probability mass function of $P$ takes the form $p_j(x) = \sum_{i=0}^d c_i^{(j)} x^i$ for some coefficients $c_i^{(j)}$, for all $x \in I_j$.
\end{definition}

We construct a cover for piecewise polynomials.
\begin{lemma}
  There exists a universal constant $c > 0$ such that there is an $\alpha$-cover of the set of $(t,d,k)$-piecewise polynomials of size
  $$\binom{k}{t - 1} \cdot \left( \frac{tk\cdot e^{cd^{1/2}}}{\alpha}\right)^{(d+1)t}.$$
\end{lemma}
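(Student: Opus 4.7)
The plan is to decompose the cover construction into three parts: enumerating the interval partitions, specifying a grid for each polynomial piece, and controlling the range of the gridded parameters.

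First, a $(t,d,k)$-piecewise polynomial distribution is determined by a choice of $t-1$ breakpoints in $\{1,\dots,k-1\}$ together with a choice of polynomial on each of the resulting $t$ intervals. The number of such partitions is $\binom{k-1}{t-1}\le\binom{k}{t-1}$, which accounts for the binomial prefactor in the cover size; we enumerate these and build a separate polynomial cover for every partition.

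Second, fix a partition $I_1,\dots,I_t$. Since $\dtv(P,Q)=\tfrac12\sum_{j}\sum_{x\in I_j}|p_j(x)-\hat p_j(x)|$, it suffices to cover each piece $p_j$ to additive $\ell^1$-error $O(\alpha/t)$ on $I_j$. I will parametrize $p_j$ by its values at $d+1$ Chebyshev-like nodes $y_0,\dots,y_d$ in $I_j$. Rounding each value to an additive grid of width $\gamma$ produces a polynomial $\hat p_j$ of degree at most $d$ satisfying $\|p_j-\hat p_j\|_\infty\le \gamma\Lambda_d$ on $I_j$ by the Lagrange interpolation formula, where $\Lambda_d=O(\log d)$ is the Lebesgue constant for Chebyshev nodes. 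Summing pointwise over $|I_j|\le k$ integers bounds the per-piece $\ell^1$ error by $k\gamma\Lambda_d$, so I take $\gamma=\Theta(\alpha/(tk))$ (absorbing $\Lambda_d$ into constants where possible).

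Third, I need to bound the range over which each nodal value $p_j(y_i)$ varies, since the number of grid points per parameter is (range)/$\gamma$. Here I will use the fact that $p_j$ is the restriction of a (sub-)PMF, so $\sum_{x\in I_j}p_j(x)\le 1$, and a Nikolskii/Markov-type extremal inequality comparing $\|\cdot\|_\infty$ to the $\ell^1$-average of a degree-$d$ polynomial on a discrete grid of $|I_j|\le k$ points bounds $|p_j(y_i)|$ by $e^{c\sqrt{d}}$ for an absolute constant $c$. This bound (rather than a naive $O(1)$ or a much worse exponential bound) is exactly the source of the $e^{cd^{1/2}}$ factor. Multiplying the resulting $O(tk\,e^{c\sqrt{d}}/\alpha)$ grid points by the $(d+1)t$ parameters (values at $d+1$ nodes on each of $t$ intervals) yields the claimed factor $\bigl(tk\,e^{c\sqrt d}/\alpha\bigr)^{(d+1)t}$; combining with the partition count completes the proof.

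The main obstacle is the third step: obtaining the $e^{c\sqrt{d}}$ bound rather than a polynomial or naive exponential one. A direct union bound over monomial coefficients, or an $L^\infty$/$L^1$ comparison via Markov's inequality, gives a factor of $d^{O(d)}$ or worse; what is needed is the sharper Nikolskii-type inequality that the sup norm of a degree-$d$ polynomial with unit $L^1$ mass on $[-1,1]$ (or a comparably dense discrete grid) is at most $e^{O(\sqrt d)}$. Citing this inequality, together with a standard change of variables from $I_j$ to $[-1,1]$ and the observation that a discrete grid of $|I_j|\ge d+1$ points is a valid sampling set for degree-$d$ polynomials, is the delicate technical ingredient that controls the range of the gridded parameters.
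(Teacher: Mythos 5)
Your steps (1) and (2) match the paper's construction: the $\binom{k}{t-1}$ partition count is the same, and your Lebesgue-constant argument for the stability of Chebyshev-node interpolation under rounding of the nodal values is a legitimate (in fact slightly cleaner) substitute for the paper's use of the Duffin--Schaeffer inequality (Lemma~\ref{lem:duffin-schaeffer}) plus integration. The genuine gap is in step (3). The inequality you lean on --- that a degree-$d$ polynomial which is nonnegative on a discrete grid of $|I_j|\ge d+1$ points and has total mass at most $1$ there must have sup norm $e^{O(\sqrt d)}$ on the containing real interval --- is false. Take $I_j=\{0,1,\dots,d\}$ and let $p$ be an equispaced Lagrange basis polynomial $\ell_j$: it is nonnegative at the grid points and its grid values sum to $1$ (so it is the PMF of a distribution in the class), yet for a suitable index $j$ its magnitude at points lying a constant distance inside the outermost unit subintervals is $2^{d}/\mathrm{poly}(d)=e^{\Omega(d)}$; since the Chebyshev nodes of $[0,d]$ include points at distance $\Theta(1)$ from the endpoints, the a priori range of the nodal values you propose to grid is $e^{\Omega(d)}$ in this regime, not $e^{O(\sqrt d)}$. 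Your per-parameter grid would then need $e^{\Omega(d)}\cdot tk/\alpha$ cells, and the resulting cover exceeds the claimed bound. The underlying issue is that a continuous Nikolskii comparison ($\|p\|_\infty\le \mathrm{poly}(d)\,\|p\|_{L^1[-1,1]}$) costs only polynomially in $d$, but your $\ell^1$ control is only at the integer points, and passing from a grid of $m$ points to the continuum costs a factor $e^{\Theta(d^2/m)}$ (this is exactly Coppersmith--Rivlin, Lemma~\ref{lem:coppersmith-rivlin}, and it is essentially sharp): a grid of $d+1$ points determines the polynomial but does not control its norm.

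This is why the paper's Proposition~\ref{prp:one-poly-cover} is stated as a \emph{minimum} of two bounds, encoding a case split on $|I_j|$ that your proposal omits. On long intervals ($|I_j|\gtrsim d^{3/2}$) one interpolates, and Coppersmith--Rivlin bounds the values at the interpolation nodes by $e^{O(d^2/|I_j|)}=e^{O(\sqrt d)}$ --- this, not a generic $\ell^1$-to-$\ell^\infty$ extremal inequality, is the true source of the $e^{c\sqrt d}$ factor. On short intervals one does not interpolate at all: one simply rounds the PMF value at each of the $|I_j|$ integer points to a multiple of $\alpha/k$, giving $(2k/\alpha)^{|I_j|}$ candidates, which is affordable precisely because $|I_j|$ is small. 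To repair your proof you need this dichotomy (or some replacement for the false uniform $e^{O(\sqrt d)}$ range bound); once you have it, your Lebesgue-constant interpolation step goes through on the long intervals.
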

\begin{proof}
We specify an element of the cover by 
\begin{enumerate}
\item Selecting one of $\binom{k}{t-1}$ partitions of $[k]$ into $t$ intervals $I_1, \dots, I_t$, and
\item For each interval $I_j$, selecting an element of an $(\alpha / t)$-cover $\mathcal{C}_j$ of the set of degree-$d$ polynomials over $I_j$ which are uniformly bounded by $1$.
\end{enumerate}
The total size of the cover is $\binom{k}{t-1} \prod_{j = 1}^t |\mathcal{C}_j|$. The theorem follows from Proposition~\ref{prp:one-poly-cover} below, which constructs an $(\alpha/t)$-cover $\mathcal{C}_j$ of size at most $\left( \frac{tk\cdot e^{cd^{1/2}}}{\alpha}\right)^{d+1}$ for every interval $I_j$.

\end{proof}

\begin{proposition} \label{prp:one-poly-cover}
There exist constants $b, c > 0$ for which the following holds. Let $I \subseteq [k]$ be an interval and let $\mathcal{P}$ be the set of polynomials $p : I \to \mathbb{R}$ of degree $d$ such that $|p(x)| \le 1$ for all $x \in I$. There exists an $\alpha$-cover of $\mathcal{P}$ of size
\[\min \left\{ \left(\frac{2k}{\alpha}\right)^{|I|} , \left(\frac{ckd^2 \cdot e^{bd^2/|I|}}{\alpha}\right)^{d+1}\right\}.\]
\end{proposition}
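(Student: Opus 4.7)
The statement is a minimum of two bounds, so I will construct two covers separately; the overall claim follows by taking the smaller one in each parameter regime.

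For the first bound $(2k/\alpha)^{|I|}$, I would simply grid the $|I|$ values $p(x)$ for $x \in I$. Each value lies in $[-1,1]$, so gridding each coordinate at resolution $\alpha/|I|$ produces a set of candidate functions on $I$ of size at most $(2|I|/\alpha)^{|I|} \le (2k/\alpha)^{|I|}$. Every $p \in \mathcal{P}$ has a candidate $q$ in the grid with $|p(x)-q(x)| \le \alpha/|I|$ at every $x \in I$, and hence $\sum_{x \in I} |p(x)-q(x)| \le \alpha$.

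For the second bound, I may assume $|I| \ge d+1$ since otherwise the first bound is already smaller. The technical core is a Coppersmith--Rivlin-type inequality on polynomial growth (see, e.g., Borwein--Erd\'elyi): any polynomial of degree $d$ bounded by $1$ on $|I|$ consecutive integers is bounded by $M := \exp(O(d^2/|I|))$ on the surrounding real interval $[a,a+|I|-1]$. Expanding $p = \sum_{i=0}^{d} a_i T_i$ in the Chebyshev basis $\{T_i\}$ shifted to this real interval, standard Chebyshev-coefficient estimates then yield $|a_i| = O(M)$ for each $i$.

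Given this coefficient bound, I would grid each $a_i$ over $[-O(M),O(M)]$ at resolution $\delta$, producing a set of polynomials $q = \sum_i b_i T_i$ of total size $(O(M)/\delta)^{d+1}$. For any $p$ and its grid neighbor $q$ with $|a_i-b_i| \le \delta$, the uniform bound $|T_i(x)| \le 1$ on the real interval gives the pointwise estimate $|p(x)-q(x)| \le \delta(d+1)$ for every $x \in I$, so $\sum_{x \in I} |p(x)-q(x)| \le k\delta(d+1)$. Taking $\delta = \Theta(\alpha/(kd))$ makes this at most $\alpha$ and gives cover size $(O(kd)\cdot \exp(O(d^2/|I|))/\alpha)^{d+1}$, which is absorbed into the claimed $(ckd^2 \cdot e^{bd^2/|I|}/\alpha)^{d+1}$ for appropriate constants $b,c$.

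The main obstacle I anticipate is locating and correctly invoking the Coppersmith--Rivlin bound: without sharp control on how much $p$ can grow off the integer lattice, a direct gridding of monomial-basis coefficients would fail, because those coefficients are generically exponentially large in $d$ even when $p$ is bounded by $1$ on $I$. Getting the right shift/scaling of the Chebyshev basis so that $\|T_i\|_\infty \le 1$ applies at every integer point of $I$ is the other detail to watch; the rest is a routine gridding calculation.
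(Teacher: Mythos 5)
Your proof is correct, and for the main (second) bound it takes a genuinely different route from the paper. The paper's construction grids the \emph{values} of $p$ at the $d+1$ Chebyshev nodes $x_j = m(1+\cos(j\pi/d))/2$ of the continuous interval (rounded to multiples of $\alpha/(2kd^2)$) and takes the interpolating polynomial through those rounded values; it then needs the Duffin--Schaeffer inequality to convert smallness of $p-q$ at the nodes into a derivative bound, and the Fundamental Theorem of Calculus to control $p-q$ at all integer points, while Coppersmith--Rivlin is used only to count how many rounded values each node can take. You instead grid the \emph{Chebyshev expansion coefficients} of $p$ on the (affinely rescaled) real interval, using Coppersmith--Rivlin to bound $\sup_{[0,m]}|p|\le M=e^{O(d^2/|I|)}$ and the elementary estimate $|a_i|\le 2M$ (from the orthogonality integral) to confine each coefficient to a range of size $O(M)$; closeness of coefficients plus $\|T_i\|_\infty\le 1$ then gives the pointwise and hence $\ell_1$ bound directly, with no interpolation-error analysis and no Duffin--Schaeffer. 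Your count $\left(O(kd)\,e^{O(d^2/|I|)}/\alpha\right)^{d+1}$ is in fact marginally better than the stated $\left(ckd^2 e^{bd^2/|I|}/\alpha\right)^{d+1}$, and your explicit reduction to the case $|I|\ge d+1$ (where the first bound dominates otherwise) is a nice touch, since Coppersmith--Rivlin genuinely requires the number of lattice points to exceed the degree --- a hypothesis the paper leaves implicit. The first bound is handled the same way in both arguments (gridding the $|I|$ values of $p$). So: same skeleton (CR plus a Chebyshev-flavored discretization of a $(d+1)$-dimensional parameterization), but your parameterization is by coefficients rather than nodal values, trading the Duffin--Schaeffer/Markov-type machinery for a standard coefficient bound.
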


The proof of Proposition~\ref{prp:one-poly-cover} relies on two major results in approximation theory, which we now state.

\begin{lemma}[Duffin and Schaeffer~\cite{DuffinS41}] \label{lem:duffin-schaeffer}
Let $p :[-1, 1] \to \mathbb{R}$ be a polynomial such that $|p(x)| \le 1$ for all $x$ of the form $x = \cos(j \pi / d)$ for $j = 0, 1, \dots, d$. Then $|p'(x)| \le d^2$ for all $x \in [-1, 1]$.
\end{lemma}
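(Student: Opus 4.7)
The plan is to reduce the estimate to a ``Lebesgue constant for derivatives'' bound at Chebyshev extrema, and then exploit the closed form that this basis inherits from $T_d$ itself. Since $p$ has degree at most $d$ and is pinned at $d+1$ nodes, Lagrange interpolation is exact: $p(x) = \sum_{j=0}^{d} p(x_j)\,\ell_j(x)$, where $\ell_j$ is the Lagrange basis polynomial associated with the node $x_j = \cos(j\pi/d)$. Differentiating and using $|p(x_j)|\le 1$ yields $|p'(x)| \le \Lambda_d(x)$, where $\Lambda_d(x) := \sum_{j=0}^{d} |\ell_j'(x)|$. The entire problem thus reduces to the pointwise bound $\Lambda_d(x) \le d^2$ for all $x \in [-1,1]$.

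To make $\Lambda_d$ tractable, I would use the fact that the nodal polynomial $\omega(x) := \prod_{j=0}^{d}(x - x_j)$ is a scalar multiple of $(1-x^2)\,T_d'(x)$: the interior nodes $x_1,\ldots,x_{d-1}$ are exactly the critical points of $T_d$ in $(-1,1)$, and $x_0 = 1,\ x_d = -1$ contribute the factor $(1-x^2)$. Combined with the classical identities $T_d'(\pm 1) = \pm d^2$ and $(1-x_j^2)\,T_d''(x_j) = -(-1)^j d^2$ for interior $j$ (the latter obtained from the Chebyshev differential equation evaluated at $x_j$), a routine computation of $\omega'(x_j)$ yields the explicit Chebyshev--Lagrange formula
\[
\ell_j(x) \;=\; \frac{(-1)^{j+1}(1-x^2)\,T_d'(x)}{\bar c_j\, d^2\,(x-x_j)}, \qquad \bar c_0 = \bar c_d = 2,\; \bar c_j = 1 \text{ otherwise}.
\]
Differentiating this and simplifying the resulting expression using the Chebyshev differential equation $(1-x^2)T_d''(x) - xT_d'(x) + d^2T_d(x) = 0$ collapses the bulk of the cross terms into a form amenable to summation.

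The main obstacle is the inequality $\Lambda_d(x) \le d^2$ itself. Note that choosing $p(x_j) = \operatorname{sign}(\ell_j'(x))$ shows $\Lambda_d(x) = \sup\{|p'(x)| : \deg p \le d,\ |p(x_j)|\le 1\}$, so the inequality is exactly the theorem and no purely formal manipulation can avoid using the Chebyshev structure in an essential way. My plan is to split into two regimes. For $x$ bounded away from the endpoints, the weight $(1-x^2)^{-1}$ appearing in $\ell_j'(x)$ is controlled, and the alternating sign $(-1)^{j+1}$ in the closed form produces cancellations so that $\sum_j |\ell_j'(x)|$ can be controlled by Abel summation against the elementary pointwise estimates $|T_d(x)| \le 1$ and $|T_d'(x)| \le d/\sqrt{1-x^2}$ (the latter being Bernstein's inequality, immediate from $T_d(\cos\theta)=\cos(d\theta)$). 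Near an endpoint, where $|T_d'|$ actually saturates at $d^2$, I would compare $p$ directly against $T_d$: writing $p = c T_d + r$ with $c$ chosen so that $r$ has strictly smaller degree, the bound on $|p'(\pm 1)|$ follows by combining $|T_d'(\pm 1)| = d^2$ with an inductive Markov-type bound on $|r'(\pm 1)|$. That the constant $d^2$ is sharp is witnessed by $p = \pm T_d$, which achieves $|p(x_j)|=1$ with alternating signs at every node and whose derivative at $\pm 1$ equals $\pm d^2$.
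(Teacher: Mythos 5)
The paper itself does not prove this lemma; it is cited directly from Duffin and Schaeffer~\cite{DuffinS41}, so there is no internal proof to compare against. Your attempt is therefore a standalone re-derivation, which is a reasonable thing to try given how clean the statement is, but as written it has a genuine gap right at the heart of the matter.

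Your setup through the Chebyshev--Lagrange formula is correct: the reduction to showing $\Lambda_d(x)=\sum_{j=0}^d|\ell_j'(x)|\le d^2$, the nodal polynomial $\omega(x)\propto(1-x^2)T_d'(x)$, and the closed form for $\ell_j$ (with $\bar c_0=\bar c_d=2$) all check out, as does the observation that $\Lambda_d(x)\le d^2$ is not just sufficient but \emph{equivalent} to the theorem. The problem is that you have no proof of this equivalent inequality, and both of your two proposed regimes break down under scrutiny. In the interior regime, you propose to use the alternating sign $(-1)^{j+1}$ in $\ell_j$ together with Abel summation, but $\Lambda_d(x)$ is a sum of \emph{absolute values}: once you pass to $|\ell_j'(x)|$ the very sign pattern you are counting on for cancellation is destroyed, and there is nothing left to sum by parts against. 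You also cite Bernstein's inequality $|T_d'(x)|\le d/\sqrt{1-x^2}$, but that is a bound on $T_d$, not on a general $p$ constrained only at nodes; the version of Bernstein for general $p$ requires $\|p\|_\infty\le 1$ on all of $[-1,1]$, which you do not have (the Lebesgue constant at Chebyshev extrema is $\Theta(\log d)$, so $\|p\|_\infty$ can genuinely grow). In the endpoint regime, the decomposition $p=cT_d+r$ does not lead anywhere: if $c$ is chosen to kill the leading term then $r$ has degree $\le d-1$, but $r$ inherits no usable control --- it is not bounded by $1$ anywhere, only by $1+|c|$ at the nodes $x_j$, and those nodes are not a Chebyshev-extremal set for degree $d-1$, so no induction on degree can even be set up, let alone close with the right constant. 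In short, the closed form and reduction are fine machinery, but the inequality itself is the entire content of the theorem, and your sketch replaces a proof of it with two heuristics that, as stated, do not work. If you want a self-contained argument you would need one of the genuinely Chebyshev-specific devices in the literature (e.g.\ the Duffin--Schaeffer sign-alternation/extremal-polynomial argument, or a Riesz-type interpolation identity for $p'$ at the extrema), not Abel summation or degree induction.
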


\begin{lemma}[Coppersmith and Rivlin~\cite{CoppersmithR92}] \label{lem:coppersmith-rivlin}
There exist constants $a, b > 0$ for which the following holds. Let $p : \mathbb{R} \to \mathbb{R}$ be a polynomial of degree $d$, and suppose that $|p(t)| \le 1$ for all $t = 0, 1, \dots, m$. Then $|p(t)| \le a\exp(bd^2 / m)$ for all $t \in [0, m]$.
\end{lemma}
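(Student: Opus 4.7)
The plan is to combine the Markov brothers inequality with a Taylor expansion around the nearest integer to the extremal point of $p$. Let $M := \sup_{t\in[0,m]} |p(t)|$ and let $t^* \in [0,m]$ be a point where $|p(t^*)|=M$. Let $k^* \in \{0,1,\ldots,m\}$ be the nearest integer, so $|t^*-k^*|\le 1/2$. After affinely rescaling to $[-1,1]$ and iterating the Markov brothers inequality, one obtains $\|p^{(j)}\|_{\infty,[0,m]} \le (2/m)^j\, T_d^{(j)}(1)\, M$ for $j=1,\ldots,d$, where $T_d$ is the Chebyshev polynomial of the first kind and $T_d^{(j)}(1) = \prod_{i=0}^{j-1}(d^2-i^2)/(2j-1)!!$. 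Taylor expanding $p$ around $k^*$ and using $|p(k^*)|\le 1$ then gives
\[
M \;\le\; 1 \;+\; \sum_{j=1}^{d} \frac{\|p^{(j)}\|_\infty}{j!}\, |t^*-k^*|^j \;\le\; 1 + M\cdot S,
\qquad S := \sum_{j=1}^{d}\frac{T_d^{(j)}(1)}{j!\, m^j}.
\]
Hence $M \le 1/(1-S)$ whenever $S<1$.

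\paragraph{Easy regime.}
Using $T_d^{(j)}(1) \le d^{2j}/(2j-1)!!$ and $j!\,(2j-1)!! \ge (2j)!/2^j$, we have $S \lesssim \sum_j (2d^2/m)^j/(2j)! \le \cosh(\sqrt{2d^2/m})-1$. In particular, when $d^2 \le c m$ for a sufficiently small constant $c$, we get $S \le 1/2$ and therefore $M \le 2$. This already proves the theorem in this regime, since $a\exp(b d^2/m)$ is bounded below by a constant.

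\paragraph{Hard regime and main obstacle.}
The chief difficulty is the regime $d^2 \gg m$, where the naive Taylor bound does not give $S<1$ and must be refined. Here the target bound $a\exp(b d^2/m)$ is already large, but proving the \emph{precise} exponential rate $d^2/m$ (not e.g.\ $d/\sqrt{m}$ from the cruder Taylor estimate) requires exploiting that $p$ is constrained at every integer point in $[0,m]$, not just at the single nearest integer $k^*$. The plan is to subdivide $[0,m]$ into blocks of length $\ell \asymp m/d$, each of which still contains a bounded integer sample point; applying the Markov/Taylor argument locally on each block (so that the local ratio of degree to block-length is $d \cdot \ell/m \asymp 1$) yields a local amplification factor of $O(1)$, and chaining these local amplifications across the $\asymp d^2/m$ blocks required to relate distant integer samples produces a multiplicative factor of the claimed form $\exp(b d^2/m)$.

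\paragraph{Alternative route.}
An alternative that avoids chaining is to identify the extremal polynomial directly. By a compactness/equi-oscillation argument, the polynomial achieving the supremum of $\|p\|_{\infty,[0,m]}$ subject to $|p(k)|\le 1$ at integers is essentially unique (up to sign) and coincides, after rescaling $[0,m]$ to $[-1,1]$, with a Chebyshev-type polynomial of the discrete variable (a Hahn/Krawtchouk-type orthogonal polynomial on the uniform grid of spacing $2/m$). Its continuous $L^\infty$ norm has a known asymptotic expansion, and the dominant term is precisely $\exp(\Theta(d^2/m))$. Either route gives the required bound; the sharp exponent $d^2/m$ is the key technical obstacle and is where the Coppersmith--Rivlin argument does its real work.
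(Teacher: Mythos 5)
This lemma is a \emph{citation} to Coppersmith and Rivlin~\cite{CoppersmithR92}; the paper itself contains no proof of it, so there is no in-paper argument against which to compare. Evaluating your proposal on its own: the ``easy regime'' calculation is correct. The self-referential inequality $M \le 1 + MS$ with $S \le \cosh(\sqrt{2d^2/m})-1$ gives $M = O(1)$ whenever $d^2/m$ is below a fixed constant, and this implies the stated bound in that regime since $a\exp(bd^2/m) = \Theta(1)$ there.

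The hard-regime sketch, however, has a genuine gap, and it is exactly where you say the difficulty lies. Two problems. First, with blocks of length $\ell \asymp m/d$ you have $m/\ell \asymp d$ blocks, not $\asymp d^2/m$ blocks, so even if each block contributed an $O(1)$ amplification, chaining would only yield $\exp(O(d))$, which for $d \le m$ is strictly weaker than the target $\exp(O(d^2/m))$. Second, and more fundamentally, a degree-$d$ polynomial has no ``local degree'': on any block $B \subseteq [0,m]$ of length $\ell$, Markov's inequality still reads $\|p^{(j)}\|_{\infty,B} \lesssim (2/\ell)^j\,T_d^{(j)}(1)\,\|p\|_{\infty,B}$ with the full $T_d^{(j)}(1) \sim d^{2j}$, so the local Taylor coefficient becomes $\sum_j (d^2/\ell)^j/(2j)!$, and with $\ell \asymp m/d$ this is $\sum_j (d^3/m)^j/(2j)!$, which is far above $1$ in the regime $d^2 \gg m$. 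The local self-referential inequality is therefore vacuous, and the blocking idea does not escape the obstruction that defeats the global version; it only makes it worse because $\ell < m$. The ``alternative route'' is a sensible pointer but defers exactly the hard step: establishing the $\exp(\Theta(d^2/m))$ asymptotic for the continuous supremum of the discrete extremal (Chebyshev-on-a-grid) polynomial is essentially the content of~\cite{CoppersmithR92}, not a shortcut around it. Since the paper is simply importing this black box, a clean response here would be to quote the theorem and reference, note that it is nontrivial and sharp, and reserve original effort for the steps the paper actually proves (e.g.\ the Duffin--Schaeffer application in Proposition~\ref{prp:one-poly-cover}).
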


\begin{proof}[Proof of Proposition~\ref{prp:one-poly-cover}]
We consider two cases, corresponding to the two terms in the minimum. First, consider the function $f : I \to \mathbb{R}$ where $f(t)$ is obtained by rounding $p(t)$ to the nearest multiple of $\alpha / k$. Then $f$ satisfies $\sum_{t \in I} |f(t) - p(t)| \le \alpha$. There are at most $(2k/\alpha)^{|I|}$ functions $f$ which can be constructed this way, giving the first term in the maximum.

For the second term, we construct a cover for $\mathcal{P}$ by approximately interpolating through $d+1$ carefully chosen points in the \emph{continuous} interval corresponding to $I$. By applying an affine shift, we may assume that $I = \{0, 1, \dots, m\}$ for some integer $m \le k-1$. Let $p \in \mathcal{P}$ and for $x \in [0, m]$ let $\hat{p}(x)$ be the value of $p(x)$ rounded to the nearest integer multiple of $\alpha / (2kd^2)$. Let $q : [0, m] \to \mathbb{R}$ be the unique degree-$d$ polynomial obtained by interpolating through the points $(x_j, \hat{p}(x_j))$ where $x_j = m(1 + \cos(j \pi / d)) / 2$ for $j = 0, 1, \dots, d$.

We first argue that the polynomial $q$ so defined satisfies $\sum_{t = 0}^m |p(t) - q(t)| \le \alpha$. Let $r(x) = p(x) - q(x)$ for $x \in [0, m]$. Then by construction, $|r(x_j)| \le \alpha / (2kd^2)$ for all interpolation points $x_j$. By the Duffin-Schaeffer Inequality (Lemma~\ref{lem:duffin-schaeffer}), we therefore have $|r'(x)| \le \frac{\alpha}{km}$ for all $x \in [0, m]$. By the Fundamental Theorem of Calculus, $r(t) = r(0) + \int_{0}^t r'(t) \ dt$ satisfies $|r(t)| \le (t+1) \cdot \frac{\alpha}{km} \le \alpha / k$, and hence $\sum_{t = 0}^m |r(t)| \le \alpha$.

We now argue that the set of polynomials $q$ that can be constructed in this fashion has size $(ckd^2 \exp(bd^2/m) / \alpha)^{d+1}$. By the Coppersmith-Rivlin Inequality (Lemma~\ref{lem:coppersmith-rivlin}), there are constants $a, b > 0$ such that $|p(x)| \le a\exp(bd^2/m)$ for all $x \in [0, m]$. Therefore, for each $p \in \mathcal{P}$ and each interpolation point $x_j$, there are at most $4a\cdot kd^2\exp(bd^2/m) / \alpha$ possible values that $\hat{p}(x_j)$ can take. Hence, the polynomial $q$ can take one of at most $(4a\cdot kd^2\exp(bd^2/m) / \alpha)^{d+1}$ possible values, as we wanted to show.
\end{proof}

\begin{lemma}\label{lem:pp-vc}
	The VC dimension of $(t,d,k)$-piecewise polynomial distributions is at most $2t(d+1)$.
\end{lemma}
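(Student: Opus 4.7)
The plan is to analyze the structure of the Scheffé sets arising from pairs of $(t,d,k)$-piecewise polynomial distributions and then invoke a standard bound on the VC dimension of unions of intervals. Throughout, I view the discrete domain $[k]$ as embedded in a real interval so that we may speak of the polynomial pieces and their roots as living on the real line.

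First, let $H$ and $H'$ be two $(t,d,k)$-piecewise polynomial distributions and set $g(x) := H(x) - H'(x)$. If $H$ is supported on intervals $I_1, \dots, I_t$ and $H'$ on intervals $J_1, \dots, J_t$, then the common refinement of these two partitions has at most $(t-1) + (t-1) = 2t-2$ interior breakpoints, giving at most $2t-1$ pieces. On each piece of the refinement, both $H$ and $H'$ agree with polynomials of degree at most $d$, so $g$ is itself a polynomial of degree at most $d$ on each such piece.

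Next, I would count the maximal sign-constant intervals of $g$. On each of the $2t-1$ polynomial pieces, $g$ has at most $d$ real roots and hence at most $d+1$ maximal intervals of constant sign, giving at most $(2t-1)(d+1)$ such intervals in total. Since the Scheffé set $\{x : H(x) > H'(x)\} = \{x : g(x) > 0\}$ picks out only the positive-sign intervals, which necessarily alternate with negative ones, the Scheffé set is a union of at most $\lceil (2t-1)(d+1)/2 \rceil \le t(d+1)$ intervals on the line (this last inequality follows from $(2t-1)(d+1) = 2t(d+1)-(d+1)$ by a small case check on the parity of $d+1$).

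Finally, I would invoke the standard fact that the family of subsets of $\mathbb{R}$ formed by unions of at most $\ell$ intervals has VC dimension exactly $2\ell$. Instantiating this with $\ell = t(d+1)$ yields a bound of $2t(d+1)$ on the VC dimension of the Scheffé-set class $\{f_{H,H'} : H,H' \in \mathcal{H}\}$, which by the definition given earlier in the excerpt is the VC dimension of the hypothesis class. There is no serious obstacle here; the only mildly delicate step is tracking the refinement count and the parity bookkeeping in the ceiling, which land cleanly within $t(d+1)$.
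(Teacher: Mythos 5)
Your overall architecture is sound and your final numbers are right, but the step that carries all the weight -- getting from at most $(2t-1)(d+1)$ sign-constant intervals down to at most $t(d+1)$ positive ones -- rests on a claim that is false. Maximal intervals on which $g = H - H'$ has constant sign do \emph{not} necessarily alternate between positive and negative. Within a single piece, a root of even multiplicity separates two positive intervals with no negative interval in between (take $g$ proportional to $(x-a)^2$ on that piece: two sign-constant intervals, both positive); and across a breakpoint of the refinement the polynomial can jump from positive to positive, so consecutive intervals in your global list can again share a sign. Hence ``at most half of the $(2t-1)(d+1)$ intervals are positive'' does not follow from your argument. Indeed, if one instead bounds the positive intervals piece by piece, the worst case is $\lfloor d/2\rfloor + 1$ per piece, and $(2t-1)(\lfloor d/2\rfloor+1)$ exceeds $t(d+1)$ once $t > d/2+1$ (for even $d$), so the ``parity bookkeeping'' is not a small case check -- it is exactly where the argument breaks, and the final count is only rescued by merging same-sign intervals across boundaries, which your proof never invokes. (There is also the degenerate case $g\equiv 0$ on a piece, where ``at most $d$ roots'' fails, though such a piece contributes nothing positive.)

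The conclusion you want is nevertheless true and can be repaired by counting differently: bound the number of changes of the label $\mathbb{I}[g(x)>0]$ along the ordered domain. Within one piece, each label change forces a root of $g$ in the corresponding gap, and two consecutive changes of the form positive/zero/positive force roots of total multiplicity at least two, so a nonzero degree-$d$ piece contributes at most $d$ label changes; each of the at most $2t-2$ breakpoints contributes at most one more. This gives at most $(2t-1)d + (2t-2) \le 2t(d+1)-1$ label changes, hence at most $t(d+1)$ maximal positive runs, after which your citation of the VC dimension of unions of $\ell$ intervals finishes as intended. Note that this repair is essentially the paper's proof: the paper bounds the sign changes of $g$ directly (at most $d$ per polynomial piece plus one per interval boundary) and concludes that no $2t(d+1)+1$ points can be labeled with alternating signs -- which is just the proof of the union-of-intervals VC fact inlined. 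So once your counting step is fixed, your route and the paper's coincide up to packaging.
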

\begin{proof}
	Consider two piecewise polynomial distributions. The difference between their probability mass functions is a piecewise polynomial of degree $\le d$. The number of intervals needed to represent this piecewise function is $\le 2t$. It follows that this difference can change sign at most $2td+ 2t-1$ times -- each polynomial can change sign at most $d$ times and the sign can change at the interval boundaries. Thus such a function cannot label $2td+ 2t + 1$ points with alternating signs, which implies the VC bound.
\end{proof}

As a corollary, we obtain the following learning algorithm.
\begin{corollary}
\label{cor:learn-pp}
  Suppose we are given a set of samples $X_1, \dots, X_n \sim P$, where $P$ is $\alpha$-close to a $(t,d,k)$-piecewise polynomial.
  Then there exists an $\ve$-differentially private algorithm which outputs a $(t,d,k)$-piecewise polynomial $H^*$ such that $\dtv(P,H^*) \leq (6+2\zeta)\alpha$ with probability $\geq 9/10$, so long as
  $$n = \Omega\left( \frac{(d+1)t}{\alpha^2} 
  + \frac{(d+1)t}{\alpha\varepsilon} \cdot \left(\sqrt{d+1} \log k + \log\left(\frac{t}{\alpha}\right)\right)\right).$$
\end{corollary}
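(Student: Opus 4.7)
The plan is to derive Corollary~\ref{cor:learn-pp} by applying Theorem~\ref{thm:sel-inf} to (the $\alpha$-cover of) the class $\mathcal{H}$ of $(t,d,k)$-piecewise polynomial distributions, combining the cover bound established in the preceding lemma with the VC dimension bound of Lemma~\ref{lem:pp-vc}. Since the semi-agnostic guarantee is only a constant-factor issue and Theorem~\ref{thm:sel-inf}'s proof (via Lemma~\ref{lem:sel-unif-conv}) already uses arbitrary slack parameters, the constant $(6+2\zeta)$ is obtained exactly as in Corollary~\ref{cor:covertolearn}: the cover $\mathcal{C}_\alpha$ contains an element $H^*$ with $\tv(P,H^*)\le 2\alpha$ by the triangle inequality, and the selection procedure then returns a hypothesis within $(3+\zeta)\cdot 2\alpha$ of $P$.

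First I would bound $\log|\mathcal{C}_\alpha|$. From the cover-size lemma,
\[
\log|\mathcal{C}_\alpha| \;\le\; \log\binom{k}{t-1} + (d+1)t\cdot\log\!\left(\frac{tk\,e^{c\sqrt{d+1}}}{\alpha}\right) \;=\; O\!\left((d+1)t\left(\sqrt{d+1}+\log\tfrac{tk}{\alpha}\right)\right),
\]
and absorbing $(d+1)t\sqrt{d+1}$ into $(d+1)^{3/2}t\log k$ (valid for $k\ge 2$) yields the compact form
\[
\log|\mathcal{C}_\alpha| \;=\; O\!\left((d+1)t\left(\sqrt{d+1}\log k+\log\tfrac{t}{\alpha}\right)\right).
\]
Next, Lemma~\ref{lem:pp-vc} gives VC dimension at most $2t(d+1)=O((d+1)t)$ for the Scheff\'e sets of $\mathcal{H}$, and hence also for $\mathcal{C}_\alpha\subseteq\mathcal{H}$ (the relevant VC class only shrinks when restricting to a subfamily).

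Finally, plugging these two quantities into Theorem~\ref{thm:sel-inf} with $\delta=0$ (so we use the pure-DP branch where $\min\{\log|\mathcal{H}|,\log(1/\delta)\}=\log|\mathcal{C}_\alpha|$, and where the locally-small term $\log(k/\beta)$ is dominated by $\log|\mathcal{C}_\alpha|$) produces the required sample complexity
\[
n \;=\; \Omega\!\left(\frac{(d+1)t}{\alpha^2} \;+\; \frac{(d+1)t\bigl(\sqrt{d+1}\log k+\log(t/\alpha)\bigr)}{\alpha\varepsilon}\right).
\]
Privacy is inherited directly from Theorem~\ref{thm:sel-inf}, and the output lies in the set of $(t,d,k)$-piecewise polynomials since $\mathcal{C}_\alpha\subseteq\mathcal{H}$.

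The proof is essentially bookkeeping: the only non-trivial steps are the two ingredients already supplied (the cover construction via Proposition~\ref{prp:one-poly-cover} and the VC bound in Lemma~\ref{lem:pp-vc}). The mildest obstacle is arithmetic—rearranging $\log|\mathcal{C}_\alpha|$ into the form $\sqrt{d+1}\log k+\log(t/\alpha)$ times $(d+1)t$—together with confirming that invoking Theorem~\ref{thm:sel-inf} on the cover (rather than on the potentially infinite raw class $\mathcal{H}$) is legitimate in the semi-agnostic setting, which follows from the triangle inequality as used in Corollary~\ref{cor:covertolearn}.
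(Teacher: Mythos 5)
Your proposal matches the paper's (implicit) derivation exactly: the corollary is obtained by instantiating the advanced selection theorem (Theorem~\ref{thm:sel-inf}, with $\delta=0$ for pure privacy) on the $\alpha$-cover from the preceding lemma, using the VC bound of Lemma~\ref{lem:pp-vc} for the statistical term and $\log|\mathcal{C}_\alpha| = O\bigl((d+1)t(\sqrt{d+1}\log k + \log(t/\alpha))\bigr)$ for the privacy term, with the triangle-inequality/$2\alpha$ bookkeeping as in Corollary~\ref{cor:covertolearn}. The only slight looseness --- that Theorem~\ref{thm:sel-inf} as literally stated gives a factor $7$ rather than $3+\zeta$, so the $(6+2\zeta)\alpha$ constant requires the $\zeta$-parameterized score of Section~\ref{sec:hyp-sel} combined with the VC uniform-convergence argument --- is present in the paper's own statements of the analogous corollaries, so your treatment is faithful to it.
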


We compare with the work of Diakonikolas, Hardt, and Schmidt~\cite{DiakonikolasHS15}.
They present an efficient algorithm for $(t,1,k)$-piecewise polynomials, with sample complexity $\tilde O\left(\frac{t}{\alpha^2} + \frac{t\log k}{\alpha \eps}\right)$, which our algorithm matches\footnote{As stated in~\cite{DiakonikolasHS15}, their algorithm guarantees approximate differential privacy, but swapping in an appropriate pure DP subroutine gives this result.}.
They also claim their results extend to $(t,d,k)$-piecewise polynomials, though no theorem statement is provided.
While we have not investigated the details of this extension, we believe the resulting sample complexity should be qualitatively similar to ours, plausibly with the factor of $\frac{t(d+1)^{3/2}\log k}{\alpha \ve}$ replaced by $\frac{t(d+1)\log k}{\alpha \ve}$.

\subsection{Mixtures}
\label{sec:mix-app}
In this section, we show that our results immediately extend to learning mixtures of classes of distributions.

\begin{definition}
  Let $\mathcal{H}$ be some set of distributions.
  A \emph{$k$-mixture of $\mathcal{H}$} is a distribution with density $\sum_{i=1}^k w_i P_i$, where each $P_i \in \mathcal{H}$.
\end{definition}
Our results follow roughly due to the fact that a cover for $k$-mixtures of a class can be written as the Cartesian product of $k$ covers for the class.
More precisely, we state the following result which bounds the size of the cover of the set of $k$-mixtures.

\begin{lemma}
  Consider the class of $k$-mixtures of $\mathcal{H}$, where $\mathcal{H}$ is some set of distributions.
  There exists a $2\alpha$-cover of this class of size $\left|\mathcal{C}_\alpha\right|^k \left(\frac{k}{2\alpha} + 1\right)^{k-1}$, where $\mathcal{C}_\alpha$ is an $\alpha$-cover of $\mathcal{H}$.
\end{lemma}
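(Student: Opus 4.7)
The plan is to prove this by constructing a cover explicitly in two pieces: one for the component distributions and one for the mixing weights, then using a standard triangle-inequality decomposition on total variation distance of mixtures.

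First, I will write down the key distance bound. For any weight vectors $w, w' \in \Delta_{k-1}$ (the probability simplex) and any component distributions $P_1, \dots, P_k, Q_1, \dots, Q_k$, one has
\begin{align*}
  \dtv\Bigl(\sum_i w_i P_i, \sum_i w'_i Q_i\Bigr)
  &\le \dtv\Bigl(\sum_i w_i P_i, \sum_i w_i Q_i\Bigr) + \dtv\Bigl(\sum_i w_i Q_i, \sum_i w'_i Q_i\Bigr) \\
  &\le \sum_i w_i \dtv(P_i, Q_i) + \tfrac{1}{2}\|w-w'\|_1,
\end{align*}
where the first bound follows by pulling the absolute value inside the weighted sum in the integral formulation of $\dtv$, and the second follows because each $Q_i$ is a probability distribution. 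Hence, if each $\dtv(P_i, Q_i) \le \alpha$ and $\|w-w'\|_1 \le 2\alpha$, then the mixtures are within $2\alpha$ in total variation distance.

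Next, I will construct the two pieces of the cover. For the components, I simply replace each $P_i$ with its nearest element of $\mathcal{C}_\alpha$, giving $|\mathcal{C}_\alpha|^k$ choices. For the weights, I grid the first $k-1$ coordinates of the simplex additively at granularity $\gamma = 2\alpha/k$, using the values $\{0, \gamma, 2\gamma, \dots\} \cap [0,1]$, which contains at most $\lfloor 1/\gamma \rfloor + 1 \le k/(2\alpha) + 1$ points per coordinate. The last coordinate is determined by normalization. Each of the first $k-1$ coordinates incurs rounding error at most $\gamma/2 = \alpha/k$, and the last coordinate absorbs at most the sum of the first $k-1$ errors, i.e.\ at most $(k-1)\alpha/k$, so the total $\ell_1$ error of the weight cover is at most $2(k-1)\alpha/k \le 2\alpha$. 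This gives $(k/(2\alpha)+1)^{k-1}$ weight vectors in the cover (after discarding any that yield a negative final coordinate, which only decreases the count).

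Taking the Cartesian product of the two constructions yields a cover of size at most $|\mathcal{C}_\alpha|^k \cdot (k/(2\alpha) + 1)^{k-1}$. Given an arbitrary $k$-mixture $\sum_i w_i P_i$, choose $Q_i \in \mathcal{C}_\alpha$ with $\dtv(P_i, Q_i) \le \alpha$ and a gridded $w'$ with $\|w-w'\|_1 \le 2\alpha$; the initial bound then gives $\dtv\bigl(\sum_i w_i P_i, \sum_i w'_i Q_i\bigr) \le \alpha + \alpha = 2\alpha$, completing the proof. I do not anticipate any real obstacle here; the only mildly delicate step is bookkeeping the weight-cover so that its size matches $(k/(2\alpha)+1)^{k-1}$ rather than $(k/(2\alpha)+1)^k$, which is handled by leaving the last coordinate implicit.
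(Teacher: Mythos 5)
Your proof is correct and follows essentially the same route as the paper's: the same triangle-inequality decomposition $\dtv\bigl(\sum_i w_i P_i, \sum_i w'_i Q_i\bigr) \le \sum_i w_i \dtv(P_i,Q_i) + \tfrac{1}{2}\|w-w'\|_1$, covering the components by $\mathcal{C}_\alpha^k$ and the weights by an additive grid of granularity $2\alpha/k$ on $k-1$ coordinates with the last determined by normalization. The only difference is presentational (you state the decomposition inequality slightly more explicitly), so no further comparison is needed.
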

\begin{proof}
  Each element in the cover of the class of mixtures will be obtained by taking $k$ distributions from $\mathcal{C}_\alpha$, in combination with $k$ mixing weights, which are selected from the set $\left\{0, \frac{2\alpha}{k}, \frac{4\alpha}{k}, \dots, 1\right\}$, such that the sum of the mixing weights is $1$.
  The size of this cover is $|\mathcal{C}_\alpha|^k \cdot \left(\frac{k}{2\alpha} + 1\right)^{k-1}$.
  We reason about the accuracy of the cover as follows.
  Fix some mixture of $k$ distributions as $\sum_{i=1}^k w^{(1)}_i P^{(1)}_i$, and we will reason about the closest element in our cover, $\sum_{i=1}^k w^{(2)}_i P^{(2)}_i$.
  By triangle inequality, we have that
  $$\dtv\left(\sum_{i=1}^k w^{(1)}_i P^{(1)}_i, \sum_{i=1}^k w^{(2)}_i P^{(2)}_i\right) \leq \sum_{i=1}^k \frac{1}{2}\left|w^{(1)}_i - w^{(2)}_i\right| +  w^{(1)}_i \dtv\left(P^{(1)}_i, P^{(2)}_i\right). $$
  Since $\mathcal{C}_\alpha$ is an $\alpha$-cover and $\sum_{i=1}^k w^{(1)}_i = 1$, the total variation distance incurred by the second term will be at most $\alpha$.
  As for the mixing weights, note that for the first $k-1$ weights, the nearest weight is at distance at most $\frac{\alpha}{k}$, contributing a total of less than $\frac{\alpha}{2}$.
  The last mixing weight can be rewritten in terms of the sum of the errors of the other mixing weights, similarly contributing another total of less than $\frac{\alpha}{2}$.
  This results in the total error being at most $2\alpha$, as desired.
\end{proof}

With this in hand, the following corollary is almost immediate from Corollary~\ref{cor:covertolearn}.
The factor of $(9+3\zeta)\alpha$ (as opposed to $(6+2\zeta)\alpha$) is because the closest distribution in the cover of mixture distributions is $3\alpha$-close to be $P$ (rather than $2\alpha$).

\begin{corollary}
  \label{cor:learn-mixture}
  Let $X_1, \dots, X_n \sim P$, where $P$ is $\alpha$-close to a $k$-mixture of distributions from some set $\mathcal{H}$.
  Let $\mathcal{C}_\alpha$ be an $\alpha$-cover of the set $\mathcal{H}$, and $\zeta > 0$ be a constant.
  There exists an $\ve$-differentially private algorithm which outputs a distribution which is $(9+3\zeta)\alpha$-close to $P$ with probability $\geq 9/10$, as long as
  $$n = \Omega\left(\left(k \log |\mathcal{C}_\alpha| + k \log (k/\alpha)\right)\left(\frac{1}{\alpha^2} + \frac{1}{\alpha\ve}\right)\right).$$
\end{corollary}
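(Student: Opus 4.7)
The plan is to reduce the mixture-learning problem to an application of Theorem~\ref{thm:tournament} on the explicit finite cover of the mixture class constructed in the preceding lemma. Let $\mathcal{M}_k(\mathcal{H})$ denote the set of $k$-mixtures of distributions in $\mathcal{H}$, and let $\mathcal{C}^{\mathrm{mix}}$ be the $2\alpha$-cover of $\mathcal{M}_k(\mathcal{H})$ provided by that lemma, so that
\[
\log |\mathcal{C}^{\mathrm{mix}}| \le k \log |\mathcal{C}_\alpha| + (k-1)\log\!\left(\tfrac{k}{2\alpha} + 1\right) = O\!\left(k\log|\mathcal{C}_\alpha| + k\log(k/\alpha)\right).
\]
Note that $\mathcal{C}^{\mathrm{mix}}$ is a data-independent, finite collection of distributions, so we may instantiate our finite-class selection algorithm on it directly.

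Next, I would use the triangle inequality to identify a target hypothesis inside $\mathcal{C}^{\mathrm{mix}}$ that is close to $P$. By hypothesis $\dtv(P, \mathcal{M}_k(\mathcal{H})) \le \alpha$, and by the covering property there is an element of $\mathcal{C}^{\mathrm{mix}}$ within $2\alpha$ of the closest mixture to $P$; combining these gives some $H^\star \in \mathcal{C}^{\mathrm{mix}}$ with $\dtv(P, H^\star) \le 3\alpha$.

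Now I would invoke Theorem~\ref{thm:tournament} on $\mathcal{H} \coloneqq \mathcal{C}^{\mathrm{mix}}$ with accuracy parameter $3\alpha$ in place of $\alpha$. Privacy is inherited verbatim from Theorem~\ref{thm:tournament}, since the candidate class is fixed independently of the data. The utility guarantee produces $\hat H \in \mathcal{C}^{\mathrm{mix}}$ with
\[
\dtv(P, \hat H) \le (3+\zeta)\cdot 3\alpha = (9+3\zeta)\alpha
\]
with probability at least $9/10$, provided
\[
n = \Omega\!\left(\frac{\log |\mathcal{C}^{\mathrm{mix}}|}{(3\alpha)^2} + \frac{\log |\mathcal{C}^{\mathrm{mix}}|}{(3\alpha)\varepsilon}\right) = \Omega\!\left(\left(k\log|\mathcal{C}_\alpha| + k\log(k/\alpha)\right)\left(\tfrac{1}{\alpha^2} + \tfrac{1}{\alpha\varepsilon}\right)\right),
\]
which is exactly the stated bound.

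There is no real obstacle here; the only thing to track carefully is the approximation factor. The $(9+3\zeta)$ rather than $(6+2\zeta)$ comes precisely from the fact that the mixture cover has granularity $2\alpha$ (instead of $\alpha$), so the nearest cover element is at distance at most $3\alpha$ from $P$ rather than $2\alpha$; feeding this through the $(3+\zeta)$ approximation factor of Theorem~\ref{thm:tournament} gives the asserted constant, as noted in the remark immediately preceding the corollary statement.
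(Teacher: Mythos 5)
Your proposal is correct and follows essentially the same route as the paper: apply the preceding lemma to obtain the $2\alpha$-cover of the $k$-mixture class, note via the triangle inequality that some cover element is $3\alpha$-close to $P$, and run the finite-class selection guarantee (Theorem~\ref{thm:tournament}, equivalently Corollary~\ref{cor:covertolearn}) with accuracy parameter $3\alpha$, which yields the $(3+\zeta)\cdot 3\alpha = (9+3\zeta)\alpha$ factor and the stated sample complexity. Your accounting of where the $(9+3\zeta)$ constant comes from matches the paper's remark exactly.
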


For example, instantiating this for mixtures of Gaussians (and disregarding terms which depend on $R$ and $\kappa$), we get an algorithm with sample complexity $\tilde O\left(\frac{kd^2}{\alpha^2} + \frac{kd^2}{\alpha\ve}\right)$.

\subsection{Supervised Learning}
\label{sec:pac-app}

We describe an application of our results to the task of binary classification, as modeled by differentially private PAC learning~\cite{KasiviswanathanLNRS11}. Let $\mathcal{F} = \{f : X \to \{0, 1\}\}$ be a publicly known \emph{concept class} of Boolean functions over a domain $X$. Let $P$ be an unknown probability distribution over $X$, and let $f$ be an unknown function from $\mathcal{F}$. Given a sequence $\{(x_i, f(x_i))\}_{i = 1}^n$ of i.i.d.\ samples from $P$ together with their labels under $f$, the goal of a PAC learner $L$ is to identify a hypothesis $h : X \to \{0, 1\}$ such that $\Pr_{x \sim P} [h(x) \ne f(x)] \le \alpha$ for some error parameter $\alpha > 0$. We say that $L$ is \emph{$(\alpha, \beta)$-accurate} if for every $f \in \mathcal{F}$ and every distribution $P$, it is able to identify such a hypothesis $h$ with probability at least $1 - \beta$ over the choice of the sample and any internal randomness of $L$.

One of the core results of statistical learning theory is that the sample complexity of \emph{non-private} PAC learning is characterized, up to constant factors, by the VC dimension of the concept class $\mathcal{F}$. When one additionally requires the learner $L$ to be differentially private with respect to its input sample, such a characterization is unknown. However, it is known that the sample complexity of private learning can be arbitrarily higher than that of non-private learning. For example, when $\mathcal{F} = \{f_t : t \in X\}$ is the class of threshold functions defined by $f_t(x) = 1 \iff x \le t$ over a totally ordered domain $X$, the sample complexity of PAC learning under the most permissive notion of $(\eps, \delta)$-differential privacy is $\Omega(\log^* |X|)$~\cite{BunNSV15, AlonLMM19}. Meanwhile, the VC dimension of this class, and hence the sample complexity of non-private learning, is a constant independent of $|X|$.

While this separation shows that there can be a sample cost of privacy for PAC learning, this cost can be completely eliminated if the distribution $P$ on examples is known. This was observed by Beimel, Nissim, and Stemmer~\cite{BeimelNS16}, who showed that if a good approximation to $P$ is known, e.g., from public unlabeled examples or from differentially private processing of unlabeled examples, then the number of labeled examples needed for private PAC learning is only $O(VC(\mathcal{F}))$. 

\begin{theorem} \label{thm:public}
Let $\eps> 0$, $\mathcal{F} = \{f : X \to \{0, 1\}\}$, and $P$ be a publicly known distribution over $X$. For $n = O\left(\frac{1}{\alpha^2\eps}(VC(\mathcal{F}) \log(1/\alpha) + \log(1/\beta))\right)$, there exists an $\eps$-differentially private algorithm $L : (X \times \{0, 1\})^n \to \mathcal{F}$ such that for every $f \in \mathcal{F}$, with probability at least $1-\beta$ over the choice of $x_1, \dots, x_n \gets P$, we have that $L((x_1, f(x_1)), \dots, (x_n, f(x_n)))$ produces $h \in \mathcal{F}$ such that $\Pr_{x \sim P} [f(x) \ne h(x)] \le \alpha$.
\end{theorem}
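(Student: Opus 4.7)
The plan is to reduce distribution-specific private PAC learning to the private hypothesis selection problem already solved by Theorem~\ref{thm:tournament}. Associate to each Boolean function $h \in \mathcal{F}$ the distribution $P_h$ on $X \times \{0,1\}$ defined by drawing $x \sim P$ and outputting $(x, h(x))$. The labeled sample $\{(x_i, f(x_i))\}_{i=1}^n$ is then an i.i.d.\ draw from $P_f$, and since any two such $P_h, P_{h'}$ share the marginal $P$ on $X$ and disagree only on the label bit, one has the clean identity
\[
\dtv(P_h, P_{h'}) \;=\; \Pr_{x \sim P}[h(x) \ne h'(x)].
\]
In this lift, PAC error against $f$ is literally total variation distance.

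Because $P$ is public, the learner may construct --- without ever touching the private sample --- an $\alpha$-cover $\mathcal{C} \subseteq \mathcal{F}$ of $\mathcal{F}$ under the pseudo-metric $d_P(h,h') := \Pr_{x \sim P}[h(x) \ne h'(x)]$. Haussler's classical covering bound for VC classes supplies such a cover of size $|\mathcal{C}| \le (c/\alpha)^{VC(\mathcal{F})}$ for an absolute constant $c$. Lifting, $\{P_h : h \in \mathcal{C}\}$ is a total-variation $\alpha$-cover of $\{P_h : h \in \mathcal{F}\}$ that in particular contains some $P_{h^*}$ with $\dtv(P_{h^*}, P_f) \le \alpha$. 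Feeding the labeled sample and this finite hypothesis family into the private hypothesis selection algorithm of Theorem~\ref{thm:tournament}, the utility guarantee of Lemma~\ref{lem:tournament-utility} produces $\hat h \in \mathcal{C}$ with $\dtv(P_{\hat h}, P_f) \le (3+\zeta)\alpha$ --- equivalently $\Pr_{x \sim P}[\hat h(x) \ne f(x)] \le (3+\zeta)\alpha$ --- with probability at least $1-\beta$, provided
\[
n \;=\; \Omega\!\left(\bigl(\log|\mathcal{C}|+\log(1/\beta)\bigr)\!\left(\tfrac{1}{\alpha^2}+\tfrac{1}{\alpha\varepsilon}\right)\right).
\]
Substituting the cover bound $\log|\mathcal{C}| \le VC(\mathcal{F})\log(c/\alpha)$ and rescaling $\alpha$ by the constant $3+\zeta$ yields the sample complexity claimed in the theorem.

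The only genuinely substantive ingredient is the VC-based covering number bound; everything else is bookkeeping plus an appeal to the hypothesis-selection machinery already developed. Privacy is immediate: the cover depends only on the public objects $P$ and $\mathcal{F}$, the selection step is $\varepsilon$-differentially private by Lemma~\ref{lem:privacy}, and the map $P_{\hat h} \mapsto \hat h$ is deterministic post-processing. One implementation remark: in this lifted problem the Scheff\'e set for a pair $(P_h, P_{h'})$ takes the concrete form $\{(x,b) : h(x)=b \ne h'(x)\}$, so the empirical mass used inside each pairwise contest is just the fraction of labeled examples correctly classified by $h$ but misclassified by $h'$, making the score function from Section~\ref{sec:exp-mech} straightforward to evaluate given sample access to $P$ and evaluation access to each $h \in \mathcal{C}$.
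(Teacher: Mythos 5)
Your proof is correct, but note that the paper does not actually prove Theorem~\ref{thm:public}: it is imported from Beimel, Nissim, and Stemmer~\cite{BeimelNS16} and stated without an in-paper argument, so there is no internal proof to compare against. Your route --- lifting each $h\in\mathcal{F}$ to the labeled-example distribution $P_h$, using the identity $\dtv(P_h,P_{h'})=\Pr_{x\sim P}[h(x)\ne h'(x)]$, taking a Haussler-type $\alpha$-cover of size $(c/\alpha)^{VC(\mathcal{F})}$ built solely from the public $P$ (e.g.\ a maximal packing, so the cover lies inside $\mathcal{F}$ and the learner stays proper), and then invoking Lemma~\ref{lem:tournament-utility} --- is a valid alternative to the original argument of~\cite{BeimelNS16}, which instead runs the exponential mechanism directly over such a cover with (negative) empirical error as the score and finishes with a uniform-convergence/Chernoff step. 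Both proofs hinge on the same covering bound $\log|\mathcal{C}|=O(VC(\mathcal{F})\log(1/\alpha))$; what your reduction buys is that the PAC statement becomes a corollary of the hypothesis-selection machinery already developed in the paper, with the Scheff\'e masses $p_1,p_2$ computable from the public $P$ and privacy preserved because one labeled example maps to exactly one lifted example, so neighboring labeled samples remain neighboring inputs to $\PHS$. The costs are minor: the $(3+\zeta)$ approximation factor must be absorbed by rescaling $\alpha$ (harmless in this realizable setting), and your derived bound $\bigl(VC(\mathcal{F})\log(1/\alpha)+\log(1/\beta)\bigr)\bigl(\alpha^{-2}+\alpha^{-1}\varepsilon^{-1}\bigr)$ matches the stated $O\bigl(\frac{1}{\alpha^2\varepsilon}(VC(\mathcal{F})\log(1/\alpha)+\log(1/\beta))\bigr)$ only when $\varepsilon\le 1$, which is the regime in which that form of the bound is meaningful in any case.
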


Our results suggest a natural two-step algorithm for private PAC learning when the distribution $P$ itself is not known, but is known to (approximately) come from a set of distributions $\mathcal{H}$: The algorithm first uses private hypothesis selection to select $\hat{H}$ with $\dtv(P, \hat{H}) \le \alpha / 2$, and then runs the algorithm of~\cite{BeimelNS16} using $\hat{H}$ in place of $P$ with error parameter $\alpha / 2$. Using the fact that $\dtv(P, \hat{H}) \le \alpha / 2$ implies $|\Pr_{x \sim P}[f(x) \ne h(x)] - \Pr_{x \sim \hat{H}}[f(x) \ne h(x)]| \le \alpha / 2$, the following result holds by combining Theorem~\ref{thm:public} with Corollary~\ref{cor:covertolearn}.

\begin{corollary}
  \label{cor:pac}
  Let $\mathcal{H}$ be a set of distributions over $X$ with an $\alpha$-cover $\mathcal{C}_\alpha$. Let $P$ be a distribution over $X$ with $\dtv(P, \mathcal{H}) \le \alpha/(4(3+\zeta))$. Then for
\[n = O \left( \frac{\log |\mathcal{C}_\alpha|}{\alpha^2} + \frac{\log |\mathcal{C}_\alpha|}{\alpha \varepsilon } + \frac{VC(\mathcal{F})\log(1/\alpha)}{\alpha^2\eps} \right)\]
there exists an $\eps$-differentially private algorithm $L : (X \times \{0, 1\})^n \to \mathcal{F}$ such that for every $f \in \mathcal{F}$, with probability at least $3/4$ over the choice of $x_1, \dots, x_n \gets P$, we have that $L((x_1, f(x_1)), \dots, (x_n, f(x_n)))$ produces $h \in \mathcal{F}$ such that $\Pr_{x \sim P} [f(x) \ne h(x)] \le \alpha$.
\end{corollary}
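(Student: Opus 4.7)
The plan is to prove this corollary by a direct two-step reduction that composes the private hypothesis selection algorithm of Corollary~\ref{cor:covertolearn} with the private PAC learner of Theorem~\ref{thm:public} due to~\cite{BeimelNS16}. The input is a labeled dataset $(x_1, f(x_1)), \dots, (x_n, f(x_n))$ with $x_i \sim P$, and we will route it through two $(\varepsilon/2)$-differentially private subroutines whose composition yields the final $\varepsilon$-differentially private algorithm.

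First, I would discard the labels and apply Corollary~\ref{cor:covertolearn} to the unlabeled points $x_1, \dots, x_n$ with accuracy parameter $\alpha_0 := \alpha/(4(3+\zeta))$ and privacy parameter $\varepsilon/2$. Since $\dtv(P,\mathcal{H}) \leq \alpha_0$ by hypothesis, this yields (with probability at least $7/8$, say) a distribution $\hat{H} \in \mathcal{C}_{\alpha_0}$ with $\dtv(P,\hat{H}) \leq (6+2\zeta)\alpha_0 = \alpha/2$. Because this step only looks at the $x_i$'s, which are a deterministic projection of the labeled dataset, it is $\varepsilon/2$-differentially private with respect to the labeled input.

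Next, treating the released $\hat{H}$ as public, I would feed it together with the original labeled sample into the algorithm of Theorem~\ref{thm:public}, playing the role of the ``known'' distribution with accuracy parameter $\alpha/2$, constant failure probability, and privacy parameter $\varepsilon/2$. By post-processing and basic composition of differential privacy, the resulting two-step procedure is $\varepsilon$-differentially private overall. For utility, I would condition on both subroutines succeeding (union bound brings the probability to at least $3/4$). The output hypothesis $h \in \mathcal{F}$ then satisfies $\Pr_{x \sim \hat{H}}[f(x) \neq h(x)] \leq \alpha/2$, and since total variation distance upper bounds the discrepancy of any event, $\lvert \Pr_{x \sim P}[f(x) \neq h(x)] - \Pr_{x \sim \hat{H}}[f(x) \neq h(x)] \rvert \leq \dtv(P,\hat{H}) \leq \alpha/2$, which yields the desired $\Pr_{x \sim P}[f(x) \neq h(x)] \leq \alpha$.

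The only real subtlety — and the step I would spend the most care on — is that Theorem~\ref{thm:public} is phrased for samples drawn from the ``known'' distribution itself, whereas we invoke it with $\hat{H}$ as the known distribution while the labeled samples are actually drawn from $P$. This is handled by noting that the distribution is used by the BNS learner only to build a small finite cover of $\mathcal{F}$ under the pseudo-metric induced by $\hat{H}$, over which the learner then runs a private selection against the empirical classification error on the labeled sample. Because $\dtv(P,\hat{H}) \leq \alpha/2$, any such cover is simultaneously an $\alpha$-cover of $\mathcal{F}$ under $P$, so uniform convergence over the cover on the empirical $P$-labeled sample, combined with the TV-to-classification-error translation above, delivers a hypothesis with $P$-error at most $\alpha$. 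The final sample complexity is obtained by taking the maximum of the requirements from Corollary~\ref{cor:covertolearn} (applied with $\alpha_0 = \Theta(\alpha)$ and privacy $\varepsilon/2$) and from Theorem~\ref{thm:public} (applied with accuracy $\alpha/2$, privacy $\varepsilon/2$, and constant $\beta$), absorbing the constant $\zeta$ into the asymptotic notation.
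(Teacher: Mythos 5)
Your proposal is correct and follows essentially the same route as the paper: privately select $\hat H$ with $\dtv(P,\hat H)\le\alpha/2$ via Corollary~\ref{cor:covertolearn} (at scale $\alpha/(4(3+\zeta))$, exactly matching the paper's intended parameter choice), then run the learner of Theorem~\ref{thm:public} with $\hat H$ in place of $P$ at error $\alpha/2$, combining privacy by composition and accuracy via the bound $\lvert\Pr_{x\sim P}[f(x)\ne h(x)]-\Pr_{x\sim\hat H}[f(x)\ne h(x)]\rvert\le\dtv(P,\hat H)$. The subtlety you flag about the labeled samples coming from $P$ rather than the ``known'' distribution $\hat H$ is precisely what the paper dispatches with that same total-variation translation, so your more detailed treatment of the BNS internals is consistent with, and slightly more explicit than, the paper's argument.
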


Theorem~\ref{thm:public} can, of course, also be combined with the more refined guarantees of Theorem~\ref{thm:sel-inf}. As an example application, combining Theorem~\ref{thm:public} with Corollary~\ref{cor:learn-gauss-mean2} gives a $(\ve, \delta)$-differentially private algorithm for learning one-dimension thresholds with respect to univariate Gaussian distributions on the reals. In contrast, this task is impossible without making distributional assumptions.

\section{Conclusions}
\label{sec:conclusions}
In this paper, we presented differentially private methods for hypothesis selection.
The sample complexity can be bounded by the logarithm of the number of hypotheses.
This allows us to provide bounds on the sample complexity of (semi-agnostically) learning a class which depend on the logarithm of the covering number, complementing known lower bounds which depend on the logarithm of the packing number.
There are many interesting questions left open by our work, a few of which we outline below.
\begin{enumerate}
  \item Our algorithms for learning classes of distributions all use cover-based arguments, and thus are not computationally efficient.
    For instance, we provide the first $\tilde O(d)$ sample complexity upper bound on $\varepsilon$-differentially privately learning a product distribution and Gaussian with known covariance.
    One interesting question is whether there is an efficient algorithm which achieves this sample complexity.
  \item The running time of our method is quadratic in the number of hypotheses -- is it possible to reduce this to a near-linear time complexity?
  \item Our main theorem obtains an approximation factor which is arbitrarily close to $3$, which is optimal for this problem, even without privacy. 
    This factor can be reduced to 2 if one is OK with outputting a mixture of hypotheses from the set~\cite{BousquetKM19}.
    Is this achievable with privacy constraints?
\end{enumerate}

\if\nips0
\section*{Acknowledgments}
The authors would like to thank Shay Moran for bringing to their attention the application to PAC learning mentioned in Section~\ref{sec:pac-app}, Jonathan Ullman for asking questions which motivated Remark~\ref{rem:separation}, and Cl\'ement Canonne for assistance in reducing the constant factor in the approximation guarantee.
\fi

\bibliographystyle{alpha}
\bibliography{biblio}

\newcommand{\etalchar}[1]{$^{#1}$}
\begin{thebibliography}{ABDH{\etalchar{+}}18}

\bibitem[AAAK21]{AdenAliAK21}
Ishaq Aden-Ali, Hassan Ashtiani, and Gautam Kamath.
\newblock On the sample complexity of privately learning unbounded
  high-dimensional gaussians.
\newblock In {\em Proceedings of the 32nd International Conference on
  Algorithmic Learning Theory}, ALT '21. JMLR, Inc., 2021.

\bibitem[ABDH{\etalchar{+}}18]{AshtianiBHLMP18}
Hassan Ashtiani, Shai Ben-David, Nicholas Harvey, Christopher Liaw, Abbas
  Mehrabian, and Yaniv Plan.
\newblock Nearly tight sample complexity bounds for learning mixtures of
  {G}aussians via sample compression schemes.
\newblock In {\em Advances in Neural Information Processing Systems 31},
  NeurIPS '18, pages 3412--3421. Curran Associates, Inc., 2018.

\bibitem[ABG{\etalchar{+}}14]{AndersonBGRV14}
Joseph Anderson, Mikhail Belkin, Navin Goyal, Luis Rademacher, and James~R.
  Voss.
\newblock The more, the merrier: the blessing of dimensionality for learning
  large {G}aussian mixtures.
\newblock In {\em Proceedings of the 27th Annual Conference on Learning
  Theory}, COLT '14, pages 1135--1164, 2014.

\bibitem[ADK15]{AcharyaDK15}
Jayadev Acharya, Constantinos Daskalakis, and Gautam Kamath.
\newblock Optimal testing for properties of distributions.
\newblock In {\em Advances in Neural Information Processing Systems 28}, NIPS
  '15, pages 3577--3598. Curran Associates, Inc., 2015.

\bibitem[ADLS17]{AcharyaDLS17}
Jayadev Acharya, Ilias Diakonikolas, Jerry Li, and Ludwig Schmidt.
\newblock Sample-optimal density estimation in nearly-linear time.
\newblock In {\em Proceedings of the 28th Annual ACM-SIAM Symposium on Discrete
  Algorithms}, SODA '17, pages 1278--1289, Philadelphia, PA, USA, 2017. SIAM.

\bibitem[AFJ{\etalchar{+}}18]{AcharyaFJOS18}
Jayadev Acharya, Moein Falahatgar, Ashkan Jafarpour, Alon Orlitsky, and
  Ananda~Theertha Suresh.
\newblock Maximum selection and sorting with adversarial comparators.
\newblock {\em Journal of Machine Learning Research}, 19(1):2427--2457, 2018.

\bibitem[AJOS14]{AcharyaJOS14b}
Jayadev Acharya, Ashkan Jafarpour, Alon Orlitsky, and Ananda~Theertha Suresh.
\newblock Sorting with adversarial comparators and application to density
  estimation.
\newblock In {\em Proceedings of the 2014 IEEE International Symposium on
  Information Theory}, ISIT '14, pages 1682--1686, Washington, DC, USA, 2014.
  IEEE Computer Society.

\bibitem[AK01]{AroraK01}
Sanjeev Arora and Ravi Kannan.
\newblock Learning mixtures of arbitrary {G}aussians.
\newblock In {\em Proceedings of the 33rd Annual ACM Symposium on the Theory of
  Computing}, STOC '01, pages 247--257, New York, NY, USA, 2001. ACM.

\bibitem[AKSZ18]{AcharyaKSZ18}
Jayadev Acharya, Gautam Kamath, Ziteng Sun, and Huanyu Zhang.
\newblock Inspectre: Privately estimating the unseen.
\newblock In {\em Proceedings of the 35th International Conference on Machine
  Learning}, ICML '18, pages 30--39. JMLR, Inc., 2018.

\bibitem[ALMM19]{AlonLMM19}
Noga Alon, Roi Livni, Maryanthe Malliaris, and Shay Moran.
\newblock Private {PAC} learning implies finite {L}ittlestone dimension.
\newblock In {\em Proceedings of the 51st Annual ACM Symposium on the Theory of
  Computing}, STOC '19, pages 852--860, New York, NY, USA, 2019. ACM.

\bibitem[AM05]{AchlioptasM05}
Dimitris Achlioptas and Frank McSherry.
\newblock On spectral learning of mixtures of distributions.
\newblock In {\em Proceedings of the 18th Annual Conference on Learning
  Theory}, COLT '05, pages 458--469. Springer, 2005.

\bibitem[Ant95]{Anthony95}
Martin Anthony.
\newblock Classification by polynomial surfaces.
\newblock {\em Discrete Applied Mathematics}, 61(2):91--103, 1995.

\bibitem[AS12]{AwasthiS12}
Pranjal Awasthi and Or~Sheffet.
\newblock Improved spectral-norm bounds for clustering.
\newblock In {\em Approximation, Randomization, and Combinatorial Optimization.
  Algorithms and Techniques.}, APPROX '12, pages 37--49. Springer, 2012.

\bibitem[ASZ19]{AcharyaSZ19}
Jayadev Acharya, Ziteng Sun, and Huanyu Zhang.
\newblock Hadamard response: Estimating distributions privately, efficiently,
  and with little communication.
\newblock In {\em Proceedings of the 22nd International Conference on
  Artificial Intelligence and Statistics}, AISTATS '19, pages 1120--1129. JMLR,
  Inc., 2019.

\bibitem[BBKN14]{BeimelBKN14}
Amos Beimel, Hai Brenner, Shiva~Prasad Kasiviswanathan, and Kobbi Nissim.
\newblock Bounds on the sample complexity for private learning and private data
  release.
\newblock {\em Machine Learning}, 94(3):401--437, 2014.

\bibitem[BCMV14]{BhaskaraCMV14}
Aditya Bhaskara, Moses Charikar, Ankur Moitra, and Aravindan Vijayaraghavan.
\newblock Smoothed analysis of tensor decompositions.
\newblock In {\em Proceedings of the 46th Annual ACM Symposium on the Theory of
  Computing}, STOC '14, pages 594--603, New York, NY, USA, 2014. ACM.

\bibitem[BDMN05]{BlumDMN05}
Avrim Blum, Cynthia Dwork, Frank McSherry, and Kobbi Nissim.
\newblock Practical privacy: The {SuLQ} framework.
\newblock In {\em Proceedings of the 24th ACM SIGMOD-SIGACT-SIGART Symposium on
  Principles of Database Systems}, PODS '05, pages 128--138, New York, NY, USA,
  2005. ACM.

\bibitem[BDRS18]{BunDRS18}
Mark Bun, Cynthia Dwork, Guy~N. Rothblum, and Thomas Steinke.
\newblock Composable and versatile privacy via truncated cdp.
\newblock In {\em Proceedings of the 50th Annual ACM Symposium on the Theory of
  Computing}, STOC '18, pages 74--86, New York, NY, USA, 2018. ACM.

\bibitem[BKM19]{BousquetKM19}
Olivier Bousquet, Daniel~M. Kane, and Shay Moran.
\newblock The optimal approximation factor in density estimation.
\newblock In {\em Proceedings of the 32nd Annual Conference on Learning
  Theory}, COLT '19, pages 318--341, 2019.

\bibitem[BNS16]{BeimelNS16}
Amos Beimel, Kobbi Nissim, and Uri Stemmer.
\newblock Private learning and sanitization: Pure vs. approximate differential
  privacy.
\newblock {\em Theory of Computing}, 12(1):1--61, 2016.

\bibitem[BNSV15]{BunNSV15}
Mark Bun, Kobbi Nissim, Uri Stemmer, and Salil Vadhan.
\newblock Differentially private release and learning of threshold functions.
\newblock In {\em Proceedings of the 56th Annual IEEE Symposium on Foundations
  of Computer Science}, FOCS '15, pages 634--649, Washington, DC, USA, 2015.
  IEEE Computer Society.

\bibitem[BS10]{BelkinS10}
Mikhail Belkin and Kaushik Sinha.
\newblock Polynomial learning of distribution families.
\newblock In {\em Proceedings of the 51st Annual IEEE Symposium on Foundations
  of Computer Science}, FOCS '10, pages 103--112, Washington, DC, USA, 2010.
  IEEE Computer Society.

\bibitem[BS16]{BunS16}
Mark Bun and Thomas Steinke.
\newblock Concentrated differential privacy: Simplifications, extensions, and
  lower bounds.
\newblock In {\em Proceedings of the 14th Conference on Theory of
  Cryptography}, TCC '16-B, pages 635--658, Berlin, Heidelberg, 2016. Springer.

\bibitem[BUV14]{BunUV14}
Mark Bun, Jonathan Ullman, and Salil Vadhan.
\newblock Fingerprinting codes and the price of approximate differential
  privacy.
\newblock In {\em Proceedings of the 46th Annual ACM Symposium on the Theory of
  Computing}, STOC '14, pages 1--10, New York, NY, USA, 2014. ACM.

\bibitem[CDSS14a]{ChanDSS14a}
Siu~On Chan, Ilias Diakonikolas, Rocco~A. Servedio, and Xiaorui Sun.
\newblock Efficient density estimation via piecewise polynomial approximation.
\newblock In {\em Proceedings of the 46th Annual ACM Symposium on the Theory of
  Computing}, STOC '14, pages 604--613, New York, NY, USA, 2014. ACM.

\bibitem[CDSS14b]{ChanDSS14b}
Siu~On Chan, Ilias Diakonikolas, Rocco~A. Servedio, and Xiaorui Sun.
\newblock Near-optimal density estimation in near-linear time using
  variable-width histograms.
\newblock In {\em Advances in Neural Information Processing Systems 27}, NIPS
  '14, pages 1844--1852. Curran Associates, Inc., 2014.

\bibitem[CKM{\etalchar{+}}19]{CanonneKMSU19}
Cl{\'e}ment~L. Canonne, Gautam Kamath, Audra McMillan, Adam Smith, and Jonathan
  Ullman.
\newblock The structure of optimal private tests for simple hypotheses.
\newblock In {\em Proceedings of the 51st Annual ACM Symposium on the Theory of
  Computing}, STOC '19, pages 310--321, New York, NY, USA, 2019. ACM.

\bibitem[CR92]{CoppersmithR92}
Don Coppersmith and Theodore~J. Rivlin.
\newblock The growth of polynomials bounded at equally spaced points.
\newblock {\em SIAM Journal on Mathematical Analysis}, 23(4):970--983, 1992.

\bibitem[CR08a]{ChaudhuriR08b}
Kamalika Chaudhuri and Satish Rao.
\newblock Beyond {G}aussians: Spectral methods for learning mixtures of
  heavy-tailed distributions.
\newblock In {\em Proceedings of the 21st Annual Conference on Learning
  Theory}, COLT '08, pages 21--32, 2008.

\bibitem[CR08b]{ChaudhuriR08a}
Kamalika Chaudhuri and Satish Rao.
\newblock Learning mixtures of product distributions using correlations and
  independence.
\newblock In {\em Proceedings of the 21st Annual Conference on Learning
  Theory}, COLT '08, pages 9--20, 2008.

\bibitem[CSU{\etalchar{+}}19]{CheuSUZZ19}
Albert Cheu, Adam Smith, Jonathan Ullman, David Zeber, and Maxim Zhilyaev.
\newblock Distributed differential privacy via shuffling.
\newblock In {\em Proceedings of the 38th Annual International Conference on
  the Theory and Applications of Cryptographic Techniques}, EUROCRYPT '19,
  pages 375--403, Berlin, Heidelberg, 2019. Springer.

\bibitem[CU20]{CheuU20}
Albert Cheu and Jonathan Ullman.
\newblock The limits of pan privacy and shuffle privacy for learning and
  estimation.
\newblock {\em arXiv preprint arXiv:2009.08000}, 2020.

\bibitem[CWZ19]{CaiWZ19}
T.~Tony Cai, Yichen Wang, and Linjun Zhang.
\newblock The cost of privacy: Optimal rates of convergence for parameter
  estimation with differential privacy.
\newblock {\em arXiv preprint arXiv:1902.04495}, 2019.

\bibitem[Das99]{Dasgupta99}
Sanjoy Dasgupta.
\newblock Learning mixtures of {G}aussians.
\newblock In {\em Proceedings of the 40th Annual IEEE Symposium on Foundations
  of Computer Science}, FOCS '99, pages 634--644, Washington, DC, USA, 1999.
  IEEE Computer Society.

\bibitem[DDKT16]{DaskalakisDKT16}
Constantinos Daskalakis, Anindya De, Gautam Kamath, and Christos Tzamos.
\newblock A size-free {CLT} for {P}oisson multinomials and its applications.
\newblock In {\em Proceedings of the 48th Annual ACM Symposium on the Theory of
  Computing}, STOC '16, pages 1074--1086, New York, NY, USA, 2016. ACM.

\bibitem[DDO{\etalchar{+}}13]{DaskalakisDOST13}
Constantinos Daskalakis, Ilias Diakonikolas, Ryan O'Donnell, Rocco~A. Servedio,
  and Li~Yang Tan.
\newblock Learning sums of independent integer random variables.
\newblock In {\em Proceedings of the 54th Annual IEEE Symposium on Foundations
  of Computer Science}, FOCS '13, pages 217--226, Washington, DC, USA, 2013.
  IEEE Computer Society.

\bibitem[DDS12a]{DaskalakisDS12a}
Constantinos Daskalakis, Ilias Diakonikolas, and Rocco~A. Servedio.
\newblock Learning k-modal distributions via testing.
\newblock In {\em Proceedings of the 23th Annual ACM-SIAM Symposium on Discrete
  Algorithms}, SODA '12, pages 1371--1385, Philadelphia, PA, USA, 2012. SIAM.

\bibitem[DDS12b]{DaskalakisDS12b}
Constantinos Daskalakis, Ilias Diakonikolas, and Rocco~A. Servedio.
\newblock Learning {P}oisson binomial distributions.
\newblock In {\em Proceedings of the 44th Annual ACM Symposium on the Theory of
  Computing}, STOC '12, pages 709--728, New York, NY, USA, 2012. ACM.

\bibitem[DHS15]{DiakonikolasHS15}
Ilias Diakonikolas, Moritz Hardt, and Ludwig Schmidt.
\newblock Differentially private learning of structured discrete distributions.
\newblock In {\em Advances in Neural Information Processing Systems 28}, NIPS
  '15, pages 2566--2574. Curran Associates, Inc., 2015.

\bibitem[{Dif}17]{AppleDP17}
{Differential Privacy Team, Apple}.
\newblock Learning with privacy at scale.
\newblock
  \url{https://machinelearning.apple.com/docs/learning-with-privacy-at-scale/appledifferentialprivacysystem.pdf},
  December 2017.

\bibitem[DJW13]{DuchiJW13}
John~C. Duchi, Michael~I. Jordan, and Martin~J. Wainwright.
\newblock Local privacy and statistical minimax rates.
\newblock In {\em Proceedings of the 54th Annual IEEE Symposium on Foundations
  of Computer Science}, FOCS '13, pages 429--438, Washington, DC, USA, 2013.
  IEEE Computer Society.

\bibitem[DK14]{DaskalakisK14}
Constantinos Daskalakis and Gautam Kamath.
\newblock Faster and sample near-optimal algorithms for proper learning
  mixtures of {G}aussians.
\newblock In {\em Proceedings of the 27th Annual Conference on Learning
  Theory}, COLT '14, pages 1183--1213, 2014.

\bibitem[DKK{\etalchar{+}}16]{DiakonikolasKKLMS16}
Ilias Diakonikolas, Gautam Kamath, Daniel~M. Kane, Jerry Li, Ankur Moitra, and
  Alistair Stewart.
\newblock Robust estimators in high dimensions without the computational
  intractability.
\newblock In {\em Proceedings of the 57th Annual IEEE Symposium on Foundations
  of Computer Science}, FOCS '16, pages 655--664, Washington, DC, USA, 2016.
  IEEE Computer Society.

\bibitem[DKS16a]{DiakonikolasKS16c}
Ilias Diakonikolas, Daniel~M. Kane, and Alistair Stewart.
\newblock The {F}ourier transform of {P}oisson multinomial distributions and
  its algorithmic applications.
\newblock In {\em Proceedings of the 48th Annual ACM Symposium on the Theory of
  Computing}, STOC '16, pages 1060--1073, New York, NY, USA, 2016. ACM.

\bibitem[DKS16b]{DiakonikolasKS16a}
Ilias Diakonikolas, Daniel~M. Kane, and Alistair Stewart.
\newblock Optimal learning via the {F}ourier transform for sums of independent
  integer random variables.
\newblock In {\em Proceedings of the 29th Annual Conference on Learning
  Theory}, COLT '16, pages 831--849, 2016.

\bibitem[DKS16c]{DiakonikolasKS16b}
Ilias Diakonikolas, Daniel~M. Kane, and Alistair Stewart.
\newblock Properly learning {P}oisson binomial distributions in almost
  polynomial time.
\newblock In {\em Proceedings of the 29th Annual Conference on Learning
  Theory}, COLT '16, pages 850--878, 2016.

\bibitem[DKS18]{DiakonikolasKS18b}
Ilias Diakonikolas, Daniel~M. Kane, and Alistair Stewart.
\newblock List-decodable robust mean estimation and learning mixtures of
  spherical {G}aussians.
\newblock In {\em Proceedings of the 50th Annual ACM Symposium on the Theory of
  Computing}, STOC '18, pages 1047--1060, New York, NY, USA, 2018. ACM.

\bibitem[DKT15]{DaskalakisKT15}
Constantinos Daskalakis, Gautam Kamath, and Christos Tzamos.
\newblock On the structure, covering, and learning of {P}oisson multinomial
  distributions.
\newblock In {\em Proceedings of the 56th Annual IEEE Symposium on Foundations
  of Computer Science}, FOCS '15, pages 1203--1217, Washington, DC, USA, 2015.
  IEEE Computer Society.

\bibitem[DL96]{DevroyeL96}
Luc Devroye and G\'abor Lugosi.
\newblock A universally acceptable smoothing factor for kernel density
  estimation.
\newblock {\em The Annals of Statistics}, 24(6):2499--2512, 1996.

\bibitem[DL97]{DevroyeL97}
Luc Devroye and G\'abor Lugosi.
\newblock Nonasymptotic universal smoothing factors, kernel complexity and
  {Y}atracos classes.
\newblock {\em The Annals of Statistics}, 25(6):2626--2637, 1997.

\bibitem[DL01]{DevroyeL01}
Luc Devroye and G\'abor Lugosi.
\newblock {\em Combinatorial methods in density estimation}.
\newblock Springer, 2001.

\bibitem[DL09]{DworkL09}
Cynthia Dwork and Jing Lei.
\newblock Differential privacy and robust statistics.
\newblock In {\em Proceedings of the 41st Annual ACM Symposium on the Theory of
  Computing}, STOC '09, pages 371--380, New York, NY, USA, 2009. ACM.

\bibitem[DLS{\etalchar{+}}17]{DajaniLSKRMGDGKKLSSVA17}
Aref~N. Dajani, Amy~D. Lauger, Phyllis~E. Singer, Daniel Kifer, Jerome~P.
  Reiter, Ashwin Machanavajjhala, Simson~L. Garfinkel, Scot~A. Dahl, Matthew
  Graham, Vishesh Karwa, Hang Kim, Philip Lelerc, Ian~M. Schmutte, William~N.
  Sexton, Lars Vilhuber, and John~M. Abowd.
\newblock The modernization of statistical disclosure limitation at the {U.S.}
  census bureau, 2017.
\newblock Presented at the September 2017 meeting of the Census Scientific
  Advisory Committee.

\bibitem[DLS18]{DeLS18}
Anindya De, Philip~M. Long, and Rocco~A. Servedio.
\newblock Learning sums of independent random variables with sparse collective
  support.
\newblock In {\em Proceedings of the 59th Annual IEEE Symposium on Foundations
  of Computer Science}, FOCS '18, pages 297--308, Washington, DC, USA, 2018.
  IEEE Computer Society.

\bibitem[DMNS06]{DworkMNS06}
Cynthia Dwork, Frank McSherry, Kobbi Nissim, and Adam Smith.
\newblock Calibrating noise to sensitivity in private data analysis.
\newblock In {\em Proceedings of the 3rd Conference on Theory of Cryptography},
  TCC '06, pages 265--284, Berlin, Heidelberg, 2006. Springer.

\bibitem[DMR18]{DevroyeMR18b}
Luc Devroye, Abbas Mehrabian, and Tommy Reddad.
\newblock The total variation distance between high-dimensional {G}aussians.
\newblock {\em arXiv preprint arXiv:1810.08693}, 2018.

\bibitem[DNP{\etalchar{+}}10]{DworkNPRY10}
Cynthia Dwork, Moni Naor, Toniann Pitassi, Guy~N. Rothblum, and Sergey
  Yekhanin.
\newblock Pan-private streaming algorithms.
\newblock In {\em Proceedings of the 1st Conference on Innovations in Computer
  Science}, ICS '10, pages 66--80, Beijing, China, 2010. Tsinghua University
  Press.

\bibitem[DP08]{DaskalakisP08}
Constantinos Daskalakis and Christos~H. Papadimitriou.
\newblock Discretized multinomial distributions and {N}ash equilibria in
  anonymous games.
\newblock In {\em Proceedings of the 49th Annual IEEE Symposium on Foundations
  of Computer Science}, FOCS '08, pages 25--34, Washington, DC, USA, 2008. IEEE
  Computer Society.

\bibitem[DP09]{DaskalakisP09}
Constantinos Daskalakis and Christos~H. Papadimitriou.
\newblock On oblivious {PTAS}'s for {N}ash equilibrium.
\newblock In {\em Proceedings of the 41st Annual ACM Symposium on the Theory of
  Computing}, STOC '09, pages 75--84, New York, NY, USA, 2009. ACM.

\bibitem[DP15a]{DaskalakisP15b}
Constantinos Daskalakis and Christos~H. Papadimitriou.
\newblock Approximate {N}ash equilibria in anonymous games.
\newblock {\em Journal of Economic Theory}, 156:207--245, 2015.

\bibitem[DP15b]{DaskalakisP15a}
Constantinos Daskalakis and Christos~H. Papadimitriou.
\newblock Sparse covers for sums of indicators.
\newblock {\em Probability Theory and Related Fields}, 162(3):679--705, 2015.

\bibitem[DR16]{DworkR16}
Cynthia Dwork and Guy~N. Rothblum.
\newblock Concentrated differential privacy.
\newblock {\em arXiv preprint arXiv:1603.01887}, 2016.

\bibitem[DR18]{DuchiR18}
John~C. Duchi and Feng Ruan.
\newblock The right complexity measure in locally private estimation: It is not
  the {F}isher information.
\newblock {\em arXiv preprint arXiv:1806.05756}, 2018.

\bibitem[DS41]{DuffinS41}
Richard~J. Duffin and Albert~C. Schaeffer.
\newblock A refinement of an inequality of the brothers {M}arkoff.
\newblock {\em Transactions of the American Mathematical Society},
  50(3):517--528, 1941.

\bibitem[DS00]{DasguptaS00}
Sanjoy Dasgupta and Leonard~J. Schulman.
\newblock A two-round variant of {EM} for {G}aussian mixtures.
\newblock In {\em Proceedings of the 16th Conference in Uncertainty in
  Artificial Intelligence}, UAI '00, pages 152--159. Morgan Kaufmann, 2000.

\bibitem[DTZ17]{DaskalakisTZ17}
Constantinos Daskalakis, Christos Tzamos, and Manolis Zampetakis.
\newblock Ten steps of {EM} suffice for mixtures of two {G}aussians.
\newblock In {\em Proceedings of the 30th Annual Conference on Learning
  Theory}, COLT '17, pages 704--710, 2017.

\bibitem[EFM{\etalchar{+}}19]{ErlingssonFMRTT19}
{\'U}lfar Erlingsson, Vitaly Feldman, Ilya Mironov, Ananth Raghunathan, Kunal
  Talwar, and Abhradeep Thakurta.
\newblock Amplification by shuffling: From local to central differential
  privacy via anonymity.
\newblock In {\em Proceedings of the 30th Annual ACM-SIAM Symposium on Discrete
  Algorithms}, SODA '19, pages 2468--2479, Philadelphia, PA, USA, 2019. SIAM.

\bibitem[EGS03]{EvfimievskiGS03}
Alexandre Evfimievski, Johannes Gehrke, and Ramakrishnan Srikant.
\newblock Limiting privacy breaches in privacy preserving data mining.
\newblock In {\em Proceedings of the 22nd ACM SIGMOD-SIGACT-SIGART Symposium on
  Principles of Database Systems}, PODS '03, pages 211--222, New York, NY, USA,
  2003. ACM.

\bibitem[EPK14]{ErlingssonPK14}
{\'U}lfar Erlingsson, Vasyl Pihur, and Aleksandra Korolova.
\newblock {RAPPOR}: Randomized aggregatable privacy-preserving ordinal
  response.
\newblock In {\em Proceedings of the 2014 ACM Conference on Computer and
  Communications Security}, CCS '14, pages 1054--1067, New York, NY, USA, 2014.
  ACM.

\bibitem[FOS06]{FeldmanOS06}
Jon Feldman, Ryan O'Donnell, and Rocco~A. Servedio.
\newblock {PAC} learning axis-aligned mixtures of {G}aussians with no
  separation assumption.
\newblock In {\em Proceedings of the 19th Annual Conference on Learning
  Theory}, COLT '06, pages 20--34, Berlin, Heidelberg, 2006. Springer.

\bibitem[FOS08]{FeldmanOS08}
Jon Feldman, Ryan O'Donnell, and Rocco~A. Servedio.
\newblock Learning mixtures of product distributions over discrete domains.
\newblock {\em SIAM Journal on Computing}, 37(5):1536--1564, 2008.

\bibitem[GHK15]{GeHK15}
Rong Ge, Qingqing Huang, and Sham~M. Kakade.
\newblock Learning mixtures of {G}aussians in high dimensions.
\newblock In {\em Proceedings of the 47th Annual ACM Symposium on the Theory of
  Computing}, STOC '15, pages 761--770, New York, NY, USA, 2015. ACM.

\bibitem[GKK{\etalchar{+}}20]{GopiKKNWZ20}
Sivakanth Gopi, Gautam Kamath, Janardhan Kulkarni, Aleksandar Nikolov,
  Zhiwei~Steven Wu, and Huanyu Zhang.
\newblock Locally private hypothesis selection.
\newblock In {\em Proceedings of the 33rd Annual Conference on Learning
  Theory}, COLT '20, pages 1785--1816, 2020.

\bibitem[GRS19]{GaboardiRS19}
Marco Gaboardi, Ryan Rogers, and Or~Sheffet.
\newblock Locally private confidence intervals: Z-test and tight confidence
  intervals.
\newblock In {\em Proceedings of the 22nd International Conference on
  Artificial Intelligence and Statistics}, AISTATS '19, pages 2545--2554. JMLR,
  Inc., 2019.

\bibitem[HK13]{HsuK13}
Daniel Hsu and Sham~M. Kakade.
\newblock Learning mixtures of spherical {G}aussians: Moment methods and
  spectral decompositions.
\newblock In {\em Proceedings of the 4th Conference on Innovations in
  Theoretical Computer Science}, ITCS '13, pages 11--20, New York, NY, USA,
  2013. ACM.

\bibitem[HL18]{HopkinsL18}
Samuel~B. Hopkins and Jerry Li.
\newblock Mixture models, robustness, and sum of squares proofs.
\newblock In {\em Proceedings of the 50th Annual ACM Symposium on the Theory of
  Computing}, STOC '18, pages 1021--1034, New York, NY, USA, 2018. ACM.

\bibitem[HP15]{HardtP15}
Moritz Hardt and Eric Price.
\newblock Sharp bounds for learning a mixture of two {G}aussians.
\newblock In {\em Proceedings of the 47th Annual ACM Symposium on the Theory of
  Computing}, STOC '15, pages 753--760, New York, NY, USA, 2015. ACM.

\bibitem[HT10]{HardtT10}
Moritz Hardt and Kunal Talwar.
\newblock On the geometry of differential privacy.
\newblock In {\em Proceedings of the 42nd Annual ACM Symposium on the Theory of
  Computing}, STOC '10, pages 705--714, New York, NY, USA, 2010. ACM.

\bibitem[JKMW19]{JosephKMW19}
Matthew Joseph, Janardhan Kulkarni, Jieming Mao, and Zhiwei~Steven Wu.
\newblock Locally private {G}aussian estimation.
\newblock In {\em Advances in Neural Information Processing Systems 32},
  NeurIPS '19, pages 2980--2989. Curran Associates, Inc., 2019.

\bibitem[KBR16]{KairouzBR16}
Peter Kairouz, Keith Bonawitz, and Daniel Ramage.
\newblock Discrete distribution estimation under local privacy.
\newblock In {\em Proceedings of the 33rd International Conference on Machine
  Learning}, ICML '16, pages 2436--2444. JMLR, Inc., 2016.

\bibitem[KK10]{KumarK10}
Amit Kumar and Ravindran Kannan.
\newblock Clustering with spectral norm and the k-means algorithm.
\newblock In {\em Proceedings of the 51st Annual IEEE Symposium on Foundations
  of Computer Science}, FOCS '10, pages 299--308, Washington, DC, USA, 2010.
  IEEE Computer Society.

\bibitem[KLN{\etalchar{+}}11]{KasiviswanathanLNRS11}
Shiva~Prasad Kasiviswanathan, Homin~K. Lee, Kobbi Nissim, Sofya Raskhodnikova,
  and Adam Smith.
\newblock What can we learn privately?
\newblock {\em SIAM Journal on Computing}, 40(3):793--826, 2011.

\bibitem[KLSU19]{KamathLSU19}
Gautam Kamath, Jerry Li, Vikrant Singhal, and Jonathan Ullman.
\newblock Privately learning high-dimensional distributions.
\newblock In {\em Proceedings of the 32nd Annual Conference on Learning
  Theory}, COLT '19, pages 1853--1902, 2019.

\bibitem[KMR{\etalchar{+}}94]{KearnsMRRSS94}
Michael Kearns, Yishay Mansour, Dana Ron, Ronitt Rubinfeld, Robert~E. Schapire,
  and Linda Sellie.
\newblock On the learnability of discrete distributions.
\newblock In {\em Proceedings of the 26th Annual ACM Symposium on the Theory of
  Computing}, STOC '94, pages 273--282, New York, NY, USA, 1994. ACM.

\bibitem[KMV10]{KalaiMV10}
Adam~Tauman Kalai, Ankur Moitra, and Gregory Valiant.
\newblock Efficiently learning mixtures of two {G}aussians.
\newblock In {\em Proceedings of the 42nd Annual ACM Symposium on the Theory of
  Computing}, STOC '10, pages 553--562, New York, NY, USA, 2010. ACM.

\bibitem[KSS18]{KothariSS18}
Pravesh Kothari, Jacob Steinhardt, and David Steurer.
\newblock Robust moment estimation and improved clustering via sum of squares.
\newblock In {\em Proceedings of the 50th Annual ACM Symposium on the Theory of
  Computing}, STOC '18, pages 1035--1046, New York, NY, USA, 2018. ACM.

\bibitem[KSSU19]{KamathSSU19}
Gautam Kamath, Or~Sheffet, Vikrant Singhal, and Jonathan Ullman.
\newblock Differentially private algorithms for learning mixtures of separated
  {G}aussians.
\newblock In {\em Advances in Neural Information Processing Systems 32},
  NeurIPS '19, pages 168--180. Curran Associates, Inc., 2019.

\bibitem[KSU20]{KamathSU20}
Gautam Kamath, Vikrant Singhal, and Jonathan Ullman.
\newblock Private mean estimation of heavy-tailed distributions.
\newblock In {\em Proceedings of the 33rd Annual Conference on Learning
  Theory}, COLT '20, pages 2204--2235, 2020.

\bibitem[KU20]{KamathU20}
Gautam Kamath and Jonathan Ullman.
\newblock A primer on private statistics.
\newblock {\em arXiv preprint arXiv:2005.00010}, 2020.

\bibitem[KV18]{KarwaV18}
Vishesh Karwa and Salil Vadhan.
\newblock Finite sample differentially private confidence intervals.
\newblock In {\em Proceedings of the 9th Conference on Innovations in
  Theoretical Computer Science}, ITCS '18, pages 44:1--44:9, Dagstuhl, Germany,
  2018. Schloss Dagstuhl--Leibniz-Zentrum fuer Informatik.

\bibitem[LS17]{LiS17}
Jerry Li and Ludwig Schmidt.
\newblock Robust proper learning for mixtures of {G}aussians via systems of
  polynomial inequalities.
\newblock In {\em Proceedings of the 30th Annual Conference on Learning
  Theory}, COLT '17, pages 1302--1382, 2017.

\bibitem[LSY{\etalchar{+}}20]{LiuSYKR20}
Yuhan Liu, Ananda~Theertha Suresh, Felix Yu, Sanjiv Kumar, and Michael Riley.
\newblock Learning discrete distributions: User vs item-level privacy.
\newblock In {\em Advances in Neural Information Processing Systems 33},
  NeurIPS '20. Curran Associates, Inc., 2020.

\bibitem[Mir17]{Mironov17}
Ilya Mironov.
\newblock R{\'e}nyi differential privacy.
\newblock In {\em Proceedings of the 30th IEEE Computer Security Foundations
  Symposium}, CSF '17, pages 263--275, Washington, DC, USA, 2017. IEEE Computer
  Society.

\bibitem[MS08]{MahalanabisS08}
Satyaki Mahalanabis and Daniel Stefankovic.
\newblock Density estimation in linear time.
\newblock In {\em Proceedings of the 21st Annual Conference on Learning
  Theory}, COLT '08, pages 503--512, 2008.

\bibitem[MT07]{McSherryT07}
Frank McSherry and Kunal Talwar.
\newblock Mechanism design via differential privacy.
\newblock In {\em Proceedings of the 48th Annual IEEE Symposium on Foundations
  of Computer Science}, FOCS '07, pages 94--103, Washington, DC, USA, 2007.
  IEEE Computer Society.

\bibitem[MV10]{MoitraV10}
Ankur Moitra and Gregory Valiant.
\newblock Settling the polynomial learnability of mixtures of {G}aussians.
\newblock In {\em Proceedings of the 51st Annual IEEE Symposium on Foundations
  of Computer Science}, FOCS '10, pages 93--102, Washington, DC, USA, 2010.
  IEEE Computer Society.

\bibitem[NRS07]{NissimRS07}
Kobbi Nissim, Sofya Raskhodnikova, and Adam Smith.
\newblock Smooth sensitivity and sampling in private data analysis.
\newblock In {\em Proceedings of the 39th Annual ACM Symposium on the Theory of
  Computing}, STOC '07, pages 75--84, New York, NY, USA, 2007. ACM.

\bibitem[RV17]{RegevV17}
Oded Regev and Aravindan Vijayaraghavan.
\newblock On learning mixtures of well-separated {G}aussians.
\newblock In {\em Proceedings of the 58th Annual IEEE Symposium on Foundations
  of Computer Science}, FOCS '17, pages 85--96, Washington, DC, USA, 2017. IEEE
  Computer Society.

\bibitem[Smi11]{Smith11}
Adam Smith.
\newblock Privacy-preserving statistical estimation with optimal convergence
  rates.
\newblock In {\em Proceedings of the 43rd Annual ACM Symposium on the Theory of
  Computing}, STOC '11, pages 813--822, New York, NY, USA, 2011. ACM.

\bibitem[SOAJ14]{SureshOAJ14}
Ananda~Theertha Suresh, Alon Orlitsky, Jayadev Acharya, and Ashkan Jafarpour.
\newblock Near-optimal-sample estimators for spherical {G}aussian mixtures.
\newblock In {\em Advances in Neural Information Processing Systems 27}, NIPS
  '14, pages 1395--1403. Curran Associates, Inc., 2014.

\bibitem[SU15]{SteinkeU15}
Thomas Steinke and Jonathan Ullman.
\newblock Interactive fingerprinting codes and the hardness of preventing false
  discovery.
\newblock In {\em Proceedings of the 28th Annual Conference on Learning
  Theory}, COLT '15, pages 1588--1628, 2015.

\bibitem[SU17]{SteinkeU17a}
Thomas Steinke and Jonathan Ullman.
\newblock Between pure and approximate differential privacy.
\newblock {\em The Journal of Privacy and Confidentiality}, 7(2):3--22, 2017.

\bibitem[Tal94]{Talagrand94}
Michel Talagrand.
\newblock Sharper bounds for {G}aussian and empirical processes.
\newblock {\em The Annals of Probability}, 22(1):28--76, 1994.

\bibitem[TS13]{ThakurtaS13}
Abhradeep~Guha Thakurta and Adam Smith.
\newblock Differentially private feature selection via stability arguments, and
  the robustness of the lasso.
\newblock In {\em Proceedings of the 26th Annual Conference on Learning
  Theory}, COLT '13, pages 819--850, 2013.

\bibitem[Vad17]{Vadhan17}
Salil Vadhan.
\newblock The complexity of differential privacy.
\newblock In Yehuda Lindell, editor, {\em Tutorials on the Foundations of
  Cryptography: Dedicated to Oded Goldreich}, chapter~7, pages 347--450.
  Springer International Publishing AG, Cham, Switzerland, 2017.

\bibitem[VC74]{VapnikC74}
Vladimir Vapnik and Alexey Chervonenkis.
\newblock {\em Theory of Pattern Recognition}.
\newblock Nauka, 1974.

\bibitem[VV10]{ValiantV10a}
Gregory Valiant and Paul Valiant.
\newblock A {CLT} and tight lower bounds for estimating entropy.
\newblock {\em Electronic Colloquium on Computational Complexity (ECCC)},
  17(179), 2010.

\bibitem[VW02]{VempalaW02}
Santosh Vempala and Grant Wang.
\newblock A spectral algorithm for learning mixtures of distributions.
\newblock In {\em Proceedings of the 43rd Annual IEEE Symposium on Foundations
  of Computer Science}, FOCS '02, pages 113--123, Washington, DC, USA, 2002.
  IEEE Computer Society.

\bibitem[War65]{Warner65}
Stanley~L. Warner.
\newblock Randomized response: A survey technique for eliminating evasive
  answer bias.
\newblock {\em Journal of the American Statistical Association},
  60(309):63--69, 1965.

\bibitem[WHW{\etalchar{+}}16]{WangHWNXYLQ16}
Shaowei Wang, Liusheng Huang, Pengzhan Wang, Yiwen Nie, Hongli Xu, Wei Yang,
  Xiang-Yang Li, and Chunming Qiao.
\newblock Mutual information optimally local private discrete distribution
  estimation.
\newblock {\em arXiv preprint arXiv:1607.08025}, 2016.

\bibitem[XHM16]{XuHM16}
Ji~Xu, Daniel~J. Hsu, and Arian Maleki.
\newblock Global analysis of expectation maximization for mixtures of two
  {G}aussians.
\newblock In {\em Advances in Neural Information Processing Systems 29}, NIPS
  '16, pages 2676--2684. Curran Associates, Inc., 2016.

\bibitem[Yat85]{Yatracos85}
Yannis~G. Yatracos.
\newblock Rates of convergence of minimum distance estimators and
  {K}olmogorov's entropy.
\newblock {\em The Annals of Statistics}, 13(2):768--774, 1985.

\bibitem[YB18]{YeB18}
Min Ye and Alexander Barg.
\newblock Optimal schemes for discrete distribution estimation under locally
  differential privacy.
\newblock {\em IEEE Transactions on Information Theory}, 64(8):5662--5676,
  2018.

\end{thebibliography}

\end{document}